\documentclass[]{imsart}

\RequirePackage{amsthm,amsmath,amsfonts,amssymb}
\RequirePackage[numbers]{natbib}
\RequirePackage[colorlinks,citecolor=blue,urlcolor=blue]{hyperref}
\RequirePackage{graphicx}


\startlocaldefs

\newtheorem{theorem}{Theorem}[section]
\newtheorem{lemma}[theorem]{Lemma}

\newtheorem{prop}[theorem]{Proposition}
\newtheorem{corollary}[theorem]{Corollary}

\theoremstyle{remark}

\newtheorem{remark}[theorem]{Remark}


\usepackage{amsmath}
\usepackage{amssymb}
\usepackage{amsfonts}
\usepackage{amsthm}
\usepackage{bm}\label{key}
\usepackage{caption}
\usepackage{subcaption}
\usepackage{dsfont}
\usepackage{enumitem}
\usepackage{rotating}

\usepackage{accents}

\usepackage{algorithm}
\usepackage{algpseudocode}

\usepackage[]{natbib}
\usepackage{multirow}

\usepackage{tikz}
\usepackage{pgfplots}

\newcommand{\real}[1][]{\ensuremath{\mathbb{R}^{#1}}}
\newcommand{\naturalnumber}{\ensuremath{\mathbb{N}}}
\newcommand{\family}[1]{\ensuremath{\mathcal{#1}}}

\newcommand{\rbracket}[1]{\ensuremath{\left( #1\right)}}
\newcommand{\sbracket}[1]{\ensuremath{\left\lbrack #1\right\rbrack}}
\newcommand{\cbracket}[1]{\ensuremath{\left\lbrace #1\right\rbrace}}
\newcommand{\set}[2]{\ensuremath{\cbracket{ #1 :\ #2}}}
\newcommand{\map}[3]{\ensuremath{#1 : #2 \rightarrow #3}}
\newcommand{\mapping}[2]{\ensuremath{#1 \longmapsto #2}}
\newcommand{\function}[4]{\ensuremath{#1_{#2}^{#3}\left(#4\right)}}
\newcommand{\indicator}[1]{\ensuremath{{\mathds{1}}\cbracket{#1}}}
\newcommand{\mat}[1]{\ensuremath{\bm{\mathrm{#1}}}}
\newcommand{\tr}[1]{\ensuremath{{#1}^{T}}}
\newcommand{\norm}[2]{\ensuremath{\left\|{#1}\right\|_{#2}}}

\newcommand{\virtualref}[2]{\expandafter\trim@spaces\expandafter{\IfSubStr{#1}{,}{\newcount\numofelem\StrCount{#1}{,}[\numofelem]\newcount\counter\counter=0{{#2}s~}\@for\tmp:=#1\do{\ref{\tmp}\ifnum\counter=\numofelem\else\ifnum\counter=\numexpr(\numofelem-1)\relax{~and~}\else{,~}\fi\fi\advance\counter by 1}}{\hyperref[{#1}]{{#2}~\ref{#1}}}}}

\newcommand{\fref}[1]{Figure~\hyperref[{#1}]{\ref{#1}}}
\newcommand{\tref}[1]{Table~\hyperref[{#1}]{\ref{#1}}}
\newcommand{\eref}[1]{\hyperref[{#1}]{(\ref{#1})}}
\newcommand{\cref}[1]{Chapter~\hyperref[{#1}]{\ref{#1}}}
\newcommand{\sref}[1]{Section~\hyperref[{#1}]{\ref{#1}}}
\newcommand{\aref}[1]{Appendix~\hyperref[{#1}]{\ref{#1}}}
\newcommand{\thref}[1]{Theorem~\hyperref[{#1}]{\ref{#1}}}
\newcommand{\leref}[1]{Lemma~\hyperref[{#1}]{\ref{#1}}}
\newcommand{\algoref}[1]{Algorithm~\hyperref[{#1}]{\ref{#1}}}
\newcommand{\pref}[1]{Proposition~\hyperref[{#1}]{\ref{#1}}}
\newcommand{\asref}[1]{Assumption~\hyperref[{#1}]{\ref{#1}}}

\DeclareMathOperator*{\argmax}{arg\,max}
\DeclareMathOperator*{\argmin}{arg\,min}
\DeclareMathOperator*{\maximize}{maximize}
\DeclareMathOperator*{\minimize}{minimize}

\newcommand{\limit}[2]{\ensuremath{\lim\limits_{#1 \rightarrow #2}}}
\newcommand{\converge}[1][]{\ensuremath{\xrightarrow{#1}}}

\newcommand{\survival}[2][S]{\ensuremath{#1\rbracket{#2|\mat{x}, \mat{z}}}}
\newcommand{\uncuredsurvival}[4][S]{\ensuremath{#1_{u}^{#4}\rbracket{#3|#2}}}
\newcommand{\uncuredrate}[1][\mat{x}]{\ensuremath{p\rbracket{#1}}}

\newcommand{\link}[3][\tr{\mat{\gamma}}\mat{x}]{\ensuremath{\varphi_{#2}^{#3}\rbracket{#1}}}
\newcommand{\basecumhazard}[2][0]{\ensuremath{\Lambda_{#1}\rbracket{#2}}}
\newcommand{\basehazard}[2][0]{\ensuremath{\lambda_{#1}\rbracket{#2}}}

\newcommand{\prob}[2][\rbracket]{\ensuremath{\mathbb{P}#1{#2}}}
\newcommand{\expect}[2][\rbracket]{\ensuremath{\mathbb{E}#1{#2}}}
\newcommand{\empirical}[1]{\prob[_{n}\rbracket]{#1}}

\newcommand{\bracketnum}[3][\zeta]{\ensuremath{N_{B}\rbracket{{#1}, {#2}, {#3}}}}
\newcommand{\bracketentropy}[3][\zeta]{\ensuremath{H_{B}\rbracket{{#1}, {#2}, {#3}}}}

\newcommand{\ubar}[1]{\ensuremath{\underaccent{\bar}{#1}}}
\newcommand{\dd}{\mathrm{d}}

\endlocaldefs

\begin{document}
	
	\begin{frontmatter}
		\title{Single-index mixture cure model \\under monotonicity constraints}
		\runtitle{Monotone single-index cure model}

		\begin{aug}
			\author{\fnms{Eni} \snm{Musta}\ead[label=e1]{e.musta@uva.nl}}
			\and
			\author{\fnms{Tsz Pang} \snm{Yuen}\ead[label=e2]{t.p.yuen@uva.nl}}
			
			\address{Korteweg-de Vries Institute for Mathematics, University of Amsterdam, Netherlands\\
				\printead{e1,e2}
			}
			\runauthor{E. Musta and T. P. Yuen}
		\end{aug}
		
		\begin{abstract}
			We consider survival data in the presence of a cure fraction, meaning
			that some subjects will never experience the event of interest. We assume a mixture cure model consisting of two sub-models: one for the probability of being uncured (incidence) and one for the survival of  the uncured subjects (latency). 
			Various approaches, ranging from parametric to nonparametric, have been used to model the effect of covariates on the incidence, with the logistic model being the most common one. We propose a monotone single-index model for the incidence and introduce a new estimation method that is based on the profile maximum likelihood approach and techniques from isotonic regression. The monotone single-index structure relaxes the parametric logistic assumption while maintaining interpretability of the regression coefficients. We investigate the consistency of the proposed estimator and show through a simulation study that, when the monotonicity assumption is satisfied,  it performs better compared to the non-constrained single-index/Cox mixture cure model. To illustrate its practical use, we use the new method to study melanoma cancer survival data. 
		\end{abstract}
		
		\begin{keyword}
			\kwd{survival analysis}
			\kwd{mixture cure model}
			\kwd{single-index model}
			\kwd{isotonic estimation}
			\kwd{kernel smoothing}
		\end{keyword}
	\end{frontmatter}

	\section{Introduction}
	Modelling time-to-event data in the presence of subjects that will never experience the event of interest has gained popularity over the recent decades. For instance, the advancement of cancer treatments has led to a larger fraction of patients being cured of their diseases \citep{LB2019}. Another example can be found in fertility studies \citep{VVBHZM2012}, where one is interested in the time to pregnancy while there are infertile couples for whom natural conception is impossible. Cure models have also been utilized in credit scoring to model the time to default of a loan applicant and default does not occur for the majority of debtors \citep{DCB2017}. In all these scenarios, the subjects that are immune to the event of interest are referred to as ‘cured’ (non susceptible). There are two types of cure models: mixture cure models and promotion time models. We refer the reader to \cite{AK2018} and \cite{PY2022} for  a comprehensive review of these models.
	
	Mixture cure models assume that the population is a mixture of cured and susceptible subjects and consist of two sub-models: one for the uncured probability (incidence) and one for the conditional survival function of the susceptibles (latency).  Initially, fully parametric models with logistic regression assumption for the incidence and different parametric distributions for the latency were proposed \citep{F1977,F1982}. Later on, extensions to semi-parametric models for the latency, such as the Cox proportional hazard (PH) model \citep{PD2000,ST2000} and the accelerated failure time model \citep{LT2002,ZP2007}, were introduced. For the incidence, a nonparametric estimator was developed in \cite{XP2014} based on the Beran estimator for the conditional survival function. However, such method is problematic for multivariate covariates since it requires multi-dimensional smoothing. To circumvent the curse-of-dimensionality, a single-index model for the incidence was introduced in \cite{AKL2019}, linking a linear predictor (index) to the incidence probability via an unspecified link function. The index achieves dimension reduction and alleviates the dimensionality issues when estimating the link function nonparametrically. In certain contexts, the link function is expected to be monotone, meaning that the cure probability increases/decreases as the risk score of an individual, given by the index, increases. The widely used logistic model for the incidence is in particular a monotone single-index model. In such cases, it is more appropriate to estimate the link function under monotonicity constraints, while the estimate proposed in \cite{AKL2019} is not guaranteed to be monotone. Another advantage of a monotone link function is interpretability since the sign of the coefficients of the index tell us whether a given covariate increases or decreases the cure chances. This motivates us to investigate estimation of a single-index model for the incidence under monotonicity constraints. We focus mainly on the incidence component and assume a Cox PH model for the latency.
	
	Single-index models have been thriving because of their flexibility over linear and parametric models, while avoiding dimensionality problems of general nonparametric models. The monotonicity of the link function appears in numerous applications leading to the popularity of generalized linear models. Therefore, it is appealing to impose monotonicity constraint on the link function, which leads to monotone single-index models and has recently become an active research area \cite{balabdaoui2019score,groeneboom2018current,BDJ2019,balabdaoui2021profile}. It is also worth mentioning that the binary choice model in econometrics and the current status linear regression model \citep{groeneboom2018current} are special cases of the monotone single-index model. More in general, there has been a growing interest in  statistical inference under shape-constraints, such as monotonicity, convexity, log-concavity, etc., which arise naturally in a wide range of applications \cite{GJ2014}. One advantage of such methods is that they allow for nonparametric estimation without using tuning parameters. However, a combination of smoothing and shape-constrained estimation often leads to better finite sample performance \cite{lopuhaa2017smooth,lopuhaa2018smoothed}.
	
	In this paper we introduce an estimation method for the monotone single-index mixture cure model that is based on the profile maximum likelihood principle and techniques from isotonic regression in combination with kernel smoothing. Despite the fact that the monotone single-index model has already been studied in the literature, its use within the mixture cure model has some unique features that make the problem more challenging (see Section 3 for a more detailed discussion). 
	First, in contrast to the standard monotone single-index models, including the current status linear regression model, where the response is directly observed, the cure status in mixture cure model is unknown for the censored subjects. As a result, the likelihood has a more complicated expression and iterative procedures such as the EM algorithm are required to solve the optimization problem. Secondly, apart from the coefficients of the index and the link function, our model contains additional parameters (finite and infinite dimensional) because of the extra latency component. Challenges also arise when studying the theoretical properties of the estimators, which are established in a less straightforward manner as compared to the monotone single-index model or the current status model, see Section 4 for a more detailed discussion. We study consistency of the proposed estimator, which to our best knowledge has not been investigated even for the single-index mixture cure model without monotonicity constraints \cite{AKL2019}. We illustrate through simulations that imposing monotonicity improves the behavior of the estimator and makes it more stable with respect to the choice of the bandwidth compared to the smooth non-monotone estimator. In particular, one does not need to use time consuming bandwidth selection procedures since a simple bandwidth choice performs reasonably well. In addition, we observe that our decision  to incorporate a smoothing step to the isotonic estimation of the link function is motivated by its improved practical performance for finite sample sizes.
	
	The paper is organized as follows. Section 2 describes the monotone single-index mixture cure model and the conditions for model identification. Section 3 introduces the estimation procedure, while Section 4 focuses on establishing consistency of the estimator. The finite sample properties of the proposed method are investigated through a simulation study and the results are reported in Section 5. Finally, an illustration of the practical use through a study of a medical dataset of melanoma cancer patients is provided in Section 6. The proofs  and additional simulation results can be found in the Appendix. Software in the form of R code is available on the GitHub repository \url{https://github.com/tp-yuen/msic}.

	\section{Model description \label{sec:model}}
	Let $T$ be a nonnegative random variable denoting the survival time, i.e time until occurrence of an event of interest, which can be equal to infinity indicating the possibility of cure. Under the assumption that the survival time is subject to random right censoring, we observe the follow-up time $Y = \min{\rbracket{T, C}}$ and the censoring indicator $\Delta = \indicator{T \leq C}$, where $C$ is the censoring time. {Since the duration of the studies is in practice limited, we assume that $C$ has bounded support.} As a result of censoring, the cure status $B = \indicator{T < \infty}$ is a latent variable and the cured subjects cannot be distinguished from the censored uncured ones. Since the cure probability and the survival time of the uncured do not necessarily depend on the same predictors, we use two sets of covariates $\mat{X}\in\real[d]$ and $\mat{Z}\in\real[q]$ that can possibly be the same or partially/completely different. {We assume that $C$ and $T$ are conditionally independent given the covariates $(\mat{X},\mat{Z})$, which is a rather standard assumption in survival analysis. }%
	In the mixture cure model the survival function is given by
	\begin{equation}
		\survival{t} = \prob{T > t | \mat{X} = \mat{x}, \mat{Z} = \mat{z}}=1 - \uncuredrate + \uncuredrate\uncuredsurvival{\mat{z}}{t}{},
	\end{equation}
	where $\uncuredrate = \prob{B = 1 | \mat{X} = \mat{x}}$ is the conditional uncure probability (incidence) and $\uncuredsurvival{\mat{z}}{t}{}$ is the conditional survival function for the uncured (latency). Note that $S_u$ is a proper survival function, while $\lim_{t\to\infty}\survival{t} = 1 - \uncuredrate$. In terms of distribution functions we have  $F(t|\mat{x},\mat{z}) = \uncuredrate F_u(t|\mat{z})$.
	We consider a monotone single-index model for the incidence component, that is 
	\begin{equation}
		\label{eq:incidence}
		\uncuredrate = \link[\tr{\mat{\gamma}}_{0}\mat{x}]{0}{},
	\end{equation}
	for some unknown regression coefficient $\mat{\gamma}_{0}\in\real[d]$ and an unknown link function $\varphi_{0}$ belonging to the set
	{$\family{M}:= \set{\map{\varphi}{\real}{\rbracket{0,1}}}{\varphi~\text{is monotone non-decreasing}}$.} For the latency we assume a Cox proportional hazard model, i.e.  
	\begin{equation}
		\label{eq:uncured_survival}
		\uncuredsurvival{\mat{z}}{t}{} =\prob{T > t | B=1,\mat{Z} = \mat{z}}=\exp\sbracket{-\Lambda_{0}\rbracket{t}\exp\rbracket{\tr{\mat{\beta}}_{0}\mat{z}}},
	\end{equation} 
	where $\Lambda_0$ denotes the baseline cumulative hazard function and $\mat{\beta}_{0}\in\real[q]$ is a vector of regression parameters. Both $\Lambda_{0}$ and $\beta_0$ are left unspecified.%
	
	A crucial issue for both the single-index and the mixture cure model is identifiability meaning that \[
	\begin{split}
	&l(Y,\Delta,X,Z;\varphi,\mat\gamma,\mat\beta,\Lambda)=	l(Y,\Delta,X,Z;\tilde\varphi,\tilde{\mat\gamma},\tilde{\mat\beta},\tilde\Lambda)\text{ a.s.}\\
	& \implies \varphi=\tilde{\varphi},\quad	\mat\gamma=\tilde{\mat{\gamma}},\quad\mat\beta=\tilde{\mat{\beta}},\quad\Lambda=\tilde\Lambda,
	\end{split}
	\]  where $	l(Y,\Delta,X,Z;\varphi,\mat\gamma,\mat\beta,\Lambda)$ denotes the log-likelihood of the model given the parameters. 
	For the single-index model, the parameters $(\varphi_0,\mat{\gamma}_0)$ are not identifiable without further restrictions since for any $a\in\real$ we can define $\tilde{\varphi}(t)=\varphi_0(t/a)$, $\tilde{\mat\gamma}\mat=a\mat{\gamma}_0$ and have $\tilde\varphi(\tilde{\mat\gamma}^T\mat{x})=\varphi_0(\mat\gamma_0^T\mat{x})$. For the mixture cure model, the parameters are not identifiable if the follow-up of the study does not contain the support of the event times since it is not possible to distinguish the event of being cured from the one of being uncured with survival time larger than the follow-up of the study. To guarantee identifiability of the model  we require the following set of assumptions, where $\family{X}$ denotes the support of the covariate $\mat{X}$ and $\family{I}_{\mat{\gamma}} := \set{\tr{\mat{\gamma}}\mat{x}}{\mat{x}\in\family{X}}$ denotes the support of the index $\tr{\mat{\gamma}}\mat{X}$. 
	\begin{enumerate}[label=(A\arabic*), series=assumptions]
		\item	\label{enum:si_assumptions}
		\begin{enumerate}[label=(\roman*)]
			\item\label{enum:link_assumptions}	$\varphi_{0}$ is differentiable and not constant on $\family{I}_{\mat{\gamma}_{0}}$.
			\item The parameter $\mat{\gamma}_{0}$ does not contain an intercept and it belongs to the $d-1$ dimensional unit sphere $\family{S}_{d-1} := \{\mat{\gamma} \in \real[d]\colon{\norm{\mat{\gamma}}{2} = 1}\}$ with respect to the Euclidean norm $\norm{\cdot}{2}$. 
			\item	The covariate $\mat{X}$ contains at least one continuous variable and the continuous components have a joint probability density function.
			\item	$\family{X}$ is not contained in a proper linear subspace of $\real[d]$.
			\item	$\family{I}_{\mat{\gamma}_{0}}$ is not divided into disjoint intervals by different values of the discrete components.
		\end{enumerate}
		\item	
		\begin{enumerate}[label=(\roman*)]
			\item	 $\mat{\beta}_{0}$ does not have an intercept term.
			\item	The covariance matrix of $\mat{Z}$ has full rank.
		\end{enumerate}
		\item\label{enum:latency_assumptions}	
		\begin{enumerate}[label=(\roman*)]
			\item\label{enum:cure_threshold}	There exists a cure threshold $\tau_{0} < \infty$ such that $T > \tau_{0} \iff T = \infty$. Moreover $\prob{C > \tau_{0}|\mat{X},\mat{Z}} > 0$ for almost all $\mat{X}$ and $\mat{Z}$.
			\item	The incidence $\uncuredrate$ in \eqref{eq:incidence} satisfies $0 < \uncuredrate < 1$ for all $\mat{x} \in \family{X}$.
		\end{enumerate}
	\end{enumerate}
	Assumptions \ref*{enum:si_assumptions}-\ref*{enum:latency_assumptions} are almost identical to the ones in \cite{AKL2019}. Note that we do not need the assumption that the sign of the first entry of $\mat{\gamma}_{0}$ is fixed because we are fixing the direction of monotonicity for the link function. 
	Assumption \ref*{enum:si_assumptions} entails the identifiability of the monotone single-index model in \eqref{eq:incidence} (Theorem 2.1 in \cite{H2009}). Given that the incidence in \eqref{eq:incidence} is identifiable, using the same argument as in the proof of Proposition~1 of \cite{AKL2019}, it follows that the monotone-single-index/Cox mixture cure model is identifiable. 
	
	\section{Estimation method \label{sec:model_est}}
	Assume that we have i.i.d. realizations $\rbracket{y_{i},\delta_{i},\mat{x}_{i},\mat{z}_{i}}$, $i=1,\cdots,n$ of $\rbracket{Y, \Delta, \mat{X}, \mat{Z}}$. The observed likelihood function of the mixture cure model is given by
	\begin{align}
		\begin{split}
			\label{eq:mcm_likelihood}
			L_{n}\rbracket{\mat{\gamma}, \mat{\beta}, \Lambda, \varphi} 
			= \prod_{i=1}^{n}
			&\sbracket{
				\link[\tr{\mat{\gamma}}\mat{x}_{i}]{}{}
				\basehazard[]{y_{i}}e^{\tr{\mat{\beta}}\mat{z}_{i}}
				e^{
					-\basecumhazard[]{y_{i}}\exp\rbracket{\tr{\mat{\beta}}\mat{z}_{i}}
				}
			}^{\delta_{i}}\\%
			&\times
			\sbracket{
				1 - \link[\tr{\mat{\gamma}}\mat{x}_{i}]{}{} + \link[\tr{\mat{\gamma}}\mat{x}_{i}]{}{}
				e^{
					-\basecumhazard[]{y_{i}}\exp\rbracket{\tr{\mat{\beta}}\mat{z}_{i}}
				}
			}^{1-\delta_{i}}.
		\end{split}
	\end{align}
	When the link function is assumed to be known, the parameters $(\mat\gamma,\mat\beta,\Lambda)$ are estimated via the maximum likelihood principle. Here, we treat $\varphi$ as a nuisance (infinite dimensional) parameter and, for any fixed $(\mat\gamma,\mat\beta,\Lambda)$, we construct a smooth monotone estimator of $\varphi$. Finally, we consider a new likelihood criteria with the plug-in estimator of $\varphi$ and apply the maximum likelihood method.
	Hence, the estimation procedure consists of the following three steps:
	\begin{enumerate}
		\item	
		For fixed $\mat{\theta} = \rbracket{\mat{\gamma}, \mat{\beta}, \Lambda}$, we estimate the link by
		\begin{equation}
			\label{eq:monotone_link_mle}
			\hat{\varphi}_{n, \mat{\theta}} 
			=
			\argmax_{\varphi \in \family{M}_{\epsilon^{\prime}}}
			\function{L}{n}{}{\mat{\gamma},\mat{\beta},\Lambda,\varphi},
		\end{equation}
		where $\family{M}_{\epsilon^\prime} = \set{\map{\varphi}{\real}{\sbracket{\epsilon^\prime,1 - \epsilon^\prime}}}{\varphi~\text{is monotone non-decreasing}}$ and $\epsilon^{\prime} > 0$ is a fixed small constant for a truncation on the uncured probability. {See Remark \ref{re:truncation}  below for a discussion on this truncation and the choice of $\epsilon'$.} The estimator  $\hat{\varphi}_{n,\mat{\theta}} $ is computed using the EM algorithm and techniques from isotonic estimation as explained in Subsection \ref{seq:computation} below. The maximizer is not unique but it is uniquely defined at the points $\mat\gamma^T\mat{x}_i$, $i=1,\dots,n$. We consider $\hat{\varphi}_{n,\mat{\theta}} $ to be a left-continuous step function that extends  constantly to the entire real line. 
		\item A kernel smoothed version of $\hat{\varphi}_{n,\mat{\theta}}$ is defined by
		\begin{equation}
			\label{eq:smooth_isotonic_estimator}
			\function{\hat{\varphi}}{n,\mat{\theta}}{s}{u}
			=
			\int_{u-h}^{u+h}\frac{1}{h}
			\function{k}{}{}{\frac{u - t}{h}}\function{\hat{\varphi}}{n,\mat{\theta}}{}{t}
			\dd t,
		\end{equation}
		where $k$ is a symmetric kernel with bounded support $\sbracket{-1, 1}$ and $h$ is a chosen bandwidth. By definition and the monotonicity of $\hat\varphi_{n,\mat\theta}$ it follows that $\hat{\varphi}_{n,\mat\theta}^s$ is a smooth non-decreasing function. We illustrate in Appendix~\ref{sec:smoothing} that this smoothing step indeed improves the behavior of the estimator. For the bandwidth $h$ we follow a common choice in the literature of smooth isotonic estimators by taking $h=rn^{-1/5}$, where $r$ is the range of the observed index $\mat\gamma^T\mat{X}$. In Appendix~\ref{sec:bandwidth} we investigate the sensitivity of the estimators with respect to the choice of the bandwidth and conclude that, despite not being the optimal bandwidth, this is a satisfactory and quick solution. We do not apply any boundary correction for the kernel estimator but instead extend the isotonic estimator $\hat{\varphi}_{n,\mat{\theta}}$ to be constant outside of the range of the observed data.
		\item Using the plug-in approach and the maximum likelihood principle,  $\mat{\theta}_0$ is estimated by 
		\begin{equation}
			\label{eq:theta_mle}
			\hat{\mat{\theta}}_{n} = \rbracket{\hat{\mat{\gamma}}_{n}, \hat{\mat{\beta}}_{n}, \hat{\Lambda}_{n}} 
			=
			\argmax_{\mat{\theta} 
			}\function{L}{n}{}{\mat{\gamma},\mat{\beta},\Lambda,\hat{\varphi}_{n,\mat{\theta}}^{s}},
		\end{equation}
		where {the maximization is done over $\mat\gamma\in \family{S}_{d-1}$, $\mat{\beta}\in\real[q]$ and non-decreasing positive functions~$\Lambda$. }  The estimator is computed iteratively using the EM algorithm as explained in Subsection \ref{seq:computation} below. The whole estimation procedure is described in  Algorithm \ref*{algo:model_estimation_algo} 
		in Appendix~\ref{sec:algorithms}. 
	 As in the standard logistic/Cox mixture cure model, we impose the zero tail constraint meaning that the observations in the plateau are assumed to be cured. This corresponds to setting $S_u(t|\mat{z};\Lambda,\mat\beta)=\exp(
		-\basecumhazard[]{t}\exp(\tr{\mat{\beta}}\mat{z}))=0$ for $t>y_{(r)}$ where $y_{(r)}$ denotes the largest observed event time.
	\end{enumerate}
	
	\subsection{Comparison with similar problems in the literature}
	Before explaining the computation of the proposed estimators, we comment on how our problem and method relate to the existing literature on the standard monotone single-index model \cite{BDJ2019,balabdaoui2019score,balabdaoui2021profile,groeneboom2019estimation}, which assumes that $\expect[\sbracket]{Y\mid \mat{X}}=\psi_0(\mat\alpha_{0}^T\mat{X})$ for some unknown $\mat\alpha_{0}$ and monotone link function $\psi_0$. Different methods for estimation of $(\mat\alpha_0,\psi_0)$ have been proposed based on the least-squares principle and adaptations of it without imposing any smoothness assumptions. The main idea is  the following. For fixed $\mat\alpha$, one can minimize the least squares criterion
	$
	h_n(\psi,\mat\alpha)=\frac1n\sum_{i=1}^n \{Y_i-\psi(\mat\alpha^T\mat{X}_i)\}^2$
	with respect to $\psi$ on the class of monotone functions, which gives a $\mat\alpha$-dependent function $\psi_{n,\mat\alpha}$. In a second step, the function 
	$h_n(\psi_{n,\mat\alpha},\mat\alpha)$ is then minimized over $\mat\alpha$. This would be the standard profile least squares estimator. Note that since this criterion function for $\mat\alpha$ is not smooth but piecewise constant, the estimator of $\mat\alpha$ is not unique. Moreover, alternative ways to estimate $\mat\alpha$ in the second step have been proposed by using the score approach and computing the zero-crossings of 
	$
	\frac1n\sum_{i=1}^n \{Y_i-\psi_{n,\mat\alpha}(\mat\alpha^T\mat{X}_i)\}\mat{X}_i
	$
	or minimizing its squared norm.
	If the criterion function was continuous in $\mat\alpha$, these alternative approaches would result in the same least squares estimator. 
	
	The current status linear regression problem can also be seen as a  monotone single-index model where the link is actually a distribution function \citep{groeneboom2019estimation}. In that setting estimation can be performed via the maximum likelihood principle, again by first maximizing the likelihood for a fixed index on the class of distribution functions and then maximizing with respect to $\mat\alpha$ or solving score equations \citep{groeneboom2018current}. 
	
	In our setting, the response variable that corresponds to the single-index model for the incidence is the latent cure status $B$. The fact that $B$ is  not always observed makes the use of the least-squares approach not suitable. Hence, our method is based on the maximum likelihood principle similarly to the one for the current status linear regression problem. However, in the current status model, the response (the current status $\Delta$) is observed and the only unknown parameters are the index and the link function. In our model,  the presence of additional unknown finite and infinite dimensional parameters $\mat\beta,\Lambda$ makes the estimation problem much more challenging. In particular, both optimization problems in \eqref{eq:monotone_link_mle} and \eqref{eq:theta_mle} cannot be solved directly but only through iterative procedures such as the EM algorithm. To the best of our knowledge, this is the first case for which a maximum likelihood estimator under monotonicity constraints, as in \eqref{eq:monotone_link_mle},  cannot be characterized explicitly. In the proof of Proposition~\ref{prop:monotone_link_est} below we comment that, even if one would try to use the standard techniques from isotonic estimation to characterize the maximizer as the left derivative of a greatest convex minorant, would end up with an iterative procedure that is the same as the EM algorithm.  
	In addition, we include a smoothing step which leads to a smooth monotone estimator of the link function and a continuous criterion for estimation of $\mat\theta$ in the next step. In Appendix~\ref{sec:smoothing} we illustrate that smoothing indeed improves the performance of the estimator compared to the monotone (non-smooth) estimator. As in our case, a truncation is also needed in the current status regression problem in order to avoid the link function from being close to 0 and 1. However, since for that setting the link is a distribution function which necessarily obtains the values 0 and 1, the truncation is imposed for the likelihood criterion excluding the extreme observations. For our model, given the assumption (A3)(ii), it seems easier and more reasonable to restrict to link functions that are bounded away from 0 and 1.

	\subsection{Computation of the estimators}
	\label{seq:computation}
	Unlike the Cox proportional hazard model \citep{C1972}, for which the regression coefficients $\mat{\beta}$ can be estimated using a profile likelihood approach independently of the baseline cumulative hazard function $\Lambda$, the mixture cure model does not possess a likelihood function that can take advantage of such approach due to the latent uncure status $B$. The maximization problems in \eqref{eq:monotone_link_mle} and \eqref{eq:theta_mle} are solved via the expectation-maximization (EM) algorithm as in the standard logistic/Cox mixture cure model \cite{ST2000}.%
	
	\subsubsection*{EM Algorithm}
	The uncure status $B_i$ of the i-th subject is  $B_i=1$ (uncured) if $\delta_{i} = 1$ and  is unknown otherwise. Given the observed data $\mat{v}_{i} = \rbracket{y_{i},\delta_{i},\mat{x}_{i},\mat{z}_{i}}$, $i=1,\cdots,n$, the complete-data likelihood function is
	\begin{align}
		\begin{split}
			\label{eq:mcm_complete_likelihood}
			L_{nc}\rbracket{\mat{\gamma}, \mat{\beta}, \Lambda, \varphi} =			&
			\prod_{i=1}^{n}
			\varphi(\tr{\mat{\gamma}}\mat{x}_{i})
			^{B_{i}}
			[1 - 	\varphi(\tr{\mat{\gamma}}\mat{x}_{i})]^{1-B_{i}} \\
			&\times 	\prod_{i=1}^{n}
			\sbracket{
				\basehazard[]{y_{i}}e^{\tr{\mat{\beta}}\mat{z}_{i}}
				e^{
					-\basecumhazard[]{y_{i}}\exp\rbracket{\tr{\mat{\beta}}\mat{z}_{i}}
				}
			}^{\delta_{i}B_{i}} \\
		&\times\prod_{i=1}^{n}\sbracket{
				e^{
					-\basecumhazard[]{y_{i}}\exp\rbracket{\tr{\mat{\beta}}\mat{z}_{i}}
				}
			}^{(1-\delta_{i})B_{i}}.
		\end{split}
	\end{align}%
	
	In the $(k+1)$-th iteration of the EM algorithm the parameters are updated as follows. 	The E-step of the EM algorithm computes the conditional expectation of the complete-data log-likelihood $\log L_{nc}$ with respect to the uncured status $B$ given the parameters of the previous iteration 
	and the observed data. 
	Specifically, by virtue of the partially observed nature of $B$ and its linearity in the complete-data log-likelihood function, the E-step is equivalent to computing
	\begin{equation}
		\label{eqn:w_i_EM}
		\begin{split}
	w_{i}^{(k)}&=	\mathbb{E}^{(k)}[{B_{i} \mid \mat{v}_{i}}]\\
	&= \delta_{i} + \rbracket{1 - \delta_{i}}\frac{\varphi^{(k)}({\mat{\gamma}^{(k)T}}\mat{x}_{i})S_u(y_i|\mat{z}_{i};\mat{\beta}^{(k)},\Lambda^{(k)})}{1-\varphi^{(k)}({\mat{\gamma}^{(k)T}}\mat{x}_{i})+\varphi^{(k)}({\mat{\gamma}^{(k)T}}\mat{x}_{i})S_u(y_i|\mat{z}_{i};\mat{\beta}^{(k)},\Lambda^{(k)})},	
		\end{split}
		\end{equation}
	where the expectation $	\mathbb{E}^{(k)}$ is computed using the parameters ${\mat{\theta}^{(k)}, \varphi^{(k)}}$ and $S_{u}(\cdot|\cdot;\mat{\beta},\Lambda)$ is obtained according to \eqref{eq:uncured_survival}. Substituting $B_{i}$ with $w_{i}$ in \eqref{eq:mcm_complete_likelihood}, we obtain the expected complete-data likelihood
	\begin{equation}
		\label{eq:mcm_expected_complete_likelihood}
		\begin{split}
&\prod_{i=1}^{n}\varphi(\tr{\mat{\gamma}}\mat{x}_{i})^{w_{i}}[1 - \varphi(\tr{\mat{\gamma}}\mat{x}_{i})]^{1-w_{i}} 
\prod_{i=1}^{n}f_{u}({y_{i}|\mat{z}_{i}})^{\delta_{i}w_{i}} S_u(y_i|\mat{z}_{i})^{(1-\delta_{i})w_{i}}\\
&=
\tilde{L}_{nc}^{1}\rbracket{\mat{\gamma}, \varphi} 
\tilde{L}_{nc}^{2}\rbracket{\mat{\beta}, \Lambda},
		\end{split}
	\end{equation}
	where, to simplify the notation, we have denoted $w_i^{(k)}$ by just $w_i$.	
	The M-step of the algorithm consists in maximizing the expected complete-data likelihood with respect to the parameters of interest. Specifically, in \eqref{eq:monotone_link_mle}, we maximize with respect to  $\varphi$ over $\family{M}_{\epsilon^{\prime}}$ while keeping $\rbracket{\mat{\gamma}, \mat{\beta}, \Lambda}$ fixed in all iterations, while in \eqref{eq:theta_mle} we maximize over	
	$\rbracket{\mat{\gamma}, \mat{\beta}, \Lambda}$ for a given $\varphi=\hat\varphi^s_{n,\mat{\theta}}$. {To simplify the maximization problem in \eqref{eq:theta_mle}, the link function could be kept fixed, equal to the one obtained in the previous iteration of the EM algorithm  with $\mat{\theta}=\mat{\theta}^{(m-1)}$. However, we observe that in some cases such procedure does not behave well computationally since the link estimator is very sensitive to the estimator of $\mat\gamma$. On the other hand, the link estimate seems to be stable with respect to small changes of the latency parameters $\mat\beta$ and $\Lambda$ from one iteration to the other. Hence, we only fix $\mat\beta$ and $\Lambda$ as the estimates of the previous iteration and allow $\varphi=\hat\varphi^s_{n,\mat{\theta}}$ with $\mat{\theta}=(\mat\gamma,\mat\beta^{(m-1)},\Lambda^{(m-1)})$ to still depend on $\mat\gamma$.}
	From \eqref{eq:mcm_expected_complete_likelihood}, we can see that the expected complete-data likelihood can be factorized into two parts. The first part only consists of the parameters of the incidence part, while the second part  contains the parameters of the latency part only. Therefore we can maximize the likelihood for the two parts separately. 
	
	\subsubsection*{Monotone Link Estimator}
	Algorithm \ref*{algo:link_estimator_algo} in Appendix~\ref{sec:algorithms} describes the procedure of the EM algorithm for estimation of the monotone link function in \eqref{eq:monotone_link_mle}. The M-step of the EM algorithm is equivalent to the following maximization problem for a fixed $\mat\gamma$
	\begin{equation}
		\label{eq:link_mle}
		\maximize_{\varphi \in \family{M}_{\epsilon^{\prime}}} 
		\sum_{i=1}^{n} 
		\cbracket{w_{i} \log\link[\tr{\mat{\gamma}}\mat{x}_{i}]{}{} 
			+ \rbracket{1 - w_{i}}\log\sbracket{1 - \link[\tr{\mat{\gamma}}\mat{x}_{i}]{}{}}},
	\end{equation}
	which belongs to the class of order-restricted maximum likelihood estimation problems \citep{RWD1988}. In addition to the order restriction, a uniform bound restriction is imposed to the maximum likelihood estimation (MLE) of $\varphi$ for fixed $\mat{\gamma}$. Such class of order and uniform bound restricted problems is studied by \cite{H1997}. Using results from \cite{RWD1988} and \cite{H1997} we obtain the following characterization, a proof of which can be found in Appendix~\ref{sec:proofs}. 
	\begin{prop}
		\label{prop:monotone_link_est}
		The maximizer in \eqref{eq:link_mle} exists, it is not unique but it is uniquely defined at the ordered points $\mat\gamma^T\mat{x}_{(1)}<\dots<\mat\gamma^T\mat{x}_{(n)}$ with correspondent values  $\hat{\varphi}_i=\max(\epsilon',\min(\tilde{\varphi}_i,1-\epsilon'))$, where $(\tilde{\varphi}_1,\dots,\tilde{\varphi}_n)$ are the left derivatives of the greatest convex minorant of the cumulative sum diagram 
		$$\cbracket{\rbracket{0,0}, \rbracket{i, \sum_{j=1}^{i}w_{(j)}}, i=1,\cdots,n}.$$
		Here $w_{(j)}$ corresponds to the $j$-th order statistic of $\tr{\mat{\gamma}}\mat{x}_{i}$, $i=1,\dots,n$ .
	\end{prop}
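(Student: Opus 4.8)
The plan is to recast \eqref{eq:link_mle} as a finite-dimensional concave program and then identify its solution with a truncated isotonic regression. Since the objective depends on $\varphi$ only through the values $\varphi(\tr{\mat{\gamma}}\mat{x}_i)$, I would order the indices as $\tr{\mat{\gamma}}\mat{x}_{(1)}<\dots<\tr{\mat{\gamma}}\mat{x}_{(n)}$ (distinct almost surely because $\mat{X}$ has a continuous component by (A1)) and put $p_i=\varphi(\tr{\mat{\gamma}}\mat{x}_{(i)})$. Membership of $\varphi$ in $\family{M}_{\epsilon'}$ is then equivalent to $p=(p_1,\dots,p_n)$ lying in the compact convex set $K=\{p\in\real[n]\colon \epsilon'\le p_1\le\dots\le p_n\le 1-\epsilon'\}$, and conversely any $p\in K$ is realized by a left-continuous monotone step function. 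Thus \eqref{eq:link_mle} is equivalent to maximizing
\[
\Phi(p)=\sum_{i=1}^{n}\Bigl\{w_{(i)}\log p_i+(1-w_{(i)})\log(1-p_i)\Bigr\}
\]
over $K$. Existence is immediate from continuity of $\Phi$ on the compact set $K$, and strict concavity of each summand (second derivative $-w_{(i)}/p_i^2-(1-w_{(i)})/(1-p_i)^2<0$) makes $\Phi$ strictly concave, so the maximizing vector $(\hat{\varphi}_1,\dots,\hat{\varphi}_n)$ is unique; this yields uniqueness at the data points. The maximizer as a function on $\real$ is not unique, since these $n$ values admit infinitely many monotone interpolations and extensions.

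Next I would characterize the solution of the order restriction alone, i.e.\ over the cone $\{p_1\le\dots\le p_n\}$ without the box. Each summand $w\log p+(1-w)\log(1-p)$ is the Bernoulli log-likelihood in its mean parametrization, so this is an order-restricted MLE within a one-parameter exponential family. By the general equivalence for such families \citep{RWD1988}, the resulting estimator coincides with the unit-weight $L_2$ isotonic regression of the observations $w_{(1)},\dots,w_{(n)}$, namely the minimizer of $\sum_i (w_{(i)}-g_i)^2$ over isotonic $g$. The classical min--max / greatest-convex-minorant characterization of isotonic regression then identifies this with the left derivatives $\tilde{\varphi}_1,\dots,\tilde{\varphi}_n$ of the GCM of the cumulative sum diagram $\{(0,0)\}\cup\{(i,\sum_{j\le i} w_{(j)})\}$, which is exactly the vector appearing in the statement; note in particular that each pooled-block value of $\tilde\varphi$ equals the corresponding block average $\bar w_B$.

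Finally I would reinstate the box constraint and show that the maximizer over $K$ is the pointwise truncation $\hat{\varphi}_i=\max(\epsilon',\min(\tilde{\varphi}_i,1-\epsilon'))$; this is the step I expect to be the main obstacle. It is precisely the content of the order-and-uniform-bound-restricted theory of \citet{H1997}, which I would invoke. For a self-contained verification I would instead check the Karush--Kuhn--Tucker conditions: truncation is order-preserving, so $(\hat{\varphi}_i)\in K$ is feasible; on a pooled block $B$ whose common value $\bar w_B$ already lies in $[\epsilon',1-\epsilon']$ nothing changes, whereas on a block with $\bar w_B>1-\epsilon'$ the blockwise derivative $\sum_{i\in B}\Phi_i'(1-\epsilon')$ is strictly positive, because $\Phi_i'(p)=w_{(i)}/p-(1-w_{(i)})/(1-p)$ is strictly decreasing and vanishes in aggregate at $\bar w_B>1-\epsilon'$ (symmetrically for blocks below $\epsilon'$). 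This residual gradient is exactly absorbed by a nonnegative multiplier for the now-active bound, while the monotonicity multipliers carry over from the unconstrained solution since truncation preserves the block ordering. The delicate point, and the reason appealing to \citet{H1997} is cleanest, is the bookkeeping that these multipliers remain mutually consistent --- in particular that truncation never activates a monotonicity constraint that was slack for $\tilde\varphi$.
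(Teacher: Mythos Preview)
Your proposal is correct and follows essentially the same route as the paper: reduce to a finite-dimensional monotone problem, invoke the Robertson--Wright--Dykstra equivalence between the Bernoulli order-restricted MLE and $L_2$ isotonic regression (the paper does this via the $\Delta_\Psi$ formalism with $\Psi(u)=u\log u+(1-u)\log(1-u)$, you via the exponential-family statement---same result), obtain the GCM characterization, and then apply \citet{H1997} for the truncation under the uniform bound. Your explicit compactness/strict-concavity argument for existence and uniqueness at the data points, and the KKT sketch as a self-contained alternative to citing \citet{H1997}, are welcome additions not spelled out in the paper, but the underlying strategy is the same.
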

	
	\begin{remark}
		\label{re:truncation}
		The truncation of the link function is introduced in order to avoid that it takes the extreme values $0$ and $1$ which would create theoretical problems with terms that explode to infinity. When $\epsilon'$ is chosen to be very small (for example of the order $10^{-6}$), in practice there would be almost no difference between the truncated and not truncated version of the link function since by construction $\tilde{\varphi}_i\in[0,1]$. Truncation is also 
		compatible with Assumption (A3)(ii) above. When the support of $\mat{x}$ is assumed to be bounded, there exists $\epsilon'$ such that the true link belongs to $\mathcal{M}_{\epsilon'}.$
	\end{remark}
	
	\subsubsection*{Incidence Regression Parameter Estimator}
	For the estimation of the incidence regression parameter $\mat\gamma$ in \eqref{eq:theta_mle}, the M-step of the EM algorithm is equivalent to the following maximization problem 
	\begin{equation}
		\label{eq:gamma_mle}
		\maximize_{\mat{\gamma} \in \family{S}_{d-1}} 
		\sum_{i=1}^{n} 
		{	\cbracket{w_{i} \log\varphi_{\mat\gamma}(\tr{\mat{\gamma}}\mat{x}_{i}) 
				+ \rbracket{1 - w_{i}}\log\sbracket{1 - \varphi_{\mat\gamma}(\tr{\mat{\gamma}}\mat{x}_{i})}},}
	\end{equation}
	{where $\varphi_{\mat\gamma}$ denotes the smooth monotone link estimate $\hat\varphi^s_{n,\mat\theta}$ for $\mat{\theta}=(\mat\gamma,\mat\beta^{(m-1)},$ $\Lambda^{(m-1)})$, i.e. the parameters $\mat\beta$, $\Lambda$ are fixed to the estimates of the previous iteration. }
	Here, we impose the identifiability constraint, $\norm{\mat{\gamma}}{2} = 1$, leading to a maximization problem with a nonlinear constraint. This problem can be solved by the augmented Lagrangian method. Such method first reformulates the problem as an unconstrained problem by introducing penalty terms for the equality constraints and solving this unconstrained problem by some interior-point algorithms. The penalty terms are then updated. These two steps are repeated until convergence. See Chapter 17 of \cite{NW2006} for more details.%
	
	\subsubsection*{Latency estimation}
	For the latency component, the M-step of the EM algorithm can  be performed as in \cite{ST2000}. Specifically, the estimator  $\widehat{\mat\beta}_n$ for $\mat{\beta}$ is computed using the profile likelihood approach as the maximizer of 
	\begin{equation*}
		\label{eq:partial_likelihood}
		\prod_{i=1}^{n}
		\rbracket{
			\frac{
				\exp(\tr{\mat{\beta}}\mat{z}_{i})
			}{
				\sum_{j \in \family{R}_{i}}
				w_{j}\exp(\tr{\mat{\beta}}\mat{z}_{j})
			}
		}^{\delta_{i}},	
	\end{equation*}
	where $\family{R}_{i}$ denotes the risk set just before time $y_{i}$. The  nonparametric estimator of $\Lambda$ is  given by
	\begin{equation*}
		\label{eq:npmle_Lambda}
		\function{\hat{\Lambda}}{n}{}{t}
		=
		\sum_{i: y_{i} \leq t}
		\frac{
			d_{i}
		}{
			\sum_{j \in \family{R}_{i}}
			w_{j}\exp(\tr{\widehat{\mat{\beta}}}_n\mat{z}_{j})},
	\end{equation*}
	where $d_{i}$ denotes the number of events at time $y_{i}$. As suggested in \cite{T1995}, the conditional survival function $\hat{S}_{u}(t|\mat{z})$ is set to zero when $t> y_{(r)}$.
	
	\section{Asymptotic properties \label{sec:prop}}
	We start by providing some technical intuition on the approach and illustrate the additional challenges that we face compared to the existing literature on the monotone single-index model. 
	We will derive consistency of our estimators using results from the theory of semiparametric M-estimation. Indeed, our estimator $\hat{\mat\theta}_n$ corresponds to the maximizer of an empirical criterion function that depends on an infinite dimensional nuisance parameter  $\varphi$. The unknown $\varphi$ is  replaced by a nonparametric estimator $\hat\varphi^s_{n,\mat\theta}$ depending on $\mat\theta= \rbracket{\mat{\gamma}, \mat{\beta}, \Lambda}.$  
	Denote the log-likelihood for a single observation by
	\begin{equation*}
		l(y, \delta, \mat{x}, \mat{z}; \mat{\theta}, \varphi) = \delta\log f_{u}(y|\mat{z}) + \delta\log\varphi(\mat\gamma^T\mat{x}) + (1 - \delta)\log[1 - \varphi(\mat\gamma^T\mat{x})F_u(y|\mat{z})].
	\end{equation*}
	Then we can write 
	\[
	\hat{\mat\theta}_n=\argmax_{\mat\theta} M_n(\mat\theta,\hat\varphi^s_{n,\mat\theta}),\qquad M_n(\mat\theta,\hat\varphi^s_{n,\mat\theta})=\frac{1}{n}\sum_{i=1}^nl\rbracket{y_i, \delta_i, \mat{x}_i, \mat{z}_i; \mat{\theta}, \hat\varphi^s_{n,\mat\theta}}.
	\]
	We will show in Proposition \ref{prop:existence_mle_theta} below that such maximizer exists and is finite. The asymptotic version of $M_n$ is given by
	$
	M(\mat\theta,\varphi_{\mat\theta})=\expect[\sbracket]{l\rbracket{Y, \Delta, \mat{X}, \mat{Z}; \mat{\theta}, \varphi_{\mat\theta}}},
	$
	where the infinite dimensional nuisance parameter is allowed to depend on $\mat\theta$.
	On the other hand, for fixed $\mat{\theta} = \rbracket{\mat{\gamma}, \mat{\beta}, \Lambda}$, we define $l_{\mat{\theta}}\rbracket{\varphi} := \expect[\sbracket]{l\rbracket{Y, \Delta, \mat{X}, \mat{Z}; \mat{\theta}, \varphi}}$ and 
	\begin{equation}
		\label{eqn:phi_theta}
		\varphi_{0,\mat{\theta}} := \argmax_{\varphi \in \family{M}_{\epsilon^{\prime}}}l_{\mat{\theta}}\rbracket{\varphi},
	\end{equation}
	where $\family{M}_{\epsilon^\prime} := \set{\map{\varphi}{\real}{\sbracket{\epsilon^\prime,1 - \epsilon^\prime}}}{\varphi~\text{is monotone non-decreasing}}$. This corresponds to the asymptotic version of the estimator $\hat\varphi_{n,\mat\theta}$ in \eqref{eq:monotone_link_mle}. Proposition \ref{prop:link_maximizer} below guarantees that the maximizer $\varphi_{0,\mat{\theta}}$ exists and is unique.  Note that from Assumptions \ref*{enum:si_assumptions}\ref*{enum:link_assumptions} and \ref*{enum:bounded_support} below, it follows that there exists $\epsilon>0$ such that $\epsilon\leq\varphi_0(\mat\gamma_0^T\mat{x})\leq 1-\epsilon$ for all $\mat{x}\in\mathcal{X}$. We assume that $\epsilon'$ in the definition of $\hat\varphi_{n,\mat\theta}$ is chosen such that $\epsilon'<\epsilon$. In this way, $\varphi_{0}\in\family{M}_{\epsilon'}$ and in particular we have that, if $\mat\theta=\mat\theta_0$, then $\varphi_{0,\mat\theta}=\varphi_0$ (Proposition \ref{prop:true_link}). We will show that
	$
	\mat\theta_0=\argmax_{\mat\theta}M(\mat\theta,\varphi_{0,\mat\theta}),
	$
	which is the foundation behind the estimation strategy. 
	Then, to obtain consistency of of the estimator for $\mat\theta_0$ we will check the conditions of Theorem 1 in \cite{DK2020}. Specifically we need that $\hat\varphi^s_{n,\mat\theta}$ is a consistent estimator for $\varphi_{0,\mat\theta}$ uniformly over $\mat\theta$, the empirical criterion function $M_n(\mat\theta,\varphi)$ is a good approximation of the asymptotic criterion $M(\mat\theta,\varphi)$ uniformly over $\mat\theta$ and $\varphi$, the function $M(\mat\theta,\varphi)$ is continuous with respect to $\varphi$ at $\varphi_{0,\mat\theta}$ uniformly over $\mat\theta$. Such conditions will be proved in Theorem \ref{prop:theta_estimator_consistency}.
	
	The function $\varphi_{0,\mat\theta}$ defined in \eqref{eqn:phi_theta} plays a fundamental role in the theoretical analysis of our estimators and the main challenges we face arise from the fact that we do not have an explicit characterization of this function. The counterpart of this function in the monotone single-index model is 
	$
	\psi_{\mat\alpha}(u)=\mathbb{E}[\psi_0(\mat\alpha_0^T\mat{X})\mid\mat\alpha^T\mat{X}=u]
	$
	(see for example equation (5) in \cite{balabdaoui2021profile}), while in the current status model is 
	$
	F_{\mat\beta}(u)=\mathbb{E}[F_0(T-\mat\beta_0^T\mat{})\mid T-\mat\beta^T\mat{X}=u]
	$
	(see equation 3.2 in \cite{groeneboom2018current}). In both cases, this function can be seen as the expected value of the true single-index model when we fix the index to a given value.  Having this explicit characterization makes it easier to deal with this function and in particular, properties such as continuity, differentiability of $\psi_{\mat\alpha}$  or $F_{\mat\beta}$ can be derived from assumptions on the true link function. In our case, we can characterize $\varphi_{0,\mat\theta}$ as 
	\begin{equation}
		\label{eqn:characterization_phi_theta}
		\varphi_{0,\mat\theta}(u)=\expect[\sbracket]{\varphi_0(\mat\gamma_0^T\mat{X})\mid\mat\gamma^T\mat{X}=u}
	\end{equation}
	only if the covariates $\mat{X}$ and $\mat{Z}$ are independent, censoring is independent of all the other variables and only for $\mat\theta=(\mat\gamma,\mat\beta_0,\mat\Lambda_0)$ (see details in Appendix~\ref{sec:proofs}). However, such assumptions are too strong for practical purposes since it is quite common in particular to have $\mat{X}=\mat{Z}$. Hence, we prefer not to restrict ourselves to such scenario.  Without such characterization, even just arguing continuity of $\varphi_{0,\mat\theta}$ is quite challenging and technical (see Proposition \ref{prop:link_continuity}) and requires assumptions that are more difficult to interpret such as assumption \ref*{enum:expectation_ratio_continuity} below. For a fixed $\mat\theta$, the smooth kernel estimator $\hat{\varphi}^s_{n,\mat\theta}$ is an estimator of $\varphi_{0,\mat\theta}$, hence in order to obtain the rate of convergence of the estimators, one would need also (twice) differentiability of $\varphi_{0,\mat\theta}$. Given the technicalities of proving by contradiction that $\varphi_{0,\mat\theta}$ is continuous, we do not explore this direction further. 
	
	The second challenge arises from the fact that our parameter $\mat\theta$ contains not only the index $\mat\gamma$ but also the latency parameters $\mat\beta$, $\Lambda$. In particular, $\mat\theta$ is not finite dimensional. The theory of semiparametric M-estimators in \cite{DK2020} allows $\mat\theta$ to be infinite dimensional only for the consistency part but obtaining the rate of convergence requires an Euclidean parameter $\mat\theta$. Hence, one would first need to extend the standard results of semiparametric M-estimation to this scenario. Note that obtaining the limit distribution of the latency parameters $\hat{\mat\beta}_n,\hat\Lambda_n$ is equally challenging since they are dependent on $\hat\varphi^s_{n,\mat\theta}$ and $\hat{\mat\gamma}_n$ (see Definition \eqref{eq:theta_mle}) and cannot be dealt with separately. Because of these two main issues, within this paper, we focus on the consistency property of the estimators and leave the rate of convergence and asymptotic normality to be subject of future research.
	
	The results that we previously described intuitively are formulated rigorously  below. First we list the required assumptions. 
	The identifiability assumptions \ref*{enum:si_assumptions}-\ref*{enum:latency_assumptions} are assumed throughout this section.  In what follows $B_{\mat{c}}(r)$ will denote a closed ball of center $\mat{c}$ and radius $r$ in a given metric space, which is some $\real[d]$ with the Eucledian norm if not specified otherwise.  
	\begin{enumerate}[label=(A\arabic*), resume=assumptions]
		\item\label{enum:bounded_support}
		The covariates $\mat{X}$ and $\mat{Z}$ have bounded supports $\family{X}$ and $\family{Z}$ respectively. That is, $\family{X} \subset B_{\mat{0}}(r_{1})$ for some $r_{1} > 0$ and $\family{Z} \subset B_{\mat{0}}(r_{2})$ for some $r_{2} > 0$.
		\item\label{enum:index_density_assumption} There exists $\delta_0>0$ such that, for all $\mat{\gamma} \in B_{\mat\gamma_0}(\delta_{0})$, 
		$\tr{\mat{\gamma}}\mat{X}$ has a density with respect to the Lebesgue measure. We denote such density by $g_{\tr{\mat{\gamma}}\mat{X}}(\cdot)$.
		\item\label{enum:compact_param_space}
		$\mat{\gamma}_{0}$ and $\mat{\beta}_{0}$ lie in the interior of  compact sets $\Gamma=\mathcal{S}_{d-1}\cap B_{\mat{\gamma}_{0}}(\delta_{0})$ and $\mathcal{B}$ respectively. 
		\item\label{enum:Lambda} The function $ \Lambda_0(t)$ defined on $[0,\tau_0)$ is non-decreasing and $\Lambda_0(\tau_0-)<\infty$. We define $\Lambda_0(\tau_{0})=\Lambda_0(\tau_0-)$, i.e. extend $\Lambda$ at $\tau_0$ by left-continuity.
	\end{enumerate}
	Consider $\mat\theta\in \Theta=\Gamma\times\mathcal{B}\times\mathcal{D}$, where $\mathcal{D}$ is the space of nondecreasing functions $\Lambda$ on $[0,\tau_0]$ such that $\Lambda(0)=0$ and $\Lambda(\tau_0)<\infty$.
	\begin{enumerate}[label=(A\arabic*), resume=assumptions]
		\item\label{enum:expectation_ratio_continuity}	 
		For any $\mat\theta\in\Theta$, the function
		\begin{equation*}
			\mapping{u}{
				\expect[\sbracket]{\left.
					\frac{
						1 - \link[\tr{\mat{\gamma}}_{0}\mat{X}]{0}{}F_{u,0}\rbracket{Y | \mat{Z}}
					}{
						1 - \link[\tilde{u}]{0,\mat{\theta}}{}F_{u}\rbracket{Y | \mat{Z}}
					} \right\vert \tr{\mat{\gamma}}\mat{X} = u
				}
			}
		\end{equation*}
		is continuous for all {$\tilde{u}\in \mathcal{I}_{\mat{\gamma}}$}, where $F_{u,0}\rbracket{Y | \mat{Z}} = 1 - \uncuredsurvival{\mat{Z}}{Y}{}$ with the true parameters $\mat{\beta}_{0}$ and $\Lambda_{0}$ as in \eqref{eq:uncured_survival}, $F_{u}(Y | \mat{Z})= 1 - S_u(Y|\mat{Z})$ with parameters $\mat{\beta}$ and~$\Lambda$.
		\item\label{enum:cont_diff_index_density}
		The family of density functions $\{g_{\tr{\mat{\gamma}}\mat{X}}:\mat{\gamma} \in \Gamma\}$ is uniformly equicontinuous. 
		\item\label{enum:bound_dens} The density functions $g_{\tr{\mat{\gamma}}\mat{X}}$ for $\mat\gamma\in\Gamma$ and $g_{\tr{\mat{\beta}}\mat{Z}}$ for $\mat\beta\in\mathcal{B}$ are uniformly bounded from above by some positive constants $\bar{q}_1,$ $\bar{q}_2$ respectively. 
	\end{enumerate}
	
	Assumptions \ref*{enum:bounded_support}-\ref*{enum:compact_param_space} are standard assumptions made also in the standard single index model and the current status model. Assumption \ref*{enum:Lambda} is a standard assumption of the mixture cure model, see \cite{L2008,musta2022presmoothing}. Together with assumption \ref*{enum:latency_assumptions}, it essentially means that the distribution of the survival times for the uncured subjects has a jump at $\tau_0$, i.e. there is positive probability for the event to happen at $\tau_0$. This is mainly a technical condition for consistency of the estimators (see discussion in \cite{musta2022presmoothing}) and the probability mass at $\tau_0$ can be arbitrarily small, hence reasonable in practice.
	Assumption \ref*{enum:expectation_ratio_continuity} is needed to guarantee the continuity of $\varphi_{0,\mat\theta}$ since we do not have an explicit expression for such function. If there was no latency component $F_u$ then this assumption reduces to continuity of $\mathbb{E}[\link[\tr{\mat{\gamma}}_{0}\mat{X}]{0}{}\mid \tr{\mat{\gamma}}\mat{X} = u]$, which is standard in the single-index model. 
	Assumptions \ref*{enum:cont_diff_index_density} and \ref*{enum:bound_dens} are required in order to get uniform consistency of the link estimate. \ref*{enum:cont_diff_index_density} is for example satisfied if the density functions $\function{g}{\tr{\mat{\gamma}}\mat{X}}{}{\cdot}$ are continuously differentiable on their support $\family{I}_{\mat{\gamma}}$ for all $\mat{\gamma} \in \Gamma$ with uniformly bounded derivative.
	
	\begin{prop}
		\label{prop:link_maximizer}
		Suppose that Assumptions \ref*{enum:bounded_support} and \ref*{enum:index_density_assumption} 
		hold. Then, for any  {$\mat{\theta}\in \Theta$}, the maximizer $\varphi_{0,\mat{\theta}}$ in \eqref{eqn:phi_theta} exists and is unique.
	\end{prop}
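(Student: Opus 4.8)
The plan is to exploit that $l_{\mat{\theta}}(\varphi)$ depends on $\varphi$ only through its values on the index $\tr{\mat{\gamma}}\mat{X}$, which turns the variational problem over the infinite-dimensional class $\family{M}_{\epsilon'}$ into a pointwise concave maximization, coupled across $u$ only by the monotonicity constraint. Since the term $\expect[\sbracket]{\Delta\log f_u(Y\mid\mat{Z})}$ does not involve $\varphi$, it suffices to maximize the $\varphi$-dependent part, which after conditioning on $U:=\tr{\mat{\gamma}}\mat{X}$ reads
\[
	J(\varphi)=\int \psi_u(\varphi(u))\, g_{\tr{\mat{\gamma}}\mat{X}}(u)\,\dd u,\qquad \psi_u(p):=a(u)\log p+\expect[\sbracket]{(1-\Delta)\log\rbracket{1-p\,F_u(Y\mid\mat{Z})}\,\middle|\,U=u},
\]
where $a(u):=\expect[\sbracket]{\Delta\mid U=u}$. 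By Assumption \ref*{enum:bounded_support} the index ranges over a bounded interval $\family{I}_{\mat{\gamma}}$ and, since $\mat{\gamma}\in\Gamma\subset B_{\mat{\gamma}_0}(\delta_0)$, Assumption \ref*{enum:index_density_assumption} provides the density $g_{\tr{\mat{\gamma}}\mat{X}}$ there. Because $\varphi\in[\epsilon',1-\epsilon']$ and $F_u\in[0,1]$, both logarithms have arguments bounded away from $0$ (indeed $1-pF_u\ge\epsilon'$), so the integrand is bounded by a constant independent of $u$ and $\varphi$, and $J$ is well defined and bounded above.

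For existence, I would take a maximizing sequence $(\varphi_m)\subset\family{M}_{\epsilon'}$. Each $\varphi_m$ is non-decreasing with values in the compact interval $[\epsilon',1-\epsilon']$, so by Helly's selection theorem a subsequence $(\varphi_{m_k})$ converges pointwise to a limit $\varphi^\star$ that is again non-decreasing and $[\epsilon',1-\epsilon']$-valued, hence $\varphi^\star\in\family{M}_{\epsilon'}$. The convergence $\varphi_{m_k}(u)\to\varphi^\star(u)$ holds at every continuity point of $\varphi^\star$, i.e. off an at most countable set, which is null for the measure $g_{\tr{\mat{\gamma}}\mat{X}}(u)\,\dd u$. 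Since $p\mapsto\psi_u(p)$ is continuous and the integrand is uniformly bounded, dominated convergence gives $J(\varphi^\star)=\lim_k J(\varphi_{m_k})=\sup_{\family{M}_{\epsilon'}}J$, so $\varphi^\star$ is a maximizer.

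For uniqueness, the key is that $p\mapsto\log p$ and $p\mapsto\log(1-pF)$ are concave for $F\in[0,1]$, so each $\psi_u$ is concave, and strictly concave as soon as $a(u)>0$ (its second derivative then contains the strictly negative term $-a(u)/p^2$). I would verify $a(u)>0$ for every $u\in\family{I}_{\mat{\gamma}}$: by Assumption \ref*{enum:latency_assumptions}(ii) the incidence is strictly positive and by Assumption \ref*{enum:cure_threshold} the censoring time exceeds $\tau_0$ with positive probability, so $\prob{\Delta=1\mid\mat{X}=\mat{x},\mat{Z}=\mat{z}}>0$ pointwise and hence $a(u)=\expect[\sbracket]{\prob{\Delta=1\mid\mat{X},\mat{Z}}\mid U=u}>0$. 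If $\varphi_1,\varphi_2$ both maximize $J$, their average lies in the convex set $\family{M}_{\epsilon'}$ and also maximizes $J$ by concavity; equality in the integrated midpoint inequality then forces, for $g_{\tr{\mat{\gamma}}\mat{X}}$-a.e. $u$, equality in $\psi_u\rbracket{\tfrac{\varphi_1(u)+\varphi_2(u)}{2}}\ge\tfrac12\psi_u(\varphi_1(u))+\tfrac12\psi_u(\varphi_2(u))$, which by strict concavity yields $\varphi_1(u)=\varphi_2(u)$. Thus the maximizer is unique on the support $\family{I}_{\mat{\gamma}}$, the values off the support being immaterial to $J$.

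The step I expect to be the main obstacle is the uniqueness argument: pointwise strict concavity of $\psi_u$ does not by itself decouple the optimization across $u$ because of the monotonicity coupling, so the convexity argument must be run at the level of the functional $J$ and then transferred back to a.e.-pointwise equality of the two candidate maximizers. The supporting technical point is the positivity $a(u)>0$, which must be traced through the conditioning on the index $\tr{\mat{\gamma}}\mat{X}=u$ rather than on $(\mat{X},\mat{Z})$; establishing this is what makes strict concavity effective on all of $\family{I}_{\mat{\gamma}}$ and hence what upgrades existence to a \emph{unique} maximizer.
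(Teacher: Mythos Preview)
Your proposal is correct and follows essentially the same route as the paper: existence via Helly's selection theorem plus dominated convergence, and uniqueness via strict concavity of the functional driven by the positivity of $\expect[\sbracket]{\Delta\mid \cdot}$ (which the paper derives from Assumption~\ref*{enum:latency_assumptions} exactly as you do). The only cosmetic difference is that you condition on $U=\tr{\mat{\gamma}}\mat{X}$ and argue pointwise strict concavity of $\psi_u$, whereas the paper computes the Gateaux second derivative $H''(t)$ of $t\mapsto l_{\mat{\theta}}(t\varphi_1+(1-t)\varphi_2)$ directly and bounds it below by $-c\epsilon^2\,\norm{\varphi_1-\varphi_2}{Q_{\mat{X}},\mat\gamma}^2$; both packagings encode the same inequality.
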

	
	\begin{prop}
		\label{prop:true_link}
		For $\mat{\theta} = \rbracket{\mat{\gamma}_{0},\mat{\beta}_{0},\Lambda_{0}}$, we have $\varphi_{0,\mat{\theta}} = \varphi_{0}$.
	\end{prop}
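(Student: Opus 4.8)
The plan is to leverage the uniqueness already granted by \pref{prop:link_maximizer}. Since the maximizer $\varphi_{0,\mat{\theta}}$ of $l_{\mat{\theta}}$ over $\family{M}_{\epsilon'}$ exists and is unique at $\mat\theta_0 := \rbracket{\mat{\gamma}_0,\mat{\beta}_0,\Lambda_0}$, it suffices to show that the true link $\varphi_0$ is itself a maximizer; feasibility is immediate, since the choice $\epsilon'<\epsilon$ discussed before \pref{prop:link_maximizer} gives $\varphi_0\in\family{M}_{\epsilon'}$. First I would discard the $\varphi$-free term: with $\mat\theta=\mat\theta_0$ the latency is evaluated at the truth, so $f_u=f_{u,0}$ and $F_u=F_{u,0}$, and the $\delta\log f_{u,0}$ summand in $l$ does not depend on $\varphi$. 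Maximizing $l_{\mat\theta_0}$ over $\varphi$ thus reduces to maximizing $\expect[\sbracket]{\Delta\log\varphi(\mat\gamma_0^T\mat{X}) + (1-\Delta)\log(1 - \varphi(\mat\gamma_0^T\mat{X})F_{u,0}(Y\mid\mat{Z}))}$, which I would recast as a conditional information (Gibbs) inequality indexed by the value of the index.

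Fix covariates $(\mat{X},\mat{Z})=(\mat{x},\mat{z})$, put $u=\mat\gamma_0^T\mat{x}$, and let $\bar G$, $g_C$ denote the conditional censoring survival and density. Under the true model with the latency held at the truth, the conditional law of $(Y,\Delta)$ given $(\mat{x},\mat{z})$ has, for a candidate value $p=\varphi(u)\in[\epsilon',1-\epsilon']$, the density $q_p(y,1)=p\,f_{u,0}(y\mid\mat{z})\bar G(y\mid\mat{x},\mat{z})$ on $\{\Delta=1\}$ and $q_p(y,0)=g_C(y\mid\mat{x},\mat{z})[1-p\,F_{u,0}(y\mid\mat{z})]$ on $\{\Delta=0\}$. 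Because $\bar G$ and $g_C$ carry no dependence on $p$, they cancel in the log-ratio, and hence
\[
l_{\mat\theta_0}(\varphi_0) - l_{\mat\theta_0}(\varphi) = \expect[\sbracket]{\log\frac{q_{\varphi_0(\mat\gamma_0^T\mat{X})}}{q_{\varphi(\mat\gamma_0^T\mat{X})}}},
\]
the expectation being under the true distribution and $q_p$ evaluated at the observation.

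The hard part will be showing that each $q_p(\cdot\mid\mat{x},\mat{z})$ is a genuine probability density whose total mass is \emph{independent of $p$}. Differentiating the total mass in $p$, this reduces to the identity
\[
\int f_{u,0}(y\mid\mat{z})\,\bar G(y\mid\mat{x},\mat{z})\,\dd y = \int g_C(y\mid\mat{x},\mat{z})\,F_{u,0}(y\mid\mat{z})\,\dd y = \prob{T\le C\mid B=1,\mat{X}=\mat{x},\mat{Z}=\mat{z}},
\]
where the conditional independence of $C$ and $T$ given $(\mat{X},\mat{Z})$ is essential: the left integral computes the sub-density of an observed event, and the right one is obtained by conditioning on $C$ and using $T\perp C$. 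Care is needed to include the atom of $F_{u,0}$ at the cure threshold $\tau_0$, reading the integrals as Lebesgue--Stieltjes integrals. Once the mass is seen to be constant in $p$, each $q_p$ is a probability density, so the inner conditional expectation of the log-ratio is a Kullback--Leibler divergence and is nonnegative; integrating over $(\mat{X},\mat{Z})$ yields $l_{\mat\theta_0}(\varphi_0)\ge l_{\mat\theta_0}(\varphi)$ for every $\varphi\in\family{M}_{\epsilon'}$. Thus $\varphi_0$ is a maximizer, and by the uniqueness in \pref{prop:link_maximizer} we conclude $\varphi_{0,\mat\theta_0}=\varphi_0$. Equivalently, one may verify directly that the strictly concave pointwise criterion $p\mapsto\expect[\sbracket]{\Delta\log p + (1-\Delta)\log(1-pF_{u,0}(Y\mid\mat{Z}))\mid\mat\gamma_0^T\mat{X}=u}$ is stationary at $p=\varphi_0(u)$, which is the same identity in disguise.
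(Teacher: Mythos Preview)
Your proposal is correct and follows essentially the same approach as the paper: both show via the Kullback--Leibler inequality that $\varphi_0$ maximizes $l_{\mat\theta_0}$ over $\family{M}_{\epsilon'}$ and then appeal to the uniqueness established in \pref{prop:link_maximizer}. The paper's argument is terser---it invokes the Kullback--Leibler inequality directly on the full model likelihood without conditioning on covariates---whereas your explicit construction of $q_p$ and the mass-constancy identity is just a careful unpacking of that same step.
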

	
	\begin{prop}
		\label{prop:existence_mle_theta}
		Suppose that Assumptions \ref*{enum:bounded_support} and \ref*{enum:compact_param_space} 
		hold. Then the maximum likelihood estimator $(\hat{\mat{\gamma}}_{n},\hat{\mat{\beta}}_{n},\hat{\Lambda}_{n})$ defined as in \eqref{eq:theta_mle} exists and is finite.
	\end{prop}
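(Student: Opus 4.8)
The plan is to reduce the infinite-dimensional maximization in \eqref{eq:theta_mle} to a maximization over a compact, finite-dimensional set and then apply a Weierstrass-type argument. The three ingredients are: (i) a reduction of the baseline cumulative hazard $\Lambda$ to a step function with jumps only at the observed event times, which turns the optimization over $\mathcal{D}$ into one over the jump sizes; (ii) the uniform truncation $\hat{\varphi}^s_{n,\mat{\theta}}\in[\epsilon',1-\epsilon']$, which keeps the incidence part of the log-likelihood bounded irrespective of $\mat{\theta}$; and (iii) a coercivity estimate showing that the profile log-likelihood tends to $-\infty$ whenever a jump size degenerates to $0$ or diverges to $\infty$, uniformly over the compact sets $\Gamma$ and $\mathcal{B}$ of \ref*{enum:compact_param_space}.

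For (i), fix $(\mat{\gamma},\mat{\beta},\varphi)$. The function $\Lambda$ enters $L_n$ in \eqref{eq:mcm_likelihood} only through the jumps $\basehazard{y_i}$ at the uncensored times and through the values $\Lambda(y_i)$ inside the survival factors $e^{-\Lambda(y_i)\exp(\tr{\mat{\beta}}\mat{z}_i)}$. Since these factors are decreasing in $\Lambda(y_i)$ while the jump contributions are unaffected, replacing any feasible $\Lambda$ by the smallest non-decreasing function with the same jumps at the event times (i.e.\ a pure step function, flat between consecutive event times) can only increase $L_n$. Hence it suffices to maximize over step functions with jump sizes $s_1,\dots,s_m\ge 0$ located at the distinct observed event times $t_1<\dots<t_m$, where $\Lambda(y_i)=\sum_{k:\,t_k\le y_i}s_k$.

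For (ii)--(iii), write $\log L_n=\sum_k d_k\log s_k-\sum_k s_k A_k(\mat{\beta})+R(\mat{\gamma},\mat{\beta},\mat{s})$, where $d_k\ge 1$ is the number of events at $t_k$, $A_k(\mat{\beta})=\sum_{i:\,\delta_i=1,\,y_i\ge t_k}\exp(\tr{\mat{\beta}}\mat{z}_i)$, and $R$ collects the incidence terms $\delta_i\log\varphi(\tr{\mat{\gamma}}\mat{x}_i)$, the censored terms $\log[1-\varphi+\varphi e^{-\Lambda(y_i)\exp(\tr{\mat{\beta}}\mat{z}_i)}]$ and the linear terms $\tr{\mat{\beta}}\mat{z}_i$. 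Using $\hat{\varphi}^s_{n,\mat{\theta}}\in[\epsilon',1-\epsilon']$, the boundedness of $\family{X},\family{Z}$ from \ref*{enum:bounded_support} and compactness of $\mathcal{B}$, the remainder $R$ is bounded above and below by constants that do not depend on $(\mat{\gamma},\mat{\beta},\mat{s})$. Moreover $A_k(\mat{\beta})\ge\min_i\exp(\tr{\mat{\beta}}\mat{z}_i)\ge c_1>0$ with $c_1:=\inf_{\mat{\beta}\in\mathcal{B}}\min_i\exp(\tr{\mat{\beta}}\mat{z}_i)>0$, so $\log L_n\le C+\sum_k[d_k\log s_k-c_1 s_k]$. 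Each summand is maximized at $s_k=d_k/c_1$ and diverges to $-\infty$ as $s_k\downarrow 0$ or $s_k\uparrow\infty$; hence the supremum of $\log L_n$ is finite and, since it is attained to be $>-\infty$ at e.g.\ $s_k=d_k/c_1$, there exist $0<\underline{s}<\bar{s}<\infty$, independent of $(\mat{\gamma},\mat{\beta})$, such that any maximizing $\mat{s}$ lies in $[\underline{s},\bar{s}]^m$.

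It then remains to maximize a continuous objective over the compact set $\Gamma\times\mathcal{B}\times[\underline{s},\bar{s}]^m$, yielding a finite maximizer by Weierstrass; transcribing $\mat{s}$ back into a step function $\hat{\Lambda}_n\in\mathcal{D}$ gives the claim. I expect the main obstacle to be the continuity of the objective in $\mat{\gamma}$: unlike the elementary dependence on $(\mat{\beta},\mat{s})$, the index enters through the plugged-in smooth link $\hat{\varphi}^s_{n,\mat{\theta}}(\tr{\mat{\gamma}}\mat{x}_i)$, which is only defined implicitly through the isotonic/EM characterization of \pref{prop:monotone_link_est}. Here the smoothing step in \eqref{eq:smooth_isotonic_estimator} is essential: convolving the piecewise-constant isotonic estimator with the kernel $k$ removes the jumps created when the ordering of the indices $\tr{\mat{\gamma}}\mat{x}_i$ changes, so that $\mat{\theta}\mapsto\hat{\varphi}^s_{n,\mat{\theta}}$ is continuous and the Weierstrass argument applies. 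Establishing this continuity rigorously --- equivalently, ruling out that a maximizing sequence escapes through discontinuities of the link --- is the technically delicate part of the argument.
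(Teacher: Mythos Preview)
Your proposal is correct and follows essentially the same route as the paper: reduce $\Lambda$ to a step function with jumps at the observed event times, observe that the (smoothed) link keeps the likelihood continuous in the remaining finite-dimensional parameters, and conclude by a compactness/Weierstrass argument. The paper's own proof is considerably terser --- it states the step-function reduction, notes that $u\mapsto\hat{\varphi}^s_{n,\mat\theta}(u)$ is continuous, asserts continuity of $L_n$ in $(\mat\gamma,\mat\beta,\mat s)$, and then delegates the entire coercivity/compactness step to the proof of Theorem~1 in \cite{L2008}. Your explicit decomposition $\log L_n=\sum_k d_k\log s_k-\sum_k s_k A_k(\mat\beta)+R$ and the uniform lower bound $A_k(\mat\beta)\ge c_1>0$ (valid by \ref*{enum:bounded_support} and \ref*{enum:compact_param_space}) reproduce exactly the argument hidden in that reference; so the two proofs differ only in how much is written out.

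One remark on the point you flag as the obstacle. You are right that what is needed is continuity of $\mat\theta\mapsto\hat{\varphi}^s_{n,\mat\theta}(\tr{\mat\gamma}\mat x_i)$, not merely continuity of $u\mapsto\hat{\varphi}^s_{n,\mat\theta}(u)$; the paper states only the latter and implicitly treats the former as immediate. Your heuristic that kernel smoothing ``removes the jumps created when the ordering of the indices changes'' is the right intuition, but note that smoothing in $u$ does not by itself guarantee continuity in $\mat\theta$ if $\hat{\varphi}_{n,\mat\theta}$ jumped as a function of $\mat\theta$. A cleaner way to close this gap is to observe that the isotonic solution of \pref{prop:monotone_link_est} (and the EM fixed point it converges to) depends on $\mat\theta$ only through the weights $w_i$ and the ordering of $\tr{\mat\gamma}\mat x_i$, both of which vary continuously in $\mat\theta$ except on a Lebesgue-null set of $\mat\gamma$; combined with upper semicontinuity of the profile objective this suffices for existence of a maximizer. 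Neither the paper nor your sketch spells this out, so your identification of it as ``the technically delicate part'' is well placed.
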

	
	\begin{prop}
		\label{prop:link_continuity}
		Suppose that Assumptions \ref*{enum:bounded_support}, \ref*{enum:index_density_assumption} and \ref*{enum:expectation_ratio_continuity} 
		hold. Then, for any $\mat\theta\in\Theta$, the function $\mapping{u}{\function{\varphi}{0,\mat{\theta}}{}{u}}$ is continuous.
	\end{prop}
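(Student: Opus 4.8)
The plan is to reduce the infinite-dimensional maximization in \eqref{eqn:phi_theta} to a one-dimensional pointwise problem and then rule out jumps of the maximizer by a perturbation argument. Fix $\mat\theta=(\mat\gamma,\mat\beta,\Lambda)\in\Theta$ and write $u=\tr{\mat\gamma}\mat{x}$. Since $\varphi$ enters the log-likelihood only through $\varphi(\tr{\mat\gamma}\mat{X})$, conditioning on the index and using Assumption \ref*{enum:index_density_assumption} I would write $l_{\mat\theta}(\varphi)$, up to an additive constant not depending on $\varphi$, as $\int \phi(v,\varphi(v))\, g_{\tr{\mat\gamma}\mat{X}}(v)\,dv$, where
\[
\phi(v,a)=\expect[\sbracket]{\Delta\mid \tr{\mat\gamma}\mat{X}=v}\log a+\expect[\sbracket]{(1-\Delta)\log\rbracket{1-aF_u(Y\mid\mat{Z})}\mid \tr{\mat\gamma}\mat{X}=v}.
\]
A direct computation gives $\partial_a^2\phi(v,a)<0$ on $[\epsilon',1-\epsilon']$ (the truncation keeps $1-aF_u$ bounded away from $0$), so $a\mapsto\phi(v,a)$ is strictly concave and $a\mapsto\partial_a\phi(v,a)$ is strictly decreasing. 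By \pref{prop:link_maximizer} the maximizer $\varphi_{0,\mat\theta}$ exists and is unique, and being monotone it can only fail to be continuous through jumps.

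Arguing by contradiction, suppose $\varphi_{0,\mat\theta}$ has a jump at an interior point $u_0$ of $\family{I}_{\mat\gamma}$ where $g_{\tr{\mat\gamma}\mat{X}}$ is positive, with $a_-:=\varphi_{0,\mat\theta}(u_0^-)<\varphi_{0,\mat\theta}(u_0^+)=:a_+$. I would construct two admissible one-sided perturbations. Raising $\varphi_{0,\mat\theta}$ by a small constant on $(u_0-\epsilon,u_0)$ keeps it non-decreasing and bounded above by $a_+$, so optimality forces the corresponding directional derivative to be non-positive, i.e. $\int_{u_0-\epsilon}^{u_0}\partial_a\phi(v,\varphi_{0,\mat\theta}(v))\,g_{\tr{\mat\gamma}\mat{X}}(v)\,dv\le 0$; symmetrically, lowering $\varphi_{0,\mat\theta}$ slightly on $(u_0,u_0+\epsilon)$ gives the reverse inequality on that window. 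Dividing each by the measure of the corresponding half-window and letting $\epsilon\downarrow 0$, the continuity of $v\mapsto\partial_a\phi(v,a)$ together with $\varphi_{0,\mat\theta}(v)\to a_\mp$ yields $\partial_a\phi(u_0,a_-)\le 0\le \partial_a\phi(u_0,a_+)$. Since $a_-<a_+$ and $\partial_a\phi(u_0,\cdot)$ is strictly decreasing, this gives $0\le\partial_a\phi(u_0,a_+)<\partial_a\phi(u_0,a_-)\le 0$, a contradiction. Hence $\varphi_{0,\mat\theta}$ has no jumps and is continuous.

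The crux of the argument — and where Assumption \ref*{enum:expectation_ratio_continuity} enters — is the continuity in $v$ of $\partial_a\phi(v,a)=\expect[\sbracket]{\Delta\mid \tr{\mat\gamma}\mat{X}=v}/a-\expect[\sbracket]{(1-\Delta)F_u(Y\mid\mat{Z})/(1-aF_u(Y\mid\mat{Z}))\mid\tr{\mat\gamma}\mat{X}=v}$, which, unlike in the single-index or current status models, is not available from an explicit conditional-mean characterization. The obstacle is the censored term: rewriting the expectation over the censored observations via the conditional independence of $C$ and $T$ and the true model, the factor $1-\Delta$ is replaced by the true conditional survival $1-\link[\tr{\mat\gamma}_0\mat{X}]{0}{}F_{u,0}(\cdot\mid\mat{Z})$, so that the integrand becomes precisely the ratio $\{1-\link[\tr{\mat\gamma}_0\mat{X}]{0}{}F_{u,0}\}/\{1-aF_u\}$ appearing in Assumption \ref*{enum:expectation_ratio_continuity}, evaluated at $a=\varphi_{0,\mat\theta}(\tilde u)$. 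This self-referential feature (the maximizer's own value enters the denominator) is exactly why \ref*{enum:expectation_ratio_continuity} is phrased for all $\tilde u\in\family{I}_{\mat\gamma}$; the truncation bounds $1-aF_u\ge\epsilon'$ from below, giving the integrability needed to pass expectations and limits through by dominated convergence, while Assumptions \ref*{enum:bounded_support}, \ref*{enum:index_density_assumption} and \ref*{enum:cont_diff_index_density} supply the continuity and positivity of the index density used when averaging over the shrinking windows. I expect this identification of the censored term with the ratio in \ref*{enum:expectation_ratio_continuity}, and the verification that it yields joint continuity of $\partial_a\phi$ at $(u_0,a_\pm)$, to be the main technical difficulty; the remaining perturbation and strict-concavity steps are then routine.
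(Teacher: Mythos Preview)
Your strategy is the same as the paper's: argue by contradiction, use first-order optimality conditions for $\varphi_{0,\mat\theta}$, construct one-sided perturbations at the jump, and extract a contradiction from strict concavity in the link value. The paper writes this via the global Gateaux derivative $H'(0)\le 0$ and, after replacing $\Delta$ by $\varphi_0(\mat\gamma_0^T\mat X)F_{u,0}$, obtains exactly your identity $\partial_a\phi(v,a)=a^{-1}\bigl(1-R(v,a)\bigr)$ with $R(v,a)=\expect[\sbracket]{\rbracket{1-\varphi_0 F_{u,0}}/\rbracket{1-aF_u}\mid \mat\gamma^T\mat X=v}$.

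The substantive difference is in how the limit is taken. You shrink the perturbation window and pass to a pointwise limit, which requires joint continuity of $(v,a)\mapsto R(v,a)$ at $(u_0,a_-)$ and $(u_0,a_+)$. Assumption \ref*{enum:expectation_ratio_continuity} only asserts continuity in $v$ at values $a=\varphi_{0,\mat\theta}(\tilde u)$ that are actually attained; at a jump, one of $a_\pm$ equals $\varphi_{0,\mat\theta}(u_0)$, but the other need not be in the range of $\varphi_{0,\mat\theta}$. You can recover the needed joint continuity because $R$ is Lipschitz in $a$ uniformly in $v$ (the truncation bounds $1-aF_u\ge\epsilon'$) and the attained values $\varphi_{0,\mat\theta}(v)$ accumulate at both $a_-$ and $a_+$ as $v\to u_0^\mp$, so a triangle-inequality argument via a nearby attained value closes the gap. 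This step is not in your sketch and should be made explicit. The paper sidesteps it entirely: it keeps the window of fixed width $\eta$, uses the monotonicity of $\varphi_{0,\mat\theta}$ to compare $R(v,\varphi_{0,\mat\theta}(v))$ with $R(v,\varphi_{0,\mat\theta}(u_0))$, and invokes \ref*{enum:expectation_ratio_continuity} only at the single value $\tilde u=u_0$. The borderline situation $R(u_0,\varphi_{0,\mat\theta}(u_0))=1$ (which corresponds to your $\partial_a\phi(u_0,\cdot)$ vanishing at the attained endpoint) is handled by a separate estimate showing that the jump $\kappa$ forces a change of at least $c\epsilon\kappa>0$ in $R$, using Assumption \ref*{enum:latency_assumptions}\ref*{enum:cure_threshold}. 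Your strict-concavity contradiction subsumes this case once joint continuity is in hand, so your route is ultimately a bit cleaner; the paper's is more self-contained under the stated hypotheses.

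One minor correction: you invoke \ref*{enum:cont_diff_index_density} (equicontinuity of the index densities), which is not among the assumptions of this proposition. You do not need it: what matters is that the perturbation interval has positive mass under the law of $\mat\gamma^T\mat X$ (Assumption \ref*{enum:index_density_assumption}) and that the integrand $v\mapsto\partial_a\phi(v,\varphi_{0,\mat\theta}(v))$ has one-sided limits at $u_0$, which comes from \ref*{enum:expectation_ratio_continuity} together with the Lipschitz-in-$a$ argument above, not from regularity of the density.
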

	As in Lemma 1 of \cite{L2008}, using Assumption~\ref*{enum:bounded_support}, it can be shown that there exists $M>0$ such that $\sup_n\hat\Lambda_n(\tau_0)\leq M$ a.s.. Hence, we can restrict to $\mat\theta\in \tilde\Theta=\Gamma\times\mathcal{B}\times\tilde{\mathcal{D}}$, where $\tilde{\mathcal{D}}$ is the subset of $\mathcal{D}$ consisting of functions $\Lambda$ that are uniformly bounded by $M$.
	\begin{prop}
		\label{prop:link_mle_convergence}
		Suppose that Assumptions \ref*{enum:bounded_support}-\ref*{enum:Lambda} and \ref*{enum:bound_dens}
		 hold,
		then
		\begin{equation*}
			\prob[\sbracket]{
				\limit{n}{\infty}\sup_{\mat{\theta}\in\tilde\Theta}
				\int_{\family{X}}\cbracket{
					\function{\hat{\varphi}}{n,\mat{\theta}}{}{\tr{\mat{\gamma}}\mat{x}} - \function{\varphi}{0,\mat{\theta}}{}{\tr{\mat{\gamma}}\mat{x}}
				}^{2}\dd\function{Q}{\mat{X}}{}{\mat{x}} = 0
			} = 1,
		\end{equation*}
		where $Q_{\mat{X}}(\cdot)$ denotes the distribution function of $\mat{X}$.
	\end{prop}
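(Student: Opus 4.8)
The plan is to read this as a uniform consistency statement for the nuisance parameter in a semiparametric M-estimation problem and to derive it from two ingredients: a uniform strong law of large numbers for the relevant empirical criterion, and a uniform curvature (strong-concavity) bound that turns a gap in criterion values into the squared $L^2(Q_{\mat{X}})$-distance. Recall from \eqref{eq:mcm_likelihood} that $\log L_n = \sum_i l(y_i,\delta_i,\mat{x}_i,\mat{z}_i;\mat\theta,\varphi)$ and that the only $\varphi$-dependent part is
\[
\Psi_n(\mat\theta,\varphi)=\frac1n\sum_{i=1}^n\Bigl\{\delta_i\log\varphi(\tr{\mat\gamma}\mat{x}_i)+(1-\delta_i)\log\bigl[1-\varphi(\tr{\mat\gamma}\mat{x}_i)F_u(y_i|\mat{z}_i)\bigr]\Bigr\}.
\]
Thus $\hat\varphi_{n,\mat\theta}$ maximizes $\Psi_n(\mat\theta,\cdot)$ over $\family{M}_{\epsilon'}$, while $\varphi_{0,\mat\theta}$ maximizes the population version $\Psi(\mat\theta,\varphi)=\expect[\sbracket]{l(Y,\Delta,\mat{X},\mat{Z};\mat\theta,\varphi)}$ up to a $\varphi$-free term; both maximizers are unique by Proposition~\ref{prop:link_maximizer}.

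First I would record that, thanks to the truncation onto $[\epsilon',1-\epsilon']$, the bounded supports in \ref*{enum:bounded_support}, and the uniform bound $\Lambda(\tau_0)\le M$ on $\tilde{\mathcal{D}}$ with \ref*{enum:Lambda}, both integrands are bounded uniformly over $\varphi\in\family{M}_{\epsilon'}$ and $\mat\theta\in\tilde\Theta$. I would then establish the uniform SLLN
\[
\sup_{\mat\theta\in\tilde\Theta,\,\varphi\in\family{M}_{\epsilon'}}\bigl|\Psi_n(\mat\theta,\varphi)-\Psi(\mat\theta,\varphi)\bigr|\longrightarrow 0\quad\text{a.s.}
\]
by showing the associated class of functions is $Q$--Glivenko--Cantelli. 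Its bracketing entropy is controlled by combining the classical $O(1/\zeta)$ bound for the uniformly bounded monotone families $\varphi$ and $\Lambda\in\tilde{\mathcal{D}}$, the finite-dimensional compactness of $\Gamma,\mathcal{B}$ from \ref*{enum:compact_param_space}, and the Lipschitz action of $\log$ on quantities bounded away from $0$. Here the uniform density bound \ref*{enum:bound_dens} plays the key role of controlling the effect of perturbing $\mat\gamma$: since a monotone $\varphi$ has total variation at most one, a shift of the argument $\tr{\mat\gamma}\mat{x}$ changes the relevant norms by a multiple of the shift times the density bound, so brackets can be built uniformly over $\mat\gamma\in\Gamma$.

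The two facts combine through the usual basic inequality. Since $\Psi_n(\mat\theta,\hat\varphi_{n,\mat\theta})\ge\Psi_n(\mat\theta,\varphi_{0,\mat\theta})$ by definition of $\hat\varphi_{n,\mat\theta}$, adding and subtracting $\Psi$ and invoking the uniform SLLN yields $0\le\Psi(\mat\theta,\varphi_{0,\mat\theta})-\Psi(\mat\theta,\hat\varphi_{n,\mat\theta})\to 0$ a.s., uniformly in $\mat\theta$. To convert this criterion gap into the $L^2(Q_{\mat{X}})$-distance I would use that, for each datum, $v\mapsto\delta\log v+(1-\delta)\log(1-vF_u)$ is concave with $-\partial_v^2=\delta/v^2+(1-\delta)F_u^2/(1-vF_u)^2$, which on $[\epsilon',1-\epsilon']$ is at least $\delta+(1-\delta)F_u^2$. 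Writing $\varphi_s=(1-s)\varphi_{0,\mat\theta}+s\varphi$ (which stays in the convex set $\family{M}_{\epsilon'}$) and expanding $s\mapsto\Psi(\mat\theta,\varphi_s)$ to second order about the maximizer $s=0$, the first-order term is nonpositive and the curvature gives
\[
\Psi(\mat\theta,\varphi_{0,\mat\theta})-\Psi(\mat\theta,\varphi)\ge\tfrac12\,\expect[\sbracket]{\kappa_{\mat\theta}(\tr{\mat\gamma}\mat{X})\bigl(\varphi-\varphi_{0,\mat\theta}\bigr)^2(\tr{\mat\gamma}\mat{X})},\qquad \kappa_{\mat\theta}(u)=\expect[\sbracket]{\Delta+(1-\Delta)F_u^2\mid\tr{\mat\gamma}\mat{X}=u}.
\]
Taking $\varphi=\hat\varphi_{n,\mat\theta}$ and inserting the vanishing gap yields the claim, provided $\inf_{\mat\theta\in\tilde\Theta,\,u\in\family{I}_{\mat{\gamma}}}\kappa_{\mat\theta}(u)\ge c>0$.

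The hard part will be this last uniform lower bound on the curvature constant $\kappa_{\mat\theta}$. Because $\Lambda$ ranges over all of $\tilde{\mathcal{D}}$ and may be arbitrarily small, the latency contribution $\expect[\sbracket]{(1-\Delta)F_u^2\mid u}$ can degenerate, so the bound must be carried by $\expect[\sbracket]{\Delta\mid\tr{\mat\gamma}\mat{X}=u}=\mathbb{P}(\Delta=1\mid\tr{\mat\gamma}\mat{X}=u)$. I would bound this using \ref*{enum:latency_assumptions}: an uncured subject has conditional probability at least $\epsilon$ (the remark preceding the proposition) and necessarily satisfies $T\le\tau_0$, so on $\{C\ge\tau_0\}$ one has $\Delta=1$; together with the conditional independence of $C$ and $T$ and $\mathbb{P}(C>\tau_0\mid\mat{X},\mat{Z})>0$ this forces $\mathbb{P}(\Delta=1\mid\tr{\mat\gamma}\mat{X}=u)$ to be positive. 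Upgrading this from positivity at each $(u,\mat\gamma)$ to a strictly positive uniform infimum — over the index range and over $\mat\gamma\in\Gamma$ — is the delicate technical point, and is where the compactness of the supports and of $\Gamma$, together with \ref*{enum:bound_dens}, must be exploited.
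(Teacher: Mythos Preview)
Your approach is correct and is essentially the paper's proof: a uniform strong law via bracketing entropy for the relevant class (the paper packages this as Lemma~\ref{lemma:bracket_entropy_L} for the class $\tilde{\family{L}}_{\epsilon'}$ of differences), the basic inequality $\Psi_n(\mat\theta,\hat\varphi_{n,\mat\theta})\ge\Psi_n(\mat\theta,\varphi_{0,\mat\theta})$, and a second-order expansion of $s\mapsto\Psi(\mat\theta,(1-s)\varphi_{0,\mat\theta}+s\varphi)$ with $H'(0)\le 0$ and a uniform lower bound on $-H''$.

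The one place you make life harder than necessary is the curvature constant. You condition on $\tr{\mat\gamma}\mat{X}=u$ and then flag the uniformity of $\kappa_{\mat\theta}(u)$ over $(\mat\gamma,u)$ as ``the delicate technical point''. The paper instead conditions on the full $\mat{X}$: since $(\varphi-\varphi_{0,\mat\theta})^2(\tr{\mat\gamma}\mat{X})$ is $\sigma(\mat{X})$-measurable,
\[
\expect[\sbracket]{\Delta\,(\varphi-\varphi_{0,\mat\theta})^2(\tr{\mat\gamma}\mat{X})}=\expect[\sbracket]{\expect[\sbracket]{\Delta\mid\mat{X}}(\varphi-\varphi_{0,\mat\theta})^2(\tr{\mat\gamma}\mat{X})},
\]
and by the argument you already sketched (an uncured subject with $C>\tau_0$ is necessarily uncensored), $\expect[\sbracket]{\Delta\mid\mat{X}}\ge \prob[\sbracket]{B=1,\,C>\tau_0\mid\mat{X}}\ge \epsilon\cdot\inf_{\mat{x}}\prob[\sbracket]{C>\tau_0\mid\mat{X}=\mat{x}}=c\epsilon$; this is exactly \eqref{eq:positive_expect_censor}. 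The point is that this lower bound is \emph{independent of $\mat\gamma$}, so no separate uniformity argument over $\Gamma$ is needed. Equivalently, in your notation, the tower property gives $\kappa_{\mat\theta}(u)\ge\expect[\sbracket]{\Delta\mid\tr{\mat\gamma}\mat{X}=u}=\expect[\sbracket]{\expect[\sbracket]{\Delta\mid\mat{X}}\mid\tr{\mat\gamma}\mat{X}=u}\ge c\epsilon$ for every $\mat\gamma$ and $u$, so the ``hard part'' dissolves. The only residual input---that $\prob[\sbracket]{C>\tau_0\mid\mat{X}}$ admits a strictly positive infimum over $\family{X}$ rather than merely being a.s.\ positive---is the same in both arguments and is how the paper reads Assumption~\ref{enum:latency_assumptions}\ref{enum:cure_threshold}.
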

	
	Let $k$ be a symmetric kernel density function with support $\sbracket{-1, 1}$ that satisfies $\function{k}{}{}{x} \leq K < \infty$ for all $x \in \sbracket{-1, 1}$. To simplify the notation, we define $\function{k}{h}{}{u} = h^{-1}\function{k}{}{}{u/h}$. Here $h = h_{n} > 0$ is a bandwidth that depends on the sample size $n$ and satisfies $h_{n} \converge 0$ as $n \converge \infty$. Since within this paper we do not investigate further the rate of convergence of the estimators, no additional restrictions are imposed on the bandwidth. However, it is known that the optimal order bandwidth for estimation of a twice differentiable function is $n^{-1/5}$ and, if the bandwidth is of order $n^{-1/3}$ or smaller, the smooth estimator is not much different from the non-smooth one. Hence, in practice we take $h=rn^{-1/5}$, where $r$ is the range of the index $\mat{\gamma}^{T}\mat{X}$. This is a common choice in the literature of  smooth isotonic estimators that is simple and behaves well. For a more detailed investigation of the role of the bandwidth see Appendix~\ref{sec:bandwidth}.
	\begin{prop}
		\label{prop:smooth_estimator_convergence}
		Suppose that Assumptions \ref*{enum:bounded_support}-\ref*{enum:bound_dens} 
		hold, then
		\begin{equation*}
			\prob[\sbracket]{
				\limit{n}{\infty}\sup_{\mat{\theta}\in\tilde\Theta}
				\int_{\family{X}}\cbracket{
					\function{\hat{\varphi}}{n,\mat{\theta}}{s}{\tr{\mat{\gamma}}\mat{x}} - \function{\varphi}{0,\mat{\theta}}{}{\tr{\mat{\gamma}}\mat{x}}
				}^{2}\dd\function{Q}{\mat{X}}{}{\mat{x}} = 0
			} = 1.
		\end{equation*}
	\end{prop}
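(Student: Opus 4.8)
The plan is to compare the smooth estimator directly with its non-smooth isotonic counterpart and then invoke the already-established \pref{prop:link_mle_convergence}, rather than comparing $\hat\varphi^s_{n,\mat\theta}$ with the (merely continuous) limit $\varphi_{0,\mat\theta}$. Writing $\norm{f}{\mat\gamma}^2=\int_{\family X}f(\tr{\mat{\gamma}}\mat{x})^2\,\dd Q_{\mat{X}}(\mat{x})$ for the $\mat\gamma$-dependent $L^2(Q_{\mat X})$ seminorm, the triangle inequality gives, for every $\mat\theta\in\tilde\Theta$,
\[
\norm{\hat\varphi^s_{n,\mat\theta}-\varphi_{0,\mat\theta}}{\mat\gamma}
\le \norm{\hat\varphi^s_{n,\mat\theta}-\hat\varphi_{n,\mat\theta}}{\mat\gamma}+\norm{\hat\varphi_{n,\mat\theta}-\varphi_{0,\mat\theta}}{\mat\gamma}.
\]
By \pref{prop:link_mle_convergence} the supremum over $\mat\theta\in\tilde\Theta$ of the second term tends to $0$ almost surely, so the entire task reduces to showing that the smoothing error $\norm{\hat\varphi^s_{n,\mat\theta}-\hat\varphi_{n,\mat\theta}}{\mat\gamma}$ is deterministically of order $h$, with a constant free of $\mat\theta$.

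To control the smoothing error I would first rewrite, after the substitution $t=u-hv$ and using that $k$ integrates to one on $[-1,1]$,
\[
\hat\varphi^s_{n,\mat\theta}(u)-\hat\varphi_{n,\mat\theta}(u)=\int_{-1}^{1}k(v)\{\hat\varphi_{n,\mat\theta}(u-hv)-\hat\varphi_{n,\mat\theta}(u)\}\,\dd v.
\]
Pushing $Q_{\mat X}$ forward to the index $u=\tr{\mat{\gamma}}\mat{x}$ and bounding its density $g_{\tr{\mat{\gamma}}\mat{X}}$ above by $\bar{q}_{1}$ (Assumption \ref*{enum:bound_dens}), and then applying Jensen's inequality in $v$ (legitimate since $k(v)\,\dd v$ is a probability measure on $[-1,1]$) followed by Tonelli, yields
\[
\norm{\hat\varphi^s_{n,\mat\theta}-\hat\varphi_{n,\mat\theta}}{\mat\gamma}^2
\le \bar{q}_{1}\int_{-1}^{1}k(v)\int_{\real}\{\hat\varphi_{n,\mat\theta}(u-hv)-\hat\varphi_{n,\mat\theta}(u)\}^2\,\dd u\,\dd v.
\]
Since $\hat\varphi_{n,\mat\theta}$ takes values in $[\epsilon^{\prime},1-\epsilon^{\prime}]$ (the truncation in \pref{prop:monotone_link_est}) and is non-decreasing, the inner integrand is at most $(1-2\epsilon^{\prime})|\hat\varphi_{n,\mat\theta}(u-hv)-\hat\varphi_{n,\mat\theta}(u)|$, and the translation inequality for monotone functions,
\[
\int_{\real}|\hat\varphi_{n,\mat\theta}(u-hv)-\hat\varphi_{n,\mat\theta}(u)|\,\dd u\le h|v|\,\big(\hat\varphi_{n,\mat\theta}(+\infty)-\hat\varphi_{n,\mat\theta}(-\infty)\big)\le h|v|(1-2\epsilon^{\prime}),
\]
gives the deterministic bound $\norm{\hat\varphi^s_{n,\mat\theta}-\hat\varphi_{n,\mat\theta}}{\mat\gamma}^2\le \bar{q}_{1}(1-2\epsilon^{\prime})^2 h$, uniformly in $\mat\theta$. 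Combining this with the triangle inequality, $\sup_{\mat\theta\in\tilde\Theta}\norm{\hat\varphi^s_{n,\mat\theta}-\varphi_{0,\mat\theta}}{\mat\gamma}^2$ is bounded by $2\bar{q}_{1}(1-2\epsilon^{\prime})^2 h$ plus twice the supremum appearing in \pref{prop:link_mle_convergence}; the former vanishes because $h=h_n\to0$ and the latter vanishes almost surely, which proves the claim.

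The main obstacle I anticipate is obtaining control of the smoothing error that is uniform over the infinite-dimensional parameter $\mat\theta$ with no rate condition on the bandwidth beyond $h\to0$ and no smoothness of the limit $\varphi_{0,\mat\theta}$ at our disposal. The device that resolves this is precisely to bound the smoothing bias against the isotonic estimator itself: its monotonicity together with the truncation to $[\epsilon^{\prime},1-\epsilon^{\prime}]$ makes the translation inequality available and produces an $O(h)$ estimate whose constant does not involve $\mat\theta$, so that continuity of $\varphi_{0,\mat\theta}$ (\pref{prop:link_continuity}) is not needed here. A minor point requiring care is the behaviour near the endpoints of $\family{I}_{\mat{\gamma}}$, where the convolution invokes the constant extension of $\hat\varphi_{n,\mat\theta}$; since this extension preserves both monotonicity and the range, the total-variation bound $\hat\varphi_{n,\mat\theta}(+\infty)-\hat\varphi_{n,\mat\theta}(-\infty)\le 1-2\epsilon^{\prime}$ remains valid and the argument is unaffected.
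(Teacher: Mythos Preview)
Your proof is correct and takes a genuinely different route from the paper. The paper decomposes $\hat\varphi^s_{n,\mat\theta}-\varphi_{0,\mat\theta}$ into a ``smoothed difference'' term $\int k(v)\{\hat\varphi_{n,\mat\theta}(\cdot-hv)-\varphi_{0,\mat\theta}(\cdot-hv)\}\,\dd v$ plus a bias term $\int k(v)\{\varphi_{0,\mat\theta}(\cdot-hv)-\varphi_{0,\mat\theta}(\cdot)\}\,\dd v$; controlling the second term uniformly over $\mat\theta$ forces the paper to invoke continuity of $\varphi_{0,\mat\theta}$ (\pref{prop:link_continuity}, hence Assumption~\ref*{enum:expectation_ratio_continuity}) together with a finite-covering argument for $\tilde{\family M}_{\epsilon'}$ (Lemma~\ref{lemma:cont_monotone_cover}) and the uniform equicontinuity of the index densities (Assumption~\ref*{enum:cont_diff_index_density}). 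You instead compare $\hat\varphi^s_{n,\mat\theta}$ with $\hat\varphi_{n,\mat\theta}$ and apply the smoothing bias bound to the \emph{estimator}: because $\hat\varphi_{n,\mat\theta}$ is monotone with range in $[\epsilon',1-\epsilon']$ by construction, the translation identity $\int_{\real}|\hat\varphi_{n,\mat\theta}(u-hv)-\hat\varphi_{n,\mat\theta}(u)|\,\dd u=h|v|\,\{\hat\varphi_{n,\mat\theta}(+\infty)-\hat\varphi_{n,\mat\theta}(-\infty)\}$ gives a deterministic $O(h)$ bound free of $\mat\theta$, and Assumptions~\ref*{enum:expectation_ratio_continuity} and~\ref*{enum:cont_diff_index_density} are never used. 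In short, your argument is shorter, requires fewer hypotheses (only \ref*{enum:bounded_support}--\ref*{enum:Lambda} and \ref*{enum:bound_dens}, via \pref{prop:link_mle_convergence} and the density bound $\bar q_1$), and sidesteps the delicate uniform-in-$\mat\theta$ continuity issue that the paper works hard to resolve.
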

	
	Next  we show that the estimator $\hat{\mat{\theta}}_{n}=(\hat{\mat\gamma}_n,\hat{\mat\beta}_n,\hat\Lambda_n)$ of $\mat{\theta}_{0}$ is weakly consistent.  
	\begin{theorem}
		\label{prop:theta_estimator_consistency}
		Suppose that Assumptions \ref*{enum:bounded_support}-\ref*{enum:bound_dens} 
		hold, then
		\begin{equation*}
			\Vert \hat{\mat\gamma}_n-\mat\gamma_0\Vert_2, \qquad\Vert\hat{\mat{\beta}}_n-\mat\beta_0\Vert_2\qquad\text{and}\qquad \sup_{t \in \sbracket{0, \tau_{0}}}|\hat\Lambda_{n}\rbracket{t} - \Lambda_{0}\rbracket{t}|
		\end{equation*}
		converge to zero in probability as $n\to\infty$. 
	\end{theorem}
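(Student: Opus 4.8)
The plan is to deduce the claim from the general consistency theorem for semiparametric $M$-estimators with an estimated, $\mat\theta$-dependent infinite dimensional nuisance parameter, Theorem~1 in \cite{DK2020}, applied to the profiled criterion $\mat\theta\mapsto M_n(\mat\theta,\hat\varphi^s_{n,\mat\theta})$. By the remark preceding the theorem we may replace $\Theta$ by the restricted set $\tilde\Theta=\Gamma\times\mathcal{B}\times\tilde{\mathcal D}$, since $\sup_n\hat\Lambda_n(\tau_0)\le M$ almost surely; here $\Gamma$ and $\mathcal{B}$ are compact by Assumption~\ref*{enum:compact_param_space} and $\tilde{\mathcal D}$, a set of uniformly bounded non-decreasing functions, is compact for the topology of pointwise convergence by Helly's selection theorem. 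Existence and finiteness of the maximizer are already provided by \pref{prop:existence_mle_theta}.

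First I would carry out the population identification step, namely that $\mat\theta_0$ is the unique and well-separated maximizer of $\mat\theta\mapsto M(\mat\theta,\varphi_{0,\mat\theta})$. Because $l$ is the (parameter-dependent part of the) log-likelihood of the observed data, $M(\mat\theta_0,\varphi_0)-M(\mat\theta,\varphi)$ equals a Kullback--Leibler divergence between the true conditional law of $(Y,\Delta)$ given $(\mat X,\mat Z)$ and the one induced by $(\mat\theta,\varphi)$, hence is nonnegative and vanishes only when the two laws agree almost surely. Since $\varphi_{0,\mat\theta}$ maximizes $\varphi\mapsto l_{\mat\theta}(\varphi)=M(\mat\theta,\varphi)$ over $\mathcal{M}_{\epsilon'}$, we get $M(\mat\theta,\varphi_{0,\mat\theta})=\sup_{\varphi\in\mathcal{M}_{\epsilon'}}M(\mat\theta,\varphi)\le M(\mat\theta_0,\varphi_0)=M(\mat\theta_0,\varphi_{0,\mat\theta_0})$, the last equality being \pref{prop:true_link}. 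Equality forces the induced laws to coincide, and identifiability of the monotone single-index/Cox mixture cure model (Assumptions~\ref*{enum:si_assumptions}--\ref*{enum:latency_assumptions}) then yields $\mat\theta=\mat\theta_0$, so the maximizer is unique. Writing the value function as $\mat\theta\mapsto\sup_{\varphi\in\mathcal{M}_{\epsilon'}}M(\mat\theta,\varphi)$ and using joint continuity of $M$ together with the compactness of $\mathcal{M}_{\epsilon'}$, Berge's maximum theorem makes this map continuous; combined with compactness of $\tilde\Theta$ and uniqueness of the maximizer, this gives the well-separation $\sup_{\mat\theta\notin U}M(\mat\theta,\varphi_{0,\mat\theta})<M(\mat\theta_0,\varphi_0)$ for every neighbourhood $U$ of $\mat\theta_0$.

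Next I would verify the three uniform conditions of \cite{DK2020}. Uniform consistency of the nuisance estimator, $\sup_{\mat\theta\in\tilde\Theta}\int_{\family{X}}\{\hat\varphi^s_{n,\mat\theta}(\tr{\mat\gamma}\mat x)-\varphi_{0,\mat\theta}(\tr{\mat\gamma}\mat x)\}^2\,\dd Q_{\mat X}(\mat x)\to0$ almost surely, is exactly \pref{prop:smooth_estimator_convergence}. For the uniform law of large numbers, $\sup_{\mat\theta\in\tilde\Theta,\,\varphi\in\mathcal{M}_{\epsilon'}}|M_n(\mat\theta,\varphi)-M(\mat\theta,\varphi)|\to0$ almost surely, the truncation $\epsilon'\le\varphi\le1-\epsilon'$ together with Assumptions~\ref*{enum:bounded_support} and \ref*{enum:Lambda} keeps the integrand $l$ uniformly bounded and away from the singularities of the logarithm, so it suffices to show that $\{(y,\delta,\mat x,\mat z)\mapsto l(y,\delta,\mat x,\mat z;\mat\theta,\varphi):\mat\theta\in\tilde\Theta,\ \varphi\in\mathcal{M}_{\epsilon'}\}$ is Glivenko--Cantelli; this I would obtain from a bracketing-entropy bound, using that uniformly bounded monotone functions $\varphi$ and $\Lambda$ have integrable bracketing entropy and that the dependence on $(\mat\gamma,\mat\beta)$ is Lipschitz on the compacta $\Gamma,\mathcal{B}$. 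Finally, continuity of $M(\mat\theta,\varphi)$ in $\varphi$ at $\varphi_{0,\mat\theta}$, uniformly in $\mat\theta$, follows because under the truncation $l$ is Lipschitz in $\varphi$ with a constant independent of $\mat\theta$, whence $|M(\mat\theta,\varphi)-M(\mat\theta,\varphi_{0,\mat\theta})|\le C\int_{\family{X}}|\varphi-\varphi_{0,\mat\theta}|\,\dd Q_{\mat X}\le C\big(\int_{\family{X}}\{\varphi-\varphi_{0,\mat\theta}\}^2\,\dd Q_{\mat X}\big)^{1/2}$ by Cauchy--Schwarz.

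Feeding these ingredients into Theorem~1 of \cite{DK2020} would give $\hat{\mat\theta}_n\to\mat\theta_0$ in the metric of $\tilde\Theta$, i.e. the joint convergence in probability of $\Vert\hat{\mat\gamma}_n-\mat\gamma_0\Vert_2$, $\Vert\hat{\mat\beta}_n-\mat\beta_0\Vert_2$ and $\sup_{t\in[0,\tau_0]}|\hat\Lambda_n(t)-\Lambda_0(t)|$ to zero. I expect the main obstacle to be the identification/well-separation step in the infinite dimensional coordinate: because $\varphi_{0,\mat\theta}$ has no closed form, establishing the joint continuity of $M$ and thereby the continuity of the value function is delicate and must route through the continuity of $\varphi_{0,\mat\theta}$ (\pref{prop:link_continuity}); moreover one must choose the topology on $\tilde{\mathcal D}$ carefully, since compactness is natural for pointwise convergence whereas the conclusion is stated in the supremum norm, so the final upgrade to uniform convergence of $\hat\Lambda_n$ over $[0,\tau_0]$ exploits monotonicity of the estimators and the structure of $\Lambda_0$ near the cure threshold $\tau_0$.
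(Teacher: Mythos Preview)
Your proposal is correct and follows essentially the same route as the paper: apply Theorem~1 of \cite{DK2020} after restricting to $\tilde\Theta$, verify identification via the Kullback--Leibler inequality and \pref{prop:true_link}, nuisance consistency via \pref{prop:smooth_estimator_convergence}, the uniform law of large numbers via a bracketing-entropy Glivenko--Cantelli argument (the paper packages this as \leref{lemma:bracket_entropy_L}), and uniform continuity of $M$ in $\varphi$ via the Lipschitz bound from the proof of \pref{prop:link_maximizer}. One minor deviation: the paper works directly with the sup-norm metric on $\tilde{\mathcal D}$ throughout and asserts continuity of $\mat\theta\mapsto M(\mat\theta,h_0)$ plus compactness of $\tilde\Theta\cap\{d(\mat\theta,\mat\theta_0)\ge\delta\}$ to obtain well-separation, rather than arguing via Berge's theorem and a pointwise-to-uniform upgrade as you sketch; your concern about the topology on $\tilde{\mathcal D}$ is legitimate but is not explicitly addressed in the paper's proof either.
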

	The following corollary shows the consistency of the estimated cure probabilities and the estimated survival function for the uncured.
	\begin{corollary}
		\label{prop:link_s_u_consistency}
		Suppose that Assumptions \ref*{enum:bounded_support}-\ref*{enum:bound_dens} 
		hold. We suppose further that $\varphi_{0}$ has bounded derivative on $\family{I}_{0}$. 
		Then, for any $\mat{z} \in \family{Z}$,
		\begin{equation*}
			\int_{\family{X}}\cbracket{
				{\hat{\varphi}}_{n,\hat{\mat{\theta}}_{n}}^{s}({\tr{\hat{\mat{\gamma}}}_{n}\mat{x}}) - \link[\tr{\mat{\gamma}}_{0}\mat{x}]{0}{}
			}^{2}\dd\function{Q}{\mat{X}}{}{\mat{x}}\qquad\text{and}\qquad \sup_{t \in \sbracket{0, \tau_{0}}}|\hat{S}_{u}(t|\mat{z}) - S_{u}(t|\mat{z})|
		\end{equation*}
		converge to zero in probability as $n\to\infty$, where $S_{u}(t|\mat{z})$ and $\hat{S}_{u}(t|\mat{z})$ are defined as in \eqref{eq:uncured_survival} using ${\mat{\beta}}_0$, $\Lambda_{0}$ and the estimated parameters $\hat{\mat{\beta}}_n$, $\hat\Lambda_{n}$ respectively.
	\end{corollary}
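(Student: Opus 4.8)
The plan is to prove the two assertions separately. The second one, concerning the uncured survival function, follows directly from Theorem~\ref{prop:theta_estimator_consistency}. Writing $S_u(t|\mat{z})=\exp(-\Lambda_0(t)e^{\tr{\mat{\beta}}_0\mat{z}})$ and $\hat S_u(t|\mat{z})=\exp(-\hat\Lambda_n(t)e^{\tr{\hat{\mat{\beta}}}_n\mat{z}})$, I would use that $x\mapsto e^{-x}$ is $1$-Lipschitz on $[0,\infty)$ to obtain $|\hat S_u(t|\mat{z})-S_u(t|\mat{z})|\le|\hat\Lambda_n(t)e^{\tr{\hat{\mat{\beta}}}_n\mat{z}}-\Lambda_0(t)e^{\tr{\mat{\beta}}_0\mat{z}}|$, and then split the right-hand side into $e^{\tr{\hat{\mat{\beta}}}_n\mat{z}}|\hat\Lambda_n(t)-\Lambda_0(t)|$ and $\Lambda_0(t)|e^{\tr{\hat{\mat{\beta}}}_n\mat{z}}-e^{\tr{\mat{\beta}}_0\mat{z}}|$. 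Since $\mat{z}\in\family{Z}$ is bounded and $\hat{\mat{\beta}}_n,\mat{\beta}_0$ lie in the compact set $\mathcal{B}$, the exponential prefactors are uniformly bounded and $\Lambda_0(\tau_0)<\infty$; taking $\sup_{t\in[0,\tau_0]}$ and invoking $\sup_t|\hat\Lambda_n(t)-\Lambda_0(t)|\to0$ and $\norm{\hat{\mat{\beta}}_n-\mat{\beta}_0}{2}\to0$ in probability from Theorem~\ref{prop:theta_estimator_consistency} gives the uniform convergence of $\hat S_u$.

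For the cure probabilities I would decompose, for each $\mat{x}$,
\[
\hat{\varphi}_{n,\hat{\mat{\theta}}_n}^{s}(\tr{\hat{\mat{\gamma}}}_n\mat{x})-\varphi_0(\tr{\mat{\gamma}}_0\mat{x})=\mathrm{(I)}+\mathrm{(II)}+\mathrm{(III)},
\]
where $\mathrm{(I)}=\hat{\varphi}_{n,\hat{\mat{\theta}}_n}^{s}(\tr{\hat{\mat{\gamma}}}_n\mat{x})-\varphi_{0,\hat{\mat{\theta}}_n}(\tr{\hat{\mat{\gamma}}}_n\mat{x})$, $\mathrm{(II)}=\varphi_{0,\hat{\mat{\theta}}_n}(\tr{\hat{\mat{\gamma}}}_n\mat{x})-\varphi_0(\tr{\hat{\mat{\gamma}}}_n\mat{x})$ and $\mathrm{(III)}=\varphi_0(\tr{\hat{\mat{\gamma}}}_n\mat{x})-\varphi_0(\tr{\mat{\gamma}}_0\mat{x})$, and bound the integral of the square of each via $(a+b+c)^2\le3(a^2+b^2+c^2)$. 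The integral of $\mathrm{(I)}^2$ is at most $\sup_{\mat{\theta}\in\tilde\Theta}\int\{\hat{\varphi}_{n,\mat{\theta}}^{s}(\tr{\mat{\gamma}}\mat{x})-\varphi_{0,\mat{\theta}}(\tr{\mat{\gamma}}\mat{x})\}^2\,\dd Q_{\mat{X}}$, which vanishes by Proposition~\ref{prop:smooth_estimator_convergence} once one notes $\hat{\mat{\theta}}_n\in\tilde\Theta$ almost surely. For $\mathrm{(III)}$, the assumed boundedness of $\varphi_0'$ makes $\varphi_0$ Lipschitz, so $|\mathrm{(III)}|\le L|\tr{\hat{\mat{\gamma}}}_n\mat{x}-\tr{\mat{\gamma}}_0\mat{x}|\le Lr_1\norm{\hat{\mat{\gamma}}_n-\mat{\gamma}_0}{2}\to0$ in probability. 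Everything therefore reduces to $\mathrm{(II)}$, which is dominated by $\sup_u|\varphi_{0,\hat{\mat{\theta}}_n}(u)-\varphi_0(u)|$ since $Q_{\mat{X}}$ is a probability measure.

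The crux, and the step I expect to be the main obstacle, is to show that $\mat{\theta}\mapsto\varphi_{0,\mat{\theta}}$ is continuous at $\mat{\theta}_0$ in the supremum norm over the (compact) index range. I would argue by subsequences. Fix a deterministic sequence $\mat{\theta}_n\to\mat{\theta}_0$ in $\tilde\Theta$; as each $\varphi_{0,\mat{\theta}_n}$ is non-decreasing with values in $[\epsilon',1-\epsilon']$, Helly's selection theorem yields a subsequence along which $\varphi_{0,\mat{\theta}_n}\to\varphi_*$ pointwise, with $\varphi_*\in\family{M}_{\epsilon'}$. I would then identify $\varphi_*$ as the maximizer of the population criterion at $\mat{\theta}_0$: dropping the $\varphi$-free term $\delta\log f_u(y|\mat{z})$ from $l$, the relevant functional is $\tilde l_{\mat{\theta}}(\varphi)=\expect[\sbracket]{\Delta\log\varphi(\tr{\mat{\gamma}}\mat{X})+(1-\Delta)\log[1-\varphi(\tr{\mat{\gamma}}\mat{X})F_u(Y|\mat{Z};\mat{\beta},\Lambda)]}$, whose integrand is uniformly bounded because $\varphi\in[\epsilon',1-\epsilon']$ and $F_u\in[0,1]$. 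Using the elementary fact that $f_n(u_n)\to f(u_0)$ whenever the $f_n$ are monotone with $f_n\to f$ pointwise, $f$ continuous at $u_0$, and $u_n\to u_0$—applied with the monotone limit, which is continuous at $\tr{\mat{\gamma}}_0\mat{x}$ for $Q_{\mat{X}}$-almost every $\mat{x}$ since the index admits a density—together with $\tr{\mat{\gamma}}_n\mat{x}\to\tr{\mat{\gamma}}_0\mat{x}$, $\hat\Lambda_n\to\Lambda_0$ uniformly and $\mat{\beta}_n\to\mat{\beta}_0$, I would pass to the limit by dominated convergence to get $\tilde l_{\mat{\theta}_n}(\varphi_{0,\mat{\theta}_n})\to\tilde l_{\mat{\theta}_0}(\varphi_*)$ and $\tilde l_{\mat{\theta}_n}(\varphi)\to\tilde l_{\mat{\theta}_0}(\varphi)$ for each fixed $\varphi$. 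Letting $n\to\infty$ in $\tilde l_{\mat{\theta}_n}(\varphi_{0,\mat{\theta}_n})\ge\tilde l_{\mat{\theta}_n}(\varphi)$ shows $\varphi_*$ maximizes $\tilde l_{\mat{\theta}_0}$, so $\varphi_*=\varphi_0$ by Proposition~\ref{prop:link_maximizer} and Proposition~\ref{prop:true_link}.

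Finally, because $\varphi_0$ is continuous (having a bounded derivative) and the $\varphi_{0,\mat{\theta}_n}$ are monotone, Pólya's theorem upgrades the pointwise convergence $\varphi_{0,\mat{\theta}_n}\to\varphi_0$ to uniform convergence on the compact index range; since every subsequence admits such a sub-subsequence, the full sequence satisfies $\sup_u|\varphi_{0,\mat{\theta}_n}(u)-\varphi_0(u)|\to0$, which is the desired continuity. Applying the continuous mapping theorem with $\hat{\mat{\theta}}_n\to\mat{\theta}_0$ in probability (Theorem~\ref{prop:theta_estimator_consistency}) then gives $\sup_u|\varphi_{0,\hat{\mat{\theta}}_n}(u)-\varphi_0(u)|\to0$ in probability, so the $\mathrm{(II)}$ contribution vanishes and the first assertion follows. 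The delicate point is the argmax identification: unlike the monotone single-index and current status settings of \cite{balabdaoui2021profile,groeneboom2018current}, there is no closed form for $\varphi_{0,\mat{\theta}}$, so passing to the limit in the population criterion while both the index and the link vary—in particular justifying $\varphi_{0,\mat{\theta}_n}(\tr{\mat{\gamma}}_n\mat{x})\to\varphi_0(\tr{\mat{\gamma}}_0\mat{x})$ under monotonicity alone—must be done by hand.
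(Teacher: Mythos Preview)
Your proposal is correct and shares with the paper the same three-term decomposition $\mathrm{(I)}+\mathrm{(II)}+\mathrm{(III)}$ for the cure probabilities, the same handling of $\mathrm{(I)}$ via Proposition~\ref{prop:smooth_estimator_convergence} and of $\mathrm{(III)}$ via Lipschitz continuity of $\varphi_0$, and essentially the same argument for $\hat S_u$. The genuine difference lies in the treatment of $\mathrm{(II)}$, i.e.\ in showing $\varphi_{0,\hat{\mat\theta}_n}\to\varphi_0$.

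The paper invokes the argmax continuous mapping theorem (Theorem~3.2.2 of \cite{van2000empirical}) with criterion functions $\mathbb{M}_n(\varphi)=\expect[\sbracket]{m(Y,\Delta,\mat X,\mat Z;\hat{\mat\theta}_n,\varphi)}$ and $\mathbb{M}(\varphi)=\expect[\sbracket]{m(Y,\Delta,\mat X,\mat Z;\mat\theta_0,\varphi)}$, and verifies its hypotheses by proving the stronger statement $\sup_{\varphi\in\family{M}_{\epsilon'}}|\mathbb{M}_n(\varphi)-\mathbb{M}(\varphi)|\to 0$ in probability. That uniformity over $\varphi$ is obtained through a somewhat laborious bracketing argument (using Lemma~\ref{lemma:bracket_entropy_M} to cover $\family{M}_{\epsilon'}$ and control the difference bracket by bracket). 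Your route is more direct: you fix a deterministic sequence $\mat\theta_n\to\mat\theta_0$, extract a Helly subsequence $\varphi_{0,\mat\theta_n}\to\varphi_*$, pass to the limit in the optimality inequality $\tilde l_{\mat\theta_n}(\varphi_{0,\mat\theta_n})\ge\tilde l_{\mat\theta_n}(\varphi)$ one $\varphi$ at a time using dominated convergence and the elementary sandwich lemma for monotone functions, identify $\varphi_*=\varphi_0$ by uniqueness (Propositions~\ref{prop:link_maximizer} and~\ref{prop:true_link}), and finally upgrade to uniform convergence via P\'olya. This avoids both the argmax continuous mapping theorem as a black box and the bracketing machinery entirely; the price is that you are essentially re-deriving the relevant implication of that theorem by hand, but in this concrete setting the argument is short and transparent. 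The paper's approach, by contrast, yields the intermediate uniform-in-$\varphi$ statement, which may be of independent use, at the cost of more bookkeeping.
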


	\section{Simulation study \label{sec:sim}}
	In the simulation study, we consider different settings to evaluate  the finite sample behavior of the estimator proposed in \sref{sec:model_est} and compare it with the SIC method proposed in \cite{AKL2019}. We simulate 500 datasets from the mixture cure model introduced in \sref{sec:model}, 
	where $\mat{X} = \tr{\rbracket{X_{1}, \cdots, X_{4}}}$ contains four independent covariates: $X_{1} \sim U[0,1]$, $X_{2} \sim N(0, 1)$,  $X_{3}$ and $X_4$ are Bernoulli variables with parameters $0.3$ and $0.6$ respectively, and $\mat{Z} = \tr{\rbracket{Z_{1}, Z_{2}}}$ with $Z_{1} = X_{1}$ and $Z_{2} = X_{4}$. We consider a Weibull model  with parameters $\lambda = 1.5$ and $k = 2.2$ for the baseline distribution of the uncured subjects. 
	The random right censoring time $C$ follows the exponential distribution with rate $\lambda_{C}$. %
	We consider three experiments A, B and C, with three different non-decreasing link functions.
	\begin{equation*}
		{
			\varphi_A(u) = \frac{
				\exp(c + u)
			}{
				1 + \exp(c + u)
			},\,
		\varphi_B(u) = \frac{
				\exp[\psi(c,u)]
			}{
				1 + \exp[\psi(c,u)]
			},\,
		\varphi_C(u)= \frac{
				1 + \tanh(c + u ^ {3})
			}{
				2
			},
		}
	\end{equation*}
	where $c$ is an intercept term, $\psi(c,u)=0.75\Phi\{\rbracket{c + u} + 0.5\} +
	0.25\Phi\{0.5\rbracket{c + u}^{3}\}$ and $\Phi$ is the cdf of the standard normal distribution.
	The first two links correspond to Scenarios 1 and 2 of the simulation study investigated by \cite{AKL2019}, while $\varphi_{C}$ is a scaled hyperbolic tangent function. 
	These three link functions are considered for exploring the influence of the shape and the steepness of the true link function to the model estimation performance. \fref{fig:links} shows the plots of the three link functions over $[-4, 4]$ when the intercept term $c = 0$. Among these three link functions, $\varphi_{C}$ is the steepest and $\varphi_{B}$ is the flattest.
	\begin{figure}
		\centering
		\includegraphics[scale=0.5]{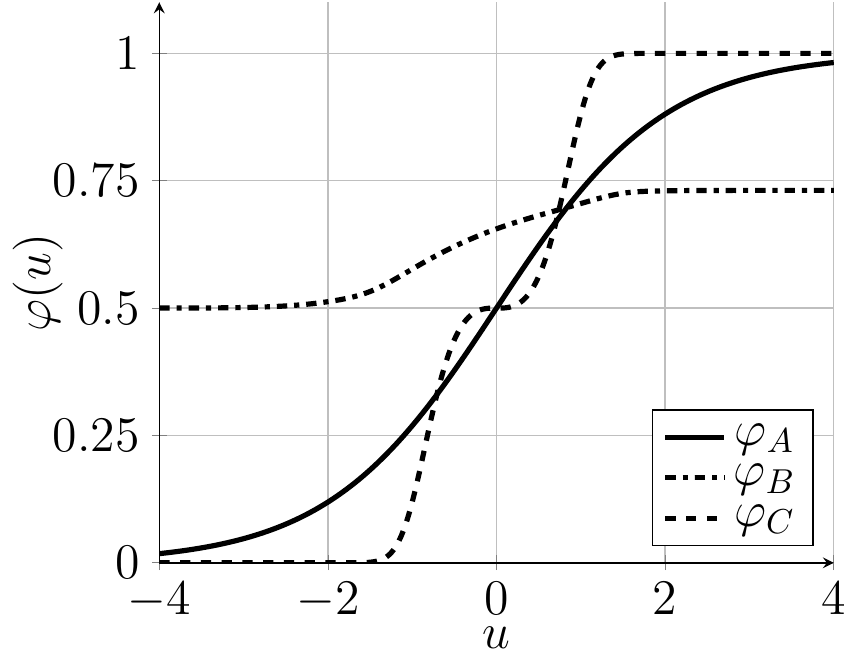}
		\caption{{Link functions when the intercept term $c = 0$.}\label{fig:links}}
	\end{figure}

	\begin{table}
		\centering
		\caption[Simulation settings]{Simulation settings \label{tab:sim_settings}}
		{
			\renewcommand{\arraystretch}{1}
			\small
			\setlength{\tabcolsep}{2pt}

\begin{tabular}{cc@{\extracolsep{3pt}}ccccc@{\extracolsep{3pt}}cc@{\extracolsep{3pt}}cccc}
	\hline
	Expt.                  & $c$ & ${\gamma}_{01}$     & ${\gamma}_{02}$    & ${\gamma}_{03}$     & ${\gamma}_{04}$    & ${\beta}_{01}$   & ${\beta}_{02}$  & $\lambda_{C}$ & \shortstack{Cure\\prop.}         & \shortstack{Cens.\\rate} & Plateau \\ \hline
	\multirow{2}{*}{A} & \multirow{2}{*}{1.2} & \multirow{2}{*}{-0.2383} & \multirow{2}{*}{0.7423} & \multirow{2}{*}{0.3156}  & \multirow{2}{*}{0.5409} & \multirow{2}{*}{-0.8} & \multirow{2}{*}{0.5} & 0.1           & \multirow{2}{*}{0.2090} & 0.2674         & 0.1675  \\
	&                      &                          &                         &                          &                         &                       &                      & 0.3           &                         & 0.3691         & 0.1108  \\ \hline
	\multirow{2}{*}{B} & \multirow{2}{*}{0.5} & \multirow{2}{*}{-0.7826} & \multirow{2}{*}{0.4368} & \multirow{2}{*}{-0.2599} & \multirow{2}{*}{0.3594} & \multirow{2}{*}{-0.6} & \multirow{2}{*}{0.8} & 0.1           & \multirow{2}{*}{0.3352} & 0.3792         & 0.2734  \\
	&                      &                          &                         &                          &                         &                       &                      & 0.4           &                         & 0.4902         & 0.1554  \\ \hline
	\multirow{2}{*}{C} & \multirow{2}{*}{0.2} & \multirow{2}{*}{0.1057}  & \multirow{2}{*}{0.7899} & \multirow{2}{*}{-0.4883} & \multirow{2}{*}{0.3556} & \multirow{2}{*}{0.6}  & \multirow{2}{*}{0.4} & 0.15          & \multirow{2}{*}{0.3912} & 0.4415         & 0.3092  \\
	&                      &                          &                         &                          &                         &                       &                      & 0.5           &                         & 0.5390         & 0.1867  \\ \hline
\end{tabular}

		}
	\end{table}
	
	The choices for the parameters  $\mat{\gamma}_{0}$, $\mat{\beta}_{0}$, and $\lambda_{C}$ are given in \tref{tab:sim_settings} as well as the averages, over the 500 simulated datasets for each setting, of the cure proportion, the censoring rate and the proportion of observations in the plateau. $\mat{\gamma}_{0}$ is chosen such that it has unit Euclidean norm to ensure model identifiability and it leads to different cure proportions. For each experiment we also consider two different censoring scenarios while maintaining a reasonable percentage of observations in the plateau.
	
	The simulation experiments are carried out with sample size $n$ of $250$ and $500$. The SIC method proposed by \cite{AKL2019} and our estimation method (mSIC) are applied to each dataset. For comparison we compute the mean squared error (MSE) of the estimate of the cure probability, bias and variance of the estimates of both $\mat{\gamma}$ and $\mat{\beta}$. The MSE for the cure probability  is defined as 
	$$	\text{MSE}(\hat{p}, p_{0})= \frac{1}{K}\sum_{k=1}^{K}\{
	\hat{\varphi}(\tr{\hat{\mat{\gamma}}}\mat{x}_{k}) - {\varphi_{0}}(\tr{\mat{\gamma}}_{0}\mat{x}_{k})\}^{2},
	$$
	where the summation is over a grid of points. 
	For $x_{k1}$ and $x_{k2}$ we take a grid of size $0.01$ on $\sbracket{0, 1}$ and $\sbracket{-3, 3}$ respectively, while $x_{k3},x_{k4}\in\cbracket{0, 1}$. The bias of the coefficient estimates is the mean of the Euclidean norms of the differences between the coefficient estimate and the true parameter over 500 replications. The variance of the coefficient estimates is the sample variance of the Euclidean norms of the coefficient estimates.%
	
	For both methods, we initialize the algorithms  as follows. The initial link estimate is the logistic function,  $\hat{\mat{\gamma}}_{(0)}$ is the estimate from fitting a logistic regression model to the censoring indicator against the covariates $\mat{X}$,  initial estimates for the latency are obtained from fitting the standard Cox model to the uncensored observations. Both algorithms terminate when the difference of the estimators from one iteration to the other, in Eucledian norm, is smaller then $10^{-5}$.
	Other configurations for the SIC method are set as stated in Section 3 of \cite{AKL2019}. Since the SIC method sets $\hat{\gamma}_{1} = \pm1$ for the model identifiability condition, at the end we normalize the obtained estimate. For our  method we used the triweight kernel function for smoothing the monotone link estimate and set the bandwidth parameter at the $k$-th iteration to be $h_k = r_{k}n^{-1/5}$, where $r_{k}$ is the range of the index $\mat{\gamma}^{T}\mat{X}$ computed at the $k$-th iteration. This is a common choice for smooth isotonic estimators and behaves well in practice. We note that the bandwidth choice is less problematic in the case of smooth monotone estimators than for the standard kernel estimator. A more detailed investigation of the role of the bandwidth is provided in Appendix~\ref{sec:bandwidth}. For the SIC method, cross-validation is used to select the bandwidth but in our experience that is not stable in practice and the search interval needs to be chosen depending on the range of the index (see discussion in Appendix~\ref{sec:app_application}). Regarding the truncation parameter, we set $\epsilon'= 10^{-6}$ for both experiments A and C. For experiment B, given the small range of values for the incidence, the estimation is very challenging. Since the isotonic estimator is known to have problems at the boundary, in these type of situations it is better to determine the upper and lower truncation in a data-driven way instead of taking a very small $\epsilon'$. We  use the method proposed by \cite{LR2017} for the range-regularized isotonic regression problem. Details are given in Appendix~\ref{sec:appendix_setting_B}.
	\begin{table}
		{
			\centering
			\caption[Simulation results]{Simulation results\label{tab:sim_results}} 
			{
				\renewcommand{\arraystretch}{1}
				\small
				\setlength{\tabcolsep}{2pt}
				\begin{tabular}{cccl@{\extracolsep{6pt}}cc@{\extracolsep{6pt}}cc@{\extracolsep{6pt}}cc}
	\hline
	\multirow{2}{*}{Expt.} & \multirow{2}{*}{Size} & \multirow{2}{*}{$\lambda_{C}$} & \multirow{2}{*}{Method} & \multicolumn{2}{c}{$\text{MSE}\rbracket{\hat{p},p_{0}}$} & \multicolumn{2}{c}{$\hat{\mat{\gamma}}$} & \multicolumn{2}{c}{$\hat{\mat{\beta}}$} \\ \cline{5-6} \cline{7-8} \cline{9-10}
	&                              &                                &                         & Mean                       & Variance                    & Bias               & Variance            & Bias               & Variance           \\ \hline
	\multirow{8}{*}{A}          & \multirow{4}{*}{250}         & \multirow{2}{*}{0.1}           & mSIC                    & 0.00939                    & 4.39E-05                    & 0.57368            & 0.05038             & 0.28080            & 0.02709            \\
	&                              &                                & SIC                     & 0.01688                    & 1.41E-04                    & 0.71096            & 0.05077             & 0.28151            & 0.02708            \\ \cline{3-10} 
	&                              & \multirow{2}{*}{0.3}           & mSIC                    & 0.01141                    & 5.92E-05                    & 0.61280            & 0.05145             & 0.30371            & 0.03408            \\
	&                              &                                & SIC                     & 0.01940                    & 1.55E-04                    & 0.76052            & 0.06128             & 0.30605            & 0.03442            \\ \cline{2-10} 
	& \multirow{4}{*}{500}         & \multirow{2}{*}{0.1}           & mSIC                    & 0.00562                    & 1.50E-05                    & 0.43937            & 0.03810             & 0.19793            & 0.01321            \\
	&                              &                                & SIC                     & 0.00967                    & 4.57E-05                    & 0.58563            & 0.04825             & 0.19799            & 0.01331            \\ \cline{3-10} 
	&                              & \multirow{2}{*}{0.3}           & mSIC                    & 0.00668                    & 2.01E-05                    & 0.48043            & 0.03989             & 0.22269            & 0.01667            \\
	&                              &                                & SIC                     & 0.01183                    & 7.44E-05                    & 0.61136            & 0.04907             & 0.22371            & 0.01716            \\ \hline
	\multirow{8}{*}{B}          & \multirow{4}{*}{250}         & \multirow{2}{*}{0.1}           & mSIC                   & 0.00697                    & 6.06E-05                    & 0.99875            & 0.20072             & 0.30717            & 0.03557            \\
	&                              &                                & SIC                     & 0.00831                    & 9.70E-05                    & 1.02873            & 0.26451             & 0.30798            & 0.03525            \\ \cline{3-10} 
	&                              & \multirow{2}{*}{0.4}           & mSIC                    & 0.00912                    & 9.79E-05                    & 0.94890            & 0.18028             & 0.35858            & 0.04353            \\
	&                              &                                & SIC                     & 0.01041                    & 1.34E-04                    & 1.04764            & 0.25388             & 0.35971            & 0.04412            \\ \cline{2-10} 
	& \multirow{4}{*}{500}         & \multirow{2}{*}{0.1}           & mSIC                    & 0.00433                    & 2.25E-05                    & 0.88094            & 0.17064             & 0.21738            & 0.01595            \\
	&                              &                                & SIC                     & 0.00504                    & 4.17E-05                    & 0.91734            & 0.26178             & 0.21715            & 0.01596            \\ \cline{3-10} 
	&                              & \multirow{2}{*}{0.4}           & mSIC                    & 0.00591                    & 3.50E-05                    & 0.86886            & 0.17217             & 0.24466            & 0.02036            \\
	&                              &                                & SIC                     & 0.00578                    & 4.44E-05                    & 0.97191            & 0.25303             & 0.24623            & 0.02122            \\ \hline
	\multirow{8}{*}{C}          & \multirow{4}{*}{250}         & \multirow{2}{*}{0.15}          & mSIC                    & 0.00583                    & 3.49E-06                    & 0.29442            & 0.01686             & 0.32893            & 0.03600            \\
	&                              &                                & SIC                     & 0.01474                    & 1.79E-04                    & 0.40554            & 0.01987             & 0.32966            & 0.03599            \\ \cline{3-10} 
	&                              & \multirow{2}{*}{0.5}           & mSIC                    & 0.00666                    & 6.10E-06                    & 0.33582            & 0.02433             & 0.37579            & 0.05071            \\
	&                              &                                & SIC                     & 0.01847                    & 2.82E-04                    & 0.45743            & 0.02754             & 0.38216            & 0.05140            \\ \cline{2-10} 
	& \multirow{4}{*}{500}         & \multirow{2}{*}{0.15}          & mSIC                    & 0.00446                    & 4.52E-05                    & 0.20682            & 0.01075             & 0.22675            & 0.01747            \\
	&                              &                                & SIC                     & 0.00811                    & 5.61E-05                    & 0.32336            & 0.01965             & 0.22658            & 0.01761            \\ \cline{3-10} 
	&                              & \multirow{2}{*}{0.5}           & mSIC                    & 0.00459                    & 1.41E-06                    & 0.23880            & 0.01169             & 0.26455            & 0.02287            \\
	&                              &                                & SIC                     & 0.01038                    & 8.51E-05                    & 0.35384            & 0.02201             & 0.26582            & 0.02345            \\ \hline
\end{tabular}
			}\par
		}
	\end{table}
	
	\tref{tab:sim_results} summarizes the simulation results, including the MSE of the link estimates, bias and variance of the coefficient estimates, for both SIC and mSIC methods. 
	In terms of the MSE for  the cure probability, the mSIC method has lower mean and variance compared to the SIC method among all simulation settings, except for Experiment B with sample size of $500$ and $\lambda_{C} = 0.4$. This indicates that mSIC performs better in inferring the incidence and gives less dispersed estimates. The mSIC method behaves better in estimating $\mat{\gamma}$ and $\mat{\beta}$, in terms of bias and variance among all simulation configurations. As expected, as the sample size increases the performance of both methods improves, while it deteriorates when the censoring rate increases. 
	
	For experiment B, we find that range-regularization method shrinks the range of the upper and lower truncation to zero for about one-fifth of the replications for each simulation settings. This leads to a constant estimate of the link function which poses a practical identification issue for $\mat{\gamma}$ and causes the high variability of $\hat{\mat{\gamma}}$. We look separately at the cases with a constant (non-constant) estimate for the link function and report the correspondent MSE, bias and variance in Tables~\ref{tab:sim_results_B_const_link}-\ref{tab:sim_results_B_nonconst_link}
	 in Appendix~\ref{sec:appendix_setting_B}.
	Results indicate that, when our method estimates a constant link function,  the SIC method behaves no better than a constant link estimate. Table~\ref{tab:sim_results_B_nonconst_link} 
	also suggests mSIC performs better in estimating the incidence for almost all cases with a non-constant link estimate. 
	As expected, the variance of $\hat{\mat{\gamma}}$ for the cases with a constant estimated link function is higher than that for a non-constant link. This practical identifiability issue may be a result of the flatness of the true link function $\varphi_{B}$, although it is not a constant function. Note that the true incidence probability  ranges from around $0.52$ to around $0.73$. We recommend that the range-regularized method should be used in practice to determine the upper and lower truncation in order to identify situations for which there is risk of practical identifiability as in experiment B. In such cases, one should be careful in interpreting the results.
	
	We note that if the true link function is not monotone, using mSIC would lead to model mispecification and SIC would be preferred over mSIC. We illustrate this through an additional simulation setting in Appendix~\ref{sec:non_monotone}. 
 However, we expect that for small deviations from monotonicity, mSIC would still provide reasonable estimates. Finally, we also investigated the role of the additional smoothing step 2 in the estimation procedure described in Section~\ref{sec:model_est}. Results in Appendix~\ref{sec:smoothing} illustrate that the smooth monotone estimator behaves better than the piecewise constant isotonic estimator.
	
	For the computational aspect, the proposed method consumes slightly more time to complete the model estimation comparing to the SIC method. For instance, the average elapsed time, over the 500 replications, on the model estimation for experiment C with $\lambda_{C} = 0.15$ and a sample size of $250$ ($500$) are 12.28 (14.20) seconds for mSIC and 11.64 (13.67) seconds for SIC, with a Core i7-1165G7 CPU laptop.

	\section{Real data application\label{sec:application}}
	We apply the proposed method to a dataset of melanoma patients extracted from the Surveillance, Epidemiology and End Results (SEER) database. Such dataset has also been studied by \cite{musta2022presmoothing} and it consists of 1445 melanoma patients diagnosed between 2004 -- 2015. The event time of interest is the time to death because of melanoma and the follow-up time ranges from 1 to 155 months. The age of the patients varies between 11 to 104 years old. Among the 1445 patients, 596 are females and 849 are males. The cancer stage at diagnosis (localized: 1302 cases, regional: 101 cases, distant: 42 cases) is also recorded. The Kaplan-Meier estimate of the survival function in  \fref{fig:km_seer} has a long plateau, which contains around 20\% of the observations. Combined with medical evidence of possibility of cure for melanoma, this suggests that the cure model is appropriate for this dataset.
	
	\begin{figure}
		\centering
		\includegraphics[scale=0.4]{./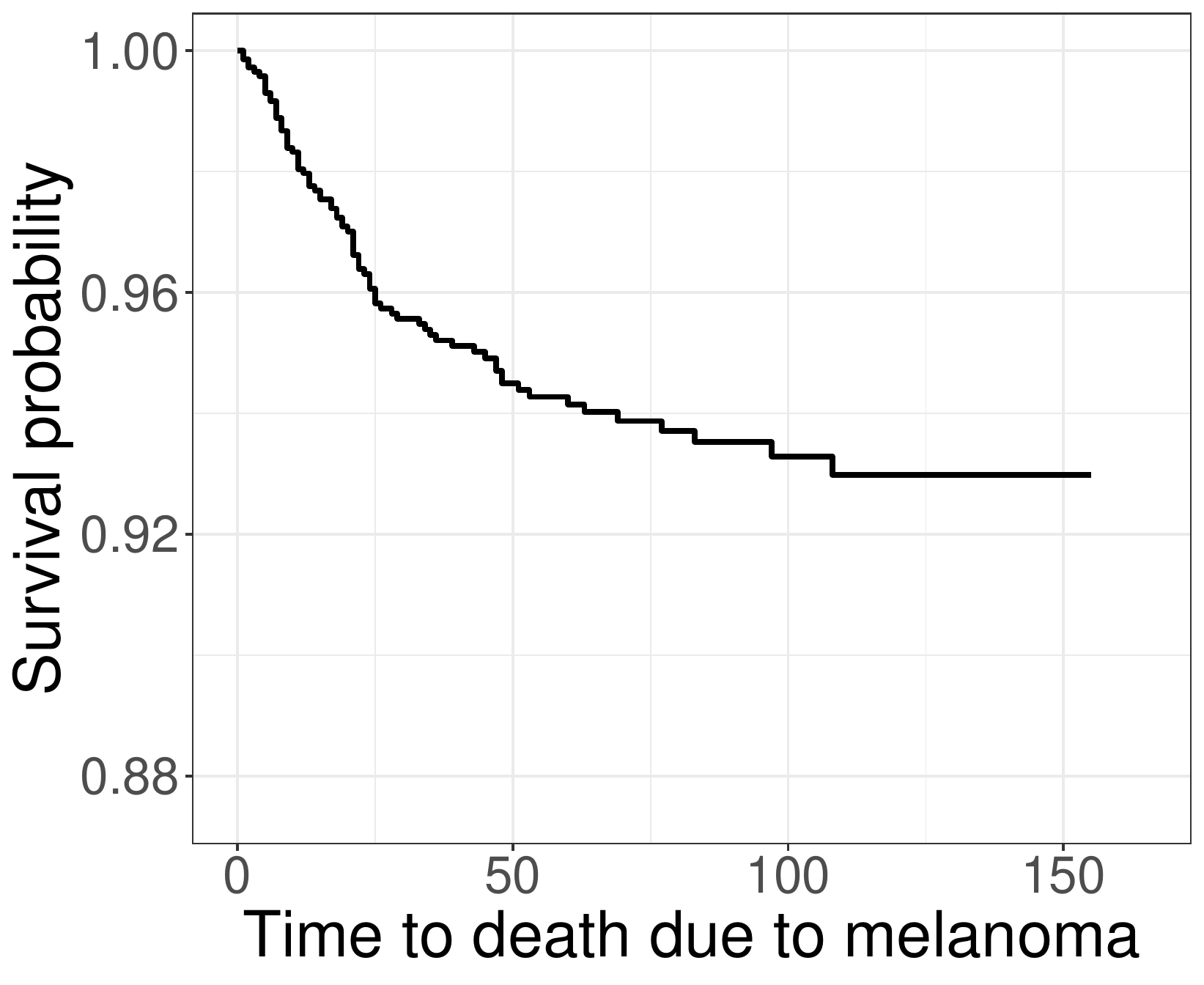}
		\caption{Kaplan-Meier estimate of the survival function for the SEER data.\label{fig:km_seer}}
	\end{figure}
	
	To compare the estimates of the logistic/Cox (LC), the single-index/Cox (SIC) and the monotone single-index/Cox (mSIC) models, we split data into a training set (with size of 964) and a testing set (with size of 481) and estimate the models using the training set. The standardized age is considered in the incidence in order to select a reasonable bandwidth for the SIC model. We note that the SIC method selected an optimal bandwidth of $1$, which is the upper bound of the search interval, $[0.4, 1]$, considered in the SIC algorithm. However, extending further the search interval for the bandwidth gave similar results.  \tref{tab:seer_results} shows the parameter estimates, and the percentile-based bootstrap confidence intervals with approximate 95\% confidence level. The confidence intervals are computed using 500 naive bootstrap samples. The effects of the covariates have the same direction for all methods. 
	Note that the $\mat\gamma$ estimate of the LC model are normalized, so that the coefficients corresponding to the non-intercept terms possess a norm of one for better comparison with the other two methods.
	To compare the performance of the three models in predicting the uncure probability, we compute the prediction error (PE) using the testing data
	$$PE = -\sum_{i=1}^{481}\hat{w}_{i}\log\rbracket{1 - \hat{p}(\mat{x}_{i}^{\text{test}})} + (1-\hat{w}_{i}) \log\hat{p}(\mat{x}_{i}^{\text{test}}),$$
	where $\hat{w}_{i}$ is computed using \eqref{eqn:w_i_EM} with the estimated parameters (and the estimated link for SIC and mSIC.) The prediction errors are 95.47 for the LC model, 101.14 for the SIC model and 81.46 for the mSIC model. This indicates that mSIC performs better in predicting the uncure status among the three approaches. As mentioned previously, the SIC algorithm selected an optimal bandwidth of 1. The PE for SIC reduced slightly (to 97.15) when we allowed the SIC algorithm to search for some larger bandwidth. 
	\begin{table}[h]
		{
			\centering
			\caption[SEER Results]{Parameter estimates for the SEER data {(LC's $\hat{\mat{\gamma}}_{n}$ is normalized)}\label{tab:seer_results}}
			{
				\renewcommand{\arraystretch}{1}
				\footnotesize
				\setlength{\tabcolsep}{1pt}
				\begin{tabular}{ll@{\extracolsep{5pt}}ccc@{\extracolsep{5pt}}ccc@{\extracolsep{5pt}}ccc}
	\hline
	\multirow{2}{*}{}          & \multirow{2}{*}{Covariates} & \multicolumn{3}{c}{LC}                                    & \multicolumn{3}{c}{SIC}                                   & \multicolumn{3}{c}{mSIC}                                  \\ \cline{3-5}\cline{6-8}\cline{9-11} 
	&                             & Est.    & \shortstack{Lower\\CI} & \shortstack{Upper\\CI} & Est.    & \shortstack{Lower\\CI} & \shortstack{Upper\\CI} & Est.    & \shortstack{Lower\\CI} & \shortstack{Upper\\CI} \\ \hline
	\multirow{5}{*}{{$\hat{\mat{\gamma}}_{n}$}} & Intercept                   & -4.2129 & -4.8391                & -3.3821                & -       & -                      & 0.3640                 & -       & -                      & -                      \\
	& Age                         & 0.0620  & -0.0189                & 0.1463                 & 0.1221  & 0.0586                 & 0.3640                 & 0.0487  & -0.0633                & 0.0843                 \\
	& Gender                      & 0.1298  & -0.0475                & 0.3019                 & 0.2167  & -0.0132                & 0.7204                 & 0.1713  & 0.0246                 & 0.3119                 \\
	& Regional                    & 0.5842  & 0.4812                 & 0.7208                 & 0.5217  & 0.1992                 & 0.6674                 & 0.5658  & 0.4843                 & 0.7121                 \\
	& Distant                     & 0.7987  & 0.7266                 & 0.8935                 & 0.8161  & 0.7393                 & 1.1943                 & 0.8051  & 0.7270                 & 0.8794                 \\ \hline
	\multirow{4}{*}{{$\hat{\mat{\beta}}_{n}$}}   & Age                         & -0.0079 & -0.0277                & 0.0163                 & -0.0086 & -0.0086                & -0.0283                & -0.0038 & -0.0196                & 0.0188                 \\
	& Gender                      & -0.1809 & -1.0741                & 0.9811                 & -0.201  & -0.2010                & -1.0711                & -0.1609 & -0.8706                & 0.7711                 \\
	& Regional                    & 0.3908  & -0.5542                & 1.4638                 & 0.479   & 0.4790                 & -0.3423                & 0.2916  & -0.5600                & 1.0817                 \\
	& Distant                     & 1.4290  & 0.1771                 & 2.5478                 & 1.5064  & 1.5064                 & 0.2783                 & 1.1356  & -0.0604                & 2.0322                 \\ \hline
\end{tabular}

			}\par
		}
	\end{table}
	\begin{figure}[h]
		\centering
		\includegraphics[scale=0.35]{./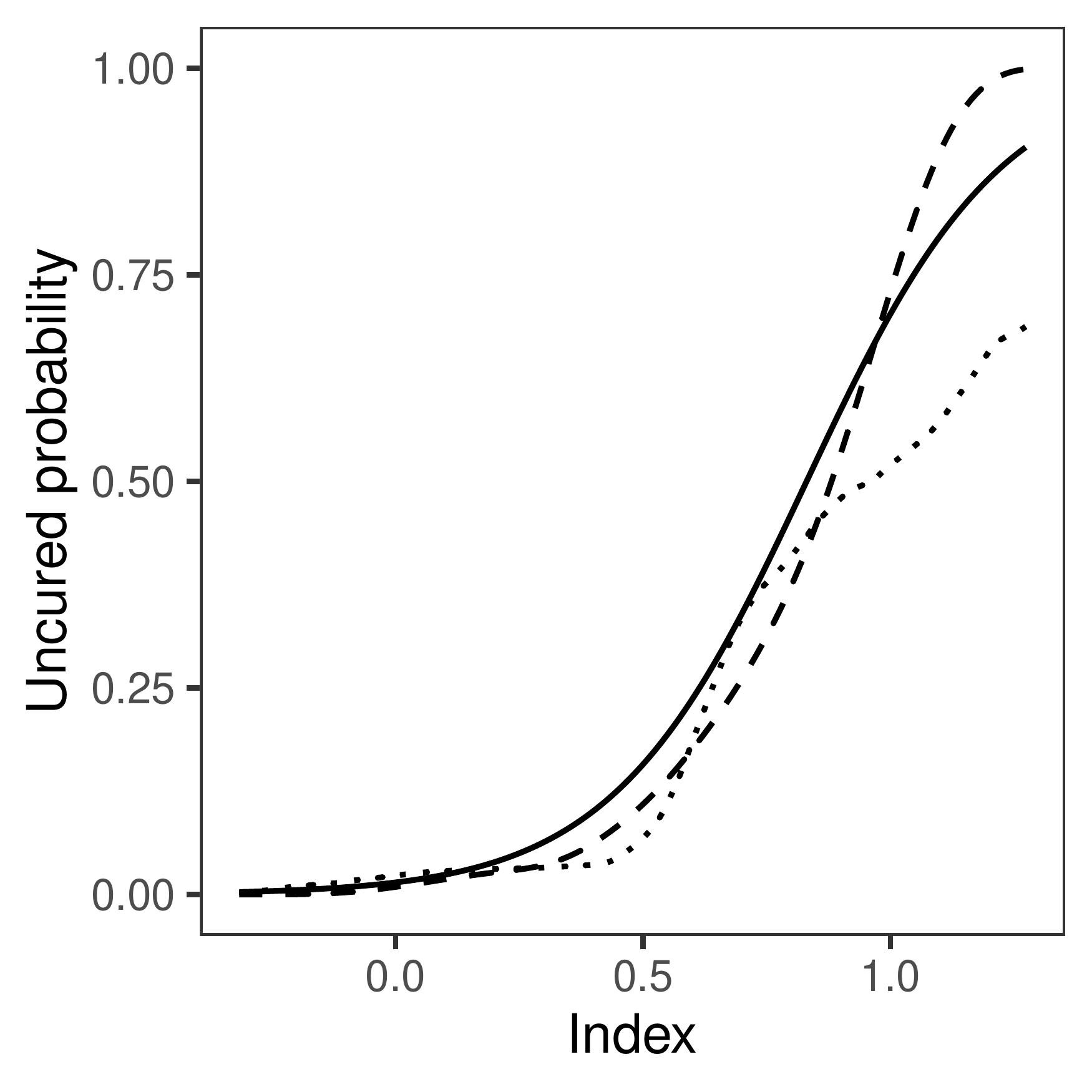}
		\caption{Link estimates (solid: LC, dotted: SIC, dashed: mSIC)\label{fig:seer_links}}
	\end{figure}

	Figure \ref{fig:seer_links} shows the link estimates of the three methods, plotted over the {same} range of the index. The LC estimate is rescaled according to the normalization of $\hat\gamma_n$. For this particular split of the data, all three methods give a monotone link function, and the links of LC and mSIC have similar  shape. The estimated link function from SIC differs considerably from the other two methods but it has the largest prediction error. However, as illustrated in Appendix~\ref{sec:app_application}, the SIC estimate is in general non-monotone and unstable. Part of this behavior seems to be due to its sensitivity to the bandwidth choice and the behavior of the cross-validation bandwidth selection method. mSIC on the other hand, apart from guaranteeing a monotone link estimate, is also more stable, less sensitive to the bandwidth and behaves well even for a simple (non-optimal) bandwidth choice. To achieve a more robust comparison among the three models using the prediction error, we also generate 10 random splits of the data (training: 964, testing: 481) and compute the prediction error (PE) using the testing data for each split. The averages (standard deviations) of the prediction errors over the 10 random splits are 89.64 (5.73) for LC; 91.84 (9.51) for SIC; and 71.84 (6.66) for mSIC. This suggests the same conclusion as before that mSIC behaves better in predicting the uncure probability among the three methods. 
	
	\appendix
	\section{Technical lemmas and proofs}
	\label{sec:proofs}
	
	\begin{proof}[Proof of Proposition \ref{prop:monotone_link_est}]
		We start by making an analogy with the standard isotonic regression problem, which corresponds to solving the following minimization problem
		\[
		\hat{\mat{y}}=\argmin_{\mat{y}\in\family{C}}\sum_{i=1}^{n}\rbracket{x_{i} - y_{i}}^{2},
		\]
		where $\family{C} = \set{\mat{y} \in \real[n]}{y_{1} \leq y_{2} \leq \cdots \leq y_{n}}$, for given observations $x_1,\dots,x_n$. From the theory of isotonic regression \citep[Section 1.5]{RWD1988}, it is known that the isotonic least squares estimate $\hat{\mat{y}}$ is also solution of the following optimization problem 
		\begin{equation}
			\label{eq:isoreg}
			\hat{\mat{y}} = \argmin_{\mat{y} \in \family{C}} \sum_{i=1}^{n}\function{\Delta}{\Psi}{}{x_{i},y_{i}},
		\end{equation}
		where 
		\begin{equation}
			\label{eqn:Delta_fun}
			\function{\Delta}{\Psi}{}{x,y} = \function{\Psi}{}{}{x} - \function{\Psi}{}{}{y} - \rbracket{x - y}\function{\psi}{}{}{y},
		\end{equation}
		and $\Psi$ is a convex function with derivative $\psi$. This equivalence is usually used to transform maximum likelihood isotonic estimation into an isotonic least-squares problem. Moreover, $\hat{\mat{y}}$ can be characterized as the left derivative of the greatest convex minorant (GCM) of the cumulative sum diagram (CSD) $$\cbracket{\rbracket{0,0}, \rbracket{i, \sum_{j=1}^{i}x_{(j)}}, i=1,\cdots,n},$$  where $x_{(1)} \leq x_{(2)} \leq \cdots \leq x_{(n)}$ correspond to the ordered observations. If in addition to the order restriction, a uniform bound restriction is imposed on the solution, the problem becomes
		\[
		\hat{\mat{y}}=\argmin_{\mat{y}\in\family{C'}}\sum_{i=1}^{n}\rbracket{x_{i} - y_{i}}^{2},
		\]
		with $\family{C}^{\prime} = \set{\mat{y} \in \real[n]}{a \leq y_{1} \leq y_{2} \leq \cdots \leq y_{n} \leq b}$ and is studied by \cite{H1997}. Specifically, it has shown that the minimizer under the order and uniform bound restrictions is $\tilde{\mat{y}} = \rbracket{\tilde{y}_{1}, \cdots, \tilde{y}_{n}}$ given by 
		\[
		\tilde{y}_{i} = \max\rbracket{a, \min\rbracket{\hat{y}_{i}, b}}, \qquad i=1,\cdots,n,
		\]
		where $\hat{\mat{y}} = \rbracket{\hat{y}_{1}, \cdots, \hat{y}_{n}}$ is the solution of \eqref{eq:isoreg}.

		Going back to our maximization problem in \eqref{eq:link_mle},
		first note that since the criterion depends only on the values of $\varphi$ at the observed points $\mat\gamma^T\mat{x}_{(1)}<\dots<\mat\gamma^T\mat{x}_{(n)}$ we can only identify the correspondent values $\mat\phi=(\phi_1,\dots,\phi_n)$. If we choose the convex function $\function{\Psi}{}{}{u} = u \log u + \rbracket{1-u}\log\rbracket{1-u}$ in \eqref{eqn:Delta_fun}, we obtain
		\begin{equation}
			\label{eq:link_mle_isoreg}
			\begin{split}
&\sum_{i=1}^{n}\function{\Delta}{\Psi}{}{w_{i}, \link[\tr{\mat{\gamma}}\mat{x}_{i}]{}{}}\\
& = \sum_{i=1}^{n}\cbracket{- w_{i} \log\link[\tr{\mat{\gamma}}\mat{x}_{i}]{}{} - \rbracket{1 - w_{i}}\log\sbracket{1 - \link[\tr{\mat{\gamma}}\mat{x}_{i}]{}{}}}+C,
			\end{split}
					\end{equation}
		for some $C$ that depends on the observations and the fixed parameters but not on $\varphi$. Therefore, the maximization problem in \eqref{eq:link_mle} 
		is equivalent to solving
		\[
		\widehat{\mat\phi}=\argmin_{\mat\phi\in\family{C}_{\epsilon^{\prime}}}\sum_{i=1}^{n}\function{\Delta}{\Psi}{}{w_{(i)},\phi_{i}{}{}}
		\]
		where  $\family{C}_{\epsilon^{\prime}} = \set{\mat{\phi} \in \real[n]}{\epsilon^{\prime} \leq \phi_{1} \leq \phi_{2} \leq \cdots \leq \phi_{n} \leq 1 - \epsilon^{\prime}}$. Consequently, using the previous results for the isotonic regression problem,  $\widehat{\mat\phi}$ can be characterized as $$\hat\phi_{i} = \max\rbracket{\epsilon^{\prime}, \min\rbracket{\tilde{\phi}_{i}, 1-\epsilon^{\prime}}},\qquad i = 1, \cdots, n,$$ where $\rbracket{\tilde{\phi}_{1}, \cdots, \tilde{\phi}_{n}}$ are the left derivatives of the GCM of the CSD $$\cbracket{\rbracket{0,0}, \rbracket{i, \sum_{j=1}^{i}w_{(j)}}, i=1,\cdots,n}$$
		with $w_{(j)}$ corresponding to the $j$-th order statistic of $(\tr{\mat{\gamma}}\mat{x}_{i})_i$. Note also that the procedure can easily accomodate ties in the observations of the index by summing the correspondent $w_{j}$. 
		
		Finally we note that, if one would try to directly characterize the maximizer of the likelihood criterion in \eqref{eq:monotone_link_mle} 
		over the non-decreasing function $\varphi$ would still end up with an iterative procedure that is the same as the EM algorithm. Indeed, the problem is equivalent to maximizing
		\[
		\mathcal{L}_n(\mat\phi)=\sum_{i=1}^n \Delta_{(i)}\log \phi_i+(1-\Delta_{(i)})\log\left\{1-\phi_i+\phi_iS_u(Y_{(i)}|Z_{(i)})\right\}
		\]
		over $\phi_1\leq\dots\leq \phi_n$ denoting again the values of the link function on the ordered points $\mat\gamma^T\mat{x}_{(1)}<\dots<\mat\gamma^T\mat{x}_{(n)}$. If we assume to know 
		\[
		W_{(i)}=\Delta_{(i)}+(1-\Delta_{(i)})\frac{\phi_iS_u(Y_{(i)}|Z_{(i)})}{1-\phi_i+\phi_iS_u(Y_{(i)}|Z_{(i)})}
		\] 
		and choose $\function{\Psi}{}{}{u} = u \log u + \rbracket{1-u}\log\rbracket{1-u}$ as before, we obtain 
		\[
		\sum_{i=1}^{n}\function{\Delta}{\Psi}{}{W_{(i)}, \phi_i} =-\mathcal{L}_n(\mat\phi)-\sum_{i=1}^n(1-\Delta_{(i)})W_{(i)}\log S_u(Y_{(i)}|Z_{(i)}).
		\]
		As a result, maximizing $\mathcal{L}_n(\mat\phi)$ is equivalent to minimizing the expression on the left hand side of the equation. The solution can be characterized as the slope of the GCM of the cumulative sum diagram  $\{(0,0),(i,\sum_{i=1}^nW_{(i)}), i=1,\cdots,n\}$. However, the $W_{(i)}$ are actually not known and depend on $\mat\phi$. One could construct an iterative algorithm in which $\mat\phi$ from the previous step is used to compute $W_{(i)}$ and then update $\mat\phi$ again. This is the same as what EM algorithm does since the $W_{(i)}$ coincide with the ones defined in \eqref{eqn:w_i_EM}.
	\end{proof} 
	\begin{proof}[Deriavation of the characterization in \eqref{eqn:characterization_phi_theta}] 
		By definition we have 
		\[
		\begin{split}
	\varphi_{0,\mat\theta}&=\argmax_{\varphi\in\family{M_{\epsilon'}}}\mathbb{E}\left[\Delta\log\varphi(\mat\gamma^TX)\right.\\
		&\left.\qquad\qquad\qquad+(1-\Delta)\log\left\{1-\varphi(\mat\gamma^TX)+\varphi(\mat\gamma^TX)S_u(Y|Z;\mat\beta,\Lambda)\right\}\right].
		\end{split}
			\]
		Let us assume for this that the covariates $X$ and $Z$ are independent and the censoring is independent of the other variables. Define $U=\mat\gamma^TX$. Then the true likelihood for the observations $(Y,\Delta,U,Z)$ is 
		\[
		\begin{split}
			&\Delta\prob{C>Y,B=1,T=Y\mid U,Z}+(1-\Delta)\prob{C=Y,T>Y\mid U,Z}\\
			&=\Delta\prob{C>Y}\prob{B=1\mid U}\prob{T=Y\mid B=1,Z}\\
			&\quad+(1-\Delta)\prob{C=Y}\left\{\prob{B=0\mid U}+\prob{B=1\mid U}\prob{T>Y\mid B=1,Z}\right\}
		\end{split}
		\]
		where with a slight abuse of notation we have used $\prob{\cdot}
		$ instead of the density function for the continuous variables. By the law of conditional expectations we have 
		\[
		\prob{B=1\mid U}=\expect[\sbracket]{\expect[\sbracket]{\indicator{B=1}\mid X}\mid U}=\expect[\sbracket]{\varphi_0(\mat\gamma_0^TX)\mid U}=:\varphi^*_{\mat\gamma}(U)
		\]	
		Then the true log-likelihood for the observations $(Y,\Delta,U,Z)$ can be written as
		\[
		\begin{split}
			&=\Delta\log \varphi^*_{\mat\gamma}(U)+\Delta\log f_u(Y|Z;\mat\beta_0,\Lambda_0)\\
			&\quad+(1-\Delta)\log\left\{1-\varphi^*_{\mat\gamma}(U)+\varphi^*_{\mat\gamma}(U)S_u(Y|Z;\mat\beta_0,\Lambda_0)\right\}+c
		\end{split}
		\]
		where $c$ denotes other terms that do not depend on the model parameters (related to the censoring distribution). If we fix $\mat\theta=(\mat\gamma,\mat\beta_0,\Lambda_0)$, it follows from the Kullback-Leibler inequality that 
		\[
		\argmax_{\varphi\in\family{M}}\expect[\sbracket]{\Delta\log\varphi(U)+(1-\Delta)\log\left\{1-\varphi(U)+\varphi(U)S_u(Y|Z;\mat\beta_0,\Lambda_0)\right\}}
		\]
		is obtained for $\varphi=\varphi^*_{\mat\gamma}$. On the other hand, Assumptions (A4), (A1)(i-ii), (A3)(ii) imply that there exists $\epsilon>0$ such that $\epsilon\leq\varphi_0(\mat\gamma_0^TX)\leq 1-\epsilon$. As a result, if we choose $\epsilon'<\epsilon$,  $\varphi^*_{\mat\gamma}\in\family{M_{\epsilon'}}$. Hence we conclude that under these more restrictive assumptions, $\varphi_{0,\mat\theta}$ is given by the expression in \eqref{eqn:characterization_phi_theta}. 
		
		Next we comment on the necessity of such assumptions. To apply the previous argument we needed that $C\perp (B,T)\mid U,Z$. Hence just conditional independence $C\perp (B,T)\mid X,Z$ would not be sufficient to split the probability into a product of probabilities. In addition, if $X$ and $Z$ would be dependent then we also would not have  $\prob{B=1\mid U,Z}=\prob{B=1\mid U}$. For example, if $X=Z$, we would have $\prob{B=1\mid U,Z}=\varphi_0(\mat\gamma_0^TZ)$ and there would be no connection between the function $\varphi^*_{\mat\gamma}$ and the true likelihood. Finally, if the latency parameters were not fixed to their true values $\mat\beta_0,\Lambda_0$, then the Kullback-leibler inequality could not be used to conclude that the maximizer is $\varphi^*_{\mat\gamma}$.
		
	\end{proof}
	\begin{proof}[Proof of Proposition \ref{prop:link_maximizer}]
		Let $Q_{\mat{X}}$ denote the distribution of $\mat{X}$. 	For any $\varphi\in\family{M}_{\epsilon'}$ and $\mat\gamma\in\Gamma$ define the $L_2$-norm
		\[
		\norm{\varphi}{{Q}_{\mat{X}},\mat\gamma} = \rbracket{\int_{\family{X}}\varphi(\mat\gamma^T\mat{x})^{2}\dd{Q_{\mat{X}}}\rbracket{\mat{x}}}^{\frac{1}{2}}
		\]
		To show the existence of a maximizer of $l_{\mat{\theta}}\rbracket{\varphi}$ over $\family{M}_{\epsilon^\prime}$, it suffices to show that $l_{\mat{\theta}}$ is continuous on $(\family{M}_{\epsilon^\prime}, \norm{\cdot}{{Q}_{\mat{X}},\mat\gamma})$ and $(\family{M}_{\epsilon^\prime}, \norm{\cdot}{{Q}_{\mat{X}},\mat\gamma})$ is compact. To obtain  uniqueness of the maximizer, we show that $\mapping{\varphi}{l_{\mat{\theta}}\rbracket{\varphi}}$ is strictly concave using the Gateaux derivative and $\family{M}_{\epsilon^{\prime}}$ is a convex set.
		
		For simplicity we consider separately the two terms $l_{1, \mat{\theta}}\rbracket{\varphi} = \expect[\sbracket]{\Delta\log\link[\tr{\mat{\gamma}}\mat{X}]{}{}}$ and $l_{2, \mat{\theta}}(\varphi) = \mathbb{E}[(1 - \Delta)\log\{1 - \varphi(\tr{\mat{\gamma}}\mat{X})F_u(Y|\mat{Z})\}]$. 
		Let $\rbracket{\varphi_{m}}_{m} \subset \family{M}_{\epsilon^\prime}$ be any sequence on $\family{M}_{\epsilon^\prime}$ that converges to $\tilde{\varphi} \in \family{M}_{\epsilon^\prime}$. For any $\eta > 0$, we can find an $N\rbracket{\eta} \in \naturalnumber$ such that $\norm{\varphi_{m} - \tilde{\varphi}}{{Q}_{\mat{X}},\mat\gamma} < \eta$, whenever $m \geq N\rbracket{\eta}$. For $m \geq N\rbracket{\eta}$, we have
		{		\begin{align*}
			\begin{split}
				|l_{1,\mat{\theta}}\rbracket{\varphi_{m}} - l_{1,\mat{\theta}}\rbracket{\tilde{\varphi}}|
				&= \left|\expect[\sbracket]{\Delta\log \varphi_m(\tr{\mat{\gamma}}\mat{X})} - \expect[\sbracket]{\Delta\log\tilde{\varphi}(\tr{\mat{\gamma}}\mat{X})}\right|\\%
				&\leq
	 			\frac{1}{\epsilon^{\prime}}\rbracket{\expect[\sbracket]{\Delta^{2}}\expect[\sbracket]{\cbracket{\link[\tr{\mat{\gamma}}\mat{X}]{m}{} - \tilde{\varphi}\rbracket{\tr{\mat{\gamma}}\mat{X}}}^{2}}}^{\frac{1}{2}}\\%
				&\leq
				\frac{1}{\epsilon^{\prime}}\norm{\varphi_{m} - \tilde{\varphi}}{{Q}_{\mat{X}},\mat\gamma} < \frac{\eta}{\epsilon^{\prime}}.
			\end{split}
		\end{align*}
		Here, the second inequality follows from the Cauchy-Schwarz inequality, the mean value theorem and the bound on $\epsilon'\leq\tilde{\varphi},\varphi_{m}\leq1-\epsilon'$. In a similar way, for $m \geq N\rbracket{\eta}$,
		\begin{align*}
			\begin{split}
				|l_{2,\mat{\theta}}\rbracket{\varphi_{m}} - l_{2,\mat{\theta}}\rbracket{\tilde{\varphi}}|
				&\leq
				\frac{1}{\epsilon^{\prime}}\rbracket{
					\expect[\sbracket]{\cbracket{\rbracket{1 - \Delta}\uncuredsurvival[F]{\mat{Z}}{Y}{}}^{2}}\right.\\%
				&\qquad\qquad\left.
					\times\expect[\sbracket]{\cbracket{\link[\tr{\mat{\gamma}}\mat{X}]{m}{} - \tilde{\varphi}\rbracket{\tr{\mat{\gamma}}\mat{X}}}^{2}}}^{\frac{1}{2}}\\%
				&\leq
				\frac{1}{\epsilon^{\prime}}\norm{\varphi_{m} - \tilde{\varphi}}{{Q}_{\mat{X}},\mat\gamma} < \frac{\eta}{\epsilon^{\prime}}.
			\end{split}
		\end{align*}
		}
		Therefore, take $\xi = {2\eta}/{\epsilon^{\prime}} > 0$ and $N = N\rbracket{{\xi\epsilon^{\prime}}/{2}}$, we have $|l_{\mat{\theta}}\rbracket{\varphi_{m}} - l_{\mat{\theta}}\rbracket{\tilde{\varphi}}| < \xi$, whenever $m \geq N$, and hence $l_{\mat{\theta}}$ is continuous on $(\family{M}_{\epsilon^\prime}, \norm{\cdot}{Q_{\mat{X}},\mat\gamma})$.%
		
		Next, we show the compactness of $(\family{M}_{\epsilon^\prime}, \norm{\cdot}{{Q}_{\mat{X}},\mat\gamma})$. By Helly's selection theorem, since $(\varphi_{m})_{m} \subset \family{M}_{\epsilon^\prime}$ is a sequence of uniformly bounded monotonically increasing functions, there exists a convergent subsequence $(\varphi_{m_{k}})_{k \in \naturalnumber}$ of $(\varphi_{m})_{m \in \naturalnumber}$ such that $\varphi_{m_{k}}\rbracket{u} \rightarrow \varphi^{\ast}\rbracket{u}$ for almost all $u \in \real$ {with respect to the Lebesgue measure}. $\varphi_{m_{k}}$ are all uniformly bounded monotonically increasing functions, so as its limit $\varphi^{\ast}$. By Assumption \ref{enum:index_density_assumption}, the distribution of $\tr{\mat{\gamma}}\mat{X}$ has a density with respect to the Lebesgue measure, we have $\norm{\varphi_{m_{k}} - \varphi^{\ast}}{Q_{\mat{X}},\mat\gamma} \rightarrow 0$ almost everywhere by the dominated convergence theorem.%
		
		We show the convexity of $\family{M}_{\epsilon^{\prime}}$ by showing that the convex combination $t\varphi_{1} + \rbracket{1 - t}\varphi_{2}$ belongs to  $\family{M}_{\epsilon^{\prime}}$, for any $\varphi_{1}, \varphi_{2}\in\family{M}_{\epsilon^{\prime}}$, and $t \in \sbracket{0, 1}$. Let $-\infty < x \leq y < \infty$. We have $\epsilon^{\prime} \leq \function{\varphi}{1}{}{x} \leq \function{\varphi}{1}{}{y} \leq 1 - \epsilon^{\prime}$ and $\epsilon^{\prime} \leq \function{\varphi}{2}{}{x} \leq \function{\varphi}{2}{}{y} \leq 1 - \epsilon^{\prime}$. These inequalities imply that $\epsilon^{\prime} \leq t\function{\varphi}{1}{}{x} + \rbracket{1-t}\function{\varphi}{2}{}{x} \leq t\function{\varphi}{1}{}{y} + \rbracket{1-t}\function{\varphi}{2}{}{y} \leq 1 - \epsilon^{\prime}$. Hence, $t\varphi_{1} + \rbracket{1 - t}\varphi_{2} \in \family{M}_{\epsilon^{\prime}}$.%
		
		Define $H\rbracket{t} = l_{\mat{\theta}}\rbracket{t\varphi_{1} + \rbracket{1 - t}\varphi_{2}}$, where $0 < t < 1$ and $\varphi_{1}, \varphi_{2} \in \family{M}_{\epsilon^{\prime}}$ with $\varphi_{1} \neq \varphi_{2}$. We show that the second derivative of $H\rbracket{t}$ is negative on $\rbracket{0, 1}$ using the Gateaux derivative and hence $\mapping{\varphi}{l_{\mat{\theta}}\rbracket{\varphi}}$ is strictly concave. For simplicity, denote $F_{u,0} = 1 - \uncuredsurvival{\mat{Z}}{Y}{}$ with the true parameters $\mat{\beta}_{0}$ and $\Lambda_{0}$, as in \eqref{eq:uncured_survival}, and $F_{u} = 1 - \uncuredsurvival{\mat{Z}}{Y}{}$ with parameters $\mat{\beta}$ and $\Lambda$. One can show that
		\[
		\begin{split}
		H''\rbracket{t} 
		&=
		\expect[\sbracket]{
			\frac{
				-\Delta\sbracket{\link[\tr{\mat{\gamma}}\mat{X}]{1}{} - \link[\tr{\mat{\gamma}}\mat{X}]{2}{}}^{2}
			}{
				\sbracket{t\link[\tr{\mat{\gamma}}\mat{X}]{1}{} + \rbracket{1-t}\link[\tr{\mat{\gamma}}\mat{X}]{2}{}}^{2}
			}}\\
		& + \expect[\sbracket]{\frac{
				-\rbracket{1 - \Delta}F_{u}^{2}\sbracket{\link[\tr{\mat{\gamma}}\mat{X}]{1}{} - \link[\tr{\mat{\gamma}}\mat{X}]{2}{}}^{2}
			}{
				\sbracket{1 - \cbracket{t\link[\tr{\mat{\gamma}}\mat{X}]{1}{} + \rbracket{1-t}\link[\tr{\mat{\gamma}}\mat{X}]{2}{}}F_{u}}^{2}
			}					
		}.
	\end{split}	\]
		Bounding the denominators by one and replacing $\Delta$ by $$\expect[\sbracket]{\Delta | \mat{X}, Y, \mat{Z}} = \link[\tr{\mat{\gamma}}_{0}\mat{X}]{0}{}F_{u,0}\rbracket{Y | \mat{Z}}$$ using the law of iterated expectations, we obtain 
		\begin{equation}
					\label{eq:second_deriv_expectation}
				H''\rbracket{t} 
				\leq
				-\mathbb{E}\left[
					\varphi_{0}(\tr{\mat{\gamma}}_{0}\mat{x}) F_{u,0}
					\cbracket{1 - \varphi_{0}(\tr{\mat{\gamma}}_{0}\mat{x}) F_{u,0}}
					\cbracket{\varphi_{1}(\tr{\mat{\gamma}}\mat{x})  - \varphi_{2}(\tr{\mat{\gamma}}\mat{x}) }^{2}
				\right].
		\end{equation}
		By Assumption \ref{enum:latency_assumptions}\ref{enum:cure_threshold}, we have 
		$\inf_{\mat{x} \in \family{X}}\prob[\sbracket]{C > \tau_{0}|\mat{X} = \mat{x}} = c > 0$. Therefore, since $\varphi_{0}(\tr{\mat{\gamma}}_{0}\mat{x}) \geq \epsilon$,
		\begin{equation}
			\begin{split}
				\label{eq:positive_expect_censor}
				\expect[\sbracket]{\Delta | \mat{X}} 
				&= \prob[\sbracket]{B = 1, T \leq C | \mat{X}}\geq \prob[\sbracket]{B = 1, C \geq \tau_{0} | \mat{X}}= \prob[\sbracket]{B = 1 | \mat{X}}\prob[\sbracket]{C \geq \tau_{0} | \mat{X}}\\%
				&\geq \epsilon \cdot  \inf_{\mat{x} \in \family{X}}\prob[\sbracket]{C > \tau_{0}|\mat{X}=\mat{x}} = c\epsilon > 0.
			\end{split}
		\end{equation}		
		Thus, using the law of iterated expectations and $\link[\tr{\mat{\gamma}}_{0}\mat{X}]{0}{}F_{u,0} \leq 1 -\epsilon$, we obtain
		\begin{align}
			\label{eq:second_deriv_expectation_lb}
			\begin{split}
				&\expect[\sbracket]{
					\link[\tr{\mat{\gamma}}_{0}\mat{X}]{0}{}F_{u,0}
					\sbracket{1 - \link[\tr{\mat{\gamma}}_{0}\mat{X}]{0}{}F_{u,0}}
					\sbracket{\link[\tr{\mat{\gamma}}\mat{X}]{1}{} - \link[\tr{\mat{\gamma}}\mat{X}]{2}{}}^{2}
				}\\%
				&\geq
				\epsilon\cdot\expect[\sbracket]{
					\expect[\sbracket]{\Delta | \mat{X}, Y, \mat{Z}}
					\sbracket{\link[\tr{\mat{\gamma}}\mat{X}]{1}{} - \link[\tr{\mat{\gamma}}\mat{X}]{2}{}}^{2}
				}\\%
				&=\epsilon\cdot\expect[\sbracket]{
					\expect[\sbracket]{\Delta | \mat{X}}
					\sbracket{\link[\tr{\mat{\gamma}}\mat{X}]{1}{} - \link[\tr{\mat{\gamma}}\mat{X}]{2}{}}^{2}
				}\\%
				&\geq
				c\epsilon^{2}\cdot\expect[\sbracket]{
					\cbracket{\link[\tr{\mat{\gamma}}\mat{X}]{1}{} - \link[\tr{\mat{\gamma}}\mat{X}]{2}{}}^{2}
				} > 0.
			\end{split}
		\end{align}
		Hence $H''\rbracket{t}  < 0$, for $0 < t < 1$.
	\end{proof}
	
	\begin{proof}[Proof of Proposition \ref{prop:true_link}]
		If $\mat{\theta} = \mat{\theta}_{0} = \rbracket{\mat{\gamma}_{0},\mat{\beta}_{0},\Lambda_{0}}$, by the definition of $\varphi_{0,\mat{\theta}}$ in Proposition \ref{prop:link_maximizer}, we have
		\begin{equation*}
			\expect[\sbracket]{l\rbracket{Y, \Delta, \mat{X}, \mat{Z}; \mat{\theta}_{0}, \varphi}} \leq \expect[\sbracket]{l\rbracket{Y, \Delta, \mat{X}, \mat{Z}; \mat{\theta}_{0}, \varphi_{0,\mat{\theta}_{0}}}}\quad\text{for all }\varphi \in \family{M}_{\epsilon^{\prime}},
		\end{equation*}
		with equality if $\varphi = \varphi_{0,\mat{\theta}}$. From the Kullback-Leibler inequality, we also have
		\begin{equation*}
			\expect[\sbracket]{l\rbracket{Y, \Delta, \mat{X}, \mat{Z}; \mat{\theta}_{0}, \varphi}} \leq \expect[\sbracket]{l\rbracket{Y, \Delta, \mat{X}, \mat{Z}; \mat{\theta}_{0}, \varphi_{0}}}\quad\text{for all }\varphi \in \family{M}_{\epsilon^{\prime}},
		\end{equation*}
		with equality if $\varphi = \varphi_{0}$. Since,  as shown in Proposition \ref{prop:link_maximizer}, $\varphi_{0,\mat{\theta}_{0}}$ is the unique maximizer of $\varphi\mapsto\expect[\sbracket]{l\rbracket{Y, \Delta, \mat{X}, \mat{Z}; \mat{\theta}_{0}, \varphi}}$, it follows that $\varphi_{0,\mat{\theta}_{0}} = \varphi_{0}$.
	\end{proof}
	
	\begin{proof}[Proof of Proposition \ref{prop:existence_mle_theta}]
		We observe that, as in the standard logistic-Cox cure model, the maximizer $\hat\Lambda_n$ must be a step function with jumps at the observed times. 	Let  $\lambda\rbracket{t}$ be the jump size of $\Lambda$ at $t$.  Also, $\mapping{u}{\function{\hat{\varphi}}{n,\mat{\theta}}{s}{u}}$ is continuous on $\real$, therefore the likelihood function $L_{n}$ is continuous with respect to {$\mat{\gamma}$, $\mat\beta$} and the jump sizes of $\Lambda$. Hence the existence and finiteness of the maximum likelihood estimator  $(\hat{\mat{\gamma}}_{n},\hat{\mat{\beta}}_{n},\hat{\Lambda}_{n})$ follows as in the proof of Theorem 1 in \cite{L2008}.
	\end{proof}
	
	\begin{proof}[Proof of Proposition \ref{prop:link_continuity}]
		Recall that $\varphi_{0,\mat{\theta}} = \argmax_{\varphi \in \family{M}_{\epsilon^{\prime}}}l_{\mat{\theta}}\rbracket{\varphi}$ for fixed $\mat{\theta}$. 
		Let $\tilde{\varphi} \in \family{M}_{\epsilon^{\prime}}$ and define
		$H(t) = l_{\mat{\theta}}\rbracket{t\tilde{\varphi} + \rbracket{1 - t}\varphi_{0,\mat{\theta}}}$, where $0 \leq t \leq 1$. By the definition of $\varphi_{0,\mat{\theta}}$ and the convexity of $\family{M}_{\epsilon^{\prime}}$, $H'(0)\leq 0$ for all $\tilde{\varphi} \in \family{M}_{\epsilon^{\prime}}$. We show the continuity of $\varphi_{0,\mat{\theta}}$ by contradiction.%
		
		For simplifying the notation, we denote $\tilde{\varphi} = \tilde{\varphi}(\tr{\mat{\gamma}}\mat{X})$, $\varphi_{0,\mat{\theta}} = \varphi_{0,\mat{\theta}}(\tr{\mat{\gamma}}\mat{X})$, $F_{u,0} = F_{u}(Y | \mat{Z};\mat{\beta}_0,\Lambda_0)$, and $F_{u} = F_{u}(Y | \mat{Z};\mat{\beta},\Lambda)$. Then, for all $\tilde{\varphi} \in \family{M}_{\epsilon^{\prime}}$,
		\begin{align}
			\begin{split}
				\label{eq:first_deriv_expectation}
				H'(0)
				&=
				\expect[\sbracket]{
					\frac{
						\Delta\rbracket{\tilde{\varphi} - \varphi_{0,\mat{\theta}}}
					}{
						\varphi_{0,\mat{\theta}}
					} - 
					\frac{
						\rbracket{1 - \Delta}\rbracket{\tilde{\varphi} - \varphi_{0,\mat{\theta}}}F_{u}
					}{
						1 - \varphi_{0,\mat{\theta}}F_{u}
					}					
				}\\
			&=
				\expect[\sbracket]{{
						\frac{
							\tilde{\varphi} - \varphi_{0,\mat{\theta}}
						}{
							\varphi_{0,\mat{\theta}}
						}
					}{
						\frac{
							\Delta - \varphi_{0,\mat{\theta}}F_{u}
						}{
							1 - \varphi_{0,\mat{\theta}}F_{u}
						}
					}	
				}\\%
				&=
				\expect[\sbracket]{{
						\frac{
							\tilde{\varphi} - \varphi_{0,\mat{\theta}}
						}{
							\varphi_{0,\mat{\theta}}
						}
					}{
						\frac{
							\expect[\sbracket]{\Delta| \mat{X}, Y, \mat{Z}} - \varphi_{0,\mat{\theta}}F_{u}
						}{
							1 - \varphi_{0,\mat{\theta}}F_{u}
						}
					}	
				}\\
			&=
				\expect[\sbracket]{{
						\frac{
							\tilde{\varphi} - \varphi_{0,\mat{\theta}}
						}{
							\varphi_{0,\mat{\theta}}
						}
					}{
						\frac{
							\link[\tr{\mat{\gamma}}_{0}\mat{X}]{0}{}F_{u,0} - \varphi_{0,\mat{\theta}}F_{u}
						}{
							1 - \varphi_{0,\mat{\theta}}F_{u}
						}
					}	
				}\\%
				&=
				\expect[\sbracket]{{
						\frac{
							\tilde{\varphi} - \varphi_{0,\mat{\theta}}
						}{
							\varphi_{0,\mat{\theta}}
						}
					}\cbracket{1 - \expect[\sbracket]{\left.
							\frac{
								1 - \link[\tr{\mat{\gamma}}_{0}\mat{X}]{0}{}F_{u,0}
							}{
								1 - \varphi_{0,\mat{\theta}}F_{u}
							}
							\right\vert\tr{\mat{\gamma}}\mat{X}
						}
					}	
				}\leq 0.
			\end{split}
		\end{align}
		Next we argue that, if $\varphi_{0,\mat{\theta}}$ was not continuous, we can construct $\tilde{\varphi} \in \family{M}_{\epsilon^{\prime}}$ for which the inequality \eqref{eq:first_deriv_expectation} is not satisfied. Hence, by contradiction we can conclude that $\varphi_{0,\mat{\theta}}$ is continuous. 
		
		Since $\varphi_{0,\mat{\theta}}$ is monotone non-decreasing, discontinuity points would be points of jump. Assume that $\varphi_{0,\mat{\theta}}$ has a jump at $\tr{\mat{\gamma}}\tilde{\mat{x}}$ with a size of $\kappa$. Let
		\[
		A(u)=	\expect[\sbracket]{\left.
			\frac{
				1 - \link[\tr{\mat{\gamma}}_{0}\mat{X}]{0}{}F_{u,0}
			}{
				1 - \link[\tr{\mat{\gamma}}\tilde{\mat{x}}]{0,\mat{\theta}}{}F_{u}
			}
			\right\vert\tr{\mat{\gamma}}\mat{X}=u
		} 
		\]
		
		We consider three cases based on whether $A(\tr{\mat{\gamma}}\tilde{\mat{x}})$
		is smaller than one, larger than one or equal to one. 
		
		\textit{Case 1.}
		Suppose that $A(\tr{\mat{\gamma}}\tilde{\mat{x}})< 1$. By  Assumption (A8), there exists $\eta > 0$, such that 
		$A(u) < 1$ for $u \in ({\tr{\mat{\gamma}}\tilde{\mat{x}} - \eta, \tr{\mat{\gamma}}\tilde{\mat{x}} + \eta})$. Since $\varphi_{0,\mat{\theta}}$ is monotone non-decreasing and $\varphi_{0,\mat{\theta}}$ has a jump at $\tr{\mat{\gamma}}\tilde{\mat{x}}$, we consider that $\varphi_{0,\mat{\theta}}$ is either right- or left-continuous at this point and construct $\tilde{\varphi}$ which coincides with $\varphi_{0,\mat\theta}$ apart from in a small neighborhood of $\tr{\mat{\gamma}}\tilde{\mat{x}}$.
		\begin{enumerate}[label=(\roman*)]
			\item
			If $\varphi_{0,\mat{\theta}}$ is right-continuous at $\tr{\mat{\gamma}}\tilde{\mat{x}}$,
			$\link[u]{0,\mat{\theta}}{} < \link[\tr{\mat{\gamma}}\tilde{\mat{x}}]{0,\mat{\theta}}{}$ for $ u \in \lbrack\tr{\mat{\gamma}}\tilde{\mat{x}} - \frac{\eta}{2}, \tr{\mat{\gamma}}\tilde{\mat{x}})$. We define $\tilde{\varphi}\rbracket{u} = \link[\tr{\mat{\gamma}}\tilde{\mat{x}}]{0,\mat{\theta}}{} > \link[u]{0,\mat{\theta}}{}$ for $u \in \lbrack\tr{\mat{\gamma}}\tilde{\mat{x}} - \frac{\eta}{2}, \tr{\mat{\gamma}}\tilde{\mat{x}})$ and $\tilde{\varphi}\rbracket{u} = \link[u]{0,\mat{\theta}}{}$ otherwise.
			\item
			If $\varphi_{0,\mat{\theta}}$ is left-continuous at $\tr{\mat{\gamma}}\tilde{\mat{x}}$,
			$\link[u]{0,\mat{\theta}}{} \leq \link[\tr{\mat{\gamma}}\tilde{\mat{x}}]{0,\mat{\theta}}{} < \link[\tr{\mat{\gamma}}\tilde{\mat{x}}]{0,\mat{\theta}}{} + \kappa$ for $ u \in \lbrack\tr{\mat{\gamma}}\tilde{\mat{x}} - \frac{\eta}{2}, \tr{\mat{\gamma}}\tilde{\mat{x}})$.
			We define $\tilde{\varphi}\rbracket{u} = \link[\tr{\mat{\gamma}}\tilde{\mat{x}}]{0,\mat{\theta}}{} + \kappa > \link[u]{0,\mat{\theta}}{}$ for $u \in \lbrack\tr{\mat{\gamma}}\tilde{\mat{x}} - \frac{\eta}{2}, \tr{\mat{\gamma}}\tilde{\mat{x}})$ and $\tilde{\varphi}\rbracket{u} = \link[u]{0,\mat{\theta}}{}$ otherwise.
		\end{enumerate}
		In addition, monotonicity of $\varphi_{0,\mat\theta}$ implies that
		\begin{equation*}
			\begin{split}
			\expect[\sbracket]{\left.
				\frac{
					1 - \varphi_{0}(\tr{\mat{\gamma}}_{0}\mat{X})F_{u,0}
				}{
					1 - \varphi_{0,\mat{\theta}}(u)F_{u}
				}
				\right\vert\tr{\mat{\gamma}}\mat{X}=u
			}
			&\leq
			\expect[\sbracket]{\left.
				\frac{
					1 - \varphi_{0}(\tr{\mat{\gamma}}_{0}\mat{X})F_{u,0}
				}{
					1 - \varphi_{0,\mat{\theta}}(\tr{\mat{\gamma}}\tilde{\mat{x}})F_{u}
				}
				\right\vert\tr{\mat{\gamma}}\mat{X}=u
			}=A(u)
			< 1,
			\end{split}
		\end{equation*}
		for $ u \in \lbrack\tr{\mat{\gamma}}\tilde{\mat{x}} - \frac{\eta}{2}, \tr{\mat{\gamma}}\tilde{\mat{x}})$. This inequality together with the construction of $\tilde{\varphi}$ in either case give
		\begin{equation*}
			\begin{split}
				H'\rbracket{0}
				&=
				\expect[\sbracket]{{
						\frac{
							\tilde{\varphi} - \varphi_{0,\mat{\theta}}
						}{
							\varphi_{0,\mat{\theta}}
						}
					}\cbracket{1 - \expect[\sbracket]{\left.
							\frac{
								1 - \link[\tr{\mat{\gamma}}_{0}\mat{X}]{0}{}F_{u,0}
							}{
								1 - \varphi_{0,\mat{\theta}}F_{u}
							}
							\right\vert\tr{\mat{\gamma}}\mat{X}
						}
					}	
				} \\
				&=	\mathbb{E}\left[{\indicator{\tr{\mat{\gamma}}\tilde{\mat{x}} - \frac{\eta}{2}\leq \tr{\mat{\gamma}}\mat{X}< \tr{\mat{\gamma}}\tilde{\mat{x}}}
						\frac{
							\tilde{\varphi} - \varphi_{0,\mat{\theta}}
						}{
							\varphi_{0,\mat{\theta}}
						}
					}\right.\\
				&\left.\qquad\qquad\cbracket{1 - \expect[\sbracket]{\left.
							\frac{
								1 - \link[\tr{\mat{\gamma}}_{0}\mat{X}]{0}{}F_{u,0}
							}{
								1 - \varphi_{0,\mat{\theta}}F_{u}
							}
							\right\vert\tr{\mat{\gamma}}\mat{X}
						}
					}	
				\right],
			\end{split}
		\end{equation*}
		is strictly larger than zero, 
		which contradicts the fact that  $H'(0) \leq 0$ for all $\tilde{\varphi} \in \family{M}_{\epsilon^{\prime}}$.
		
		\textit{Case 2.}
		Suppose that $A(\tr{\mat{\gamma}}\tilde{\mat{x}}) > 1$. By Assumption \ref{enum:expectation_ratio_continuity},
		there exists $\eta > 0$, such that 
		$A(u)> 1$ for $u \in ({\tr{\mat{\gamma}}\tilde{\mat{x}} - \eta, \tr{\mat{\gamma}}\tilde{\mat{x}} + \eta})$. Similar to the argument in the first case, we can show that
		\begin{equation*}
			\expect[\sbracket]{\left.
				\frac{
					1 - \varphi_{0}(\tr{\mat{\gamma}}_{0}\mat{X})F_{u,0}
				}{
					1 - \varphi_{0,\mat{\theta}}(u)F_{u}
				}
				\right\vert\tr{\mat{\gamma}}\mat{X}=u
			}
			\geq
				\expect[\sbracket]{\left.
				\frac{
					1 - \varphi_{0}(\tr{\mat{\gamma}}_{0}\mat{X})F_{u,0}
				}{
					1 - \varphi_{0,\mat{\theta}}(\tr{\mat{\gamma}}\tilde{\mat{x}})F_{u}
				}
				\right\vert\tr{\mat{\gamma}}\mat{X}=u
			}=A(u)
			> 1
		\end{equation*}
		for $u \in (\tr{\mat{\gamma}}\tilde{\mat{x}}, \tr{\mat{\gamma}}\tilde{\mat{x}} + \frac{\eta}{2}\rbrack$. Similarly to the first case, we can construct $\tilde{\varphi}$ which coincides with $\varphi_{0,\mat{\theta}}$ apart from in a small neighbourhood of $\tr{\mat{\gamma}}\tilde{\mat{x}}$ where it is constant and strictly smaller than $\varphi_{0,\mat{\theta}}$.  We then have $H'\rbracket{0} > 0$ and leads to a contradiction.
		
		\textit{Case 3.}
		Suppose that $A(\tr{\mat{\gamma}}\tilde{\mat{x}})
		= 1$.
		
\noindent	(i)				If $\varphi_{0,\mat{\theta}}$ is right-continuous at $\tr{\mat{\gamma}}\tilde{\mat{x}}$, we have $\link[u]{0,\mat{\theta}}{} \leq \link[\tr{\mat{\gamma}}\tilde{\mat{x}}]{0,\mat{\theta}}{} - \kappa < \link[\tr{\mat{\gamma}}\tilde{\mat{x}}]{0,\mat{\theta}}{}$ for $u < \tr{\mat{\gamma}}\tilde{\mat{x}}$. This implies that for $u < \tr{\mat{\gamma}}\tilde{\mat{x}}$,
			\begin{align}
				\begin{split}
					\label{eq:cond_expect_ineq}
				\expect[\sbracket]{\left.
					\frac{
						1 - \varphi_{0}(\tr{\mat{\gamma}}_{0}\mat{X})F_{u,0}
					}{
						1 - \varphi_{0,\mat{\theta}}(u)F_{u}
					}
					\right\vert\tr{\mat{\gamma}}\mat{X}=u
				}
					&\leq
					\expect[\sbracket]{\left.
						\frac{
							1 - \varphi_{0}(\tr{\mat{\gamma}}_{0}\mat{X})F_{u,0}
						}{
							1 - \sbracket{
								\link[\tr{\mat{\gamma}}\tilde{\mat{x}}]{0,\mat{\theta}}{} - \kappa
							}F_{u}
						}
						\right\vert\tr{\mat{\gamma}}\mat{X}=u
					}\\
					&<
					\expect[\sbracket]{\left.
						\frac{
							1 - \varphi_{0}(\tr{\mat{\gamma}}_{0}\mat{X})F_{u,0}
						}{
							1 - \link[\tr{\mat{\gamma}}\tilde{\mat{x}}]{0,\mat{\theta}}{}F_{u}
						}
						\right\vert\tr{\mat{\gamma}}\mat{X}=u
					}=A(u).\\
				\end{split}
			\end{align}
			We want to show that the left hand side of the previous inequality is smaller than one in a left neighborhood of $\tr{\mat{\gamma}}\tilde{\mat{x}}$ and then proceed with the construction of $\tilde{\varphi}$ as in Case 1. 
			By the mean value theorem, we have 
			\begin{equation*}
				\frac{
					1
				}{
					1 - \link[\tr{\mat{\gamma}}\tilde{\mat{x}}]{0,\mat{\theta}}{}F_{u}\rbracket{Y|\mat{Z}}
				} - 
				\frac{
					1
				}{
					1 - \sbracket{ 	\link[\tr{\mat{\gamma}}\tilde{\mat{x}}]{0,\mat{\theta}}{} - \kappa
					}F_{u}\rbracket{Y|\mat{Z}}
				} = \frac{\kappa F_{u}\rbracket{Y|\mat{Z}}}{\rbracket{1 - \xi}^{2}}
			\end{equation*}
			for some $\sbracket{ 	\link[\tr{\mat{\gamma}}\tilde{\mat{x}}]{0,\mat{\theta}}{} - \kappa}F_{u}\rbracket{Y|\mat{Z}} < \xi < \link[\tr{\mat{\gamma}}\tilde{\mat{x}}]{0,\mat{\theta}}{}F_{u}\rbracket{Y|\mat{Z}}$. Since $\varphi_{0,\mat{\theta}} \in \family{M}_{\epsilon^{\prime}}$, we have $\link[\tr{\mat{\gamma}}\tilde{\mat{x}}]{0,\mat{\theta}}{} - \kappa \geq \epsilon^{\prime}$ and $1>\xi > \epsilon^{\prime}F_{u}\rbracket{Y|\mat{Z}} \geq 0$.	
			It follows that 
			\begin{equation}
					\label{eqn:difference}
			\begin{split}
							&\expect[\sbracket]{\left.
						\frac{
							1 - \link[\tr{\mat{\gamma}}_{0}\mat{X}]{0}{}F_{u,0}
						}{
							1 - \link[\tr{\mat{\gamma}}\tilde{\mat{x}}]{0,\mat{\theta}}{}F_{u}\rbracket{Y|\mat{Z}}
						} - 
						\frac{
							1 - \link[\tr{\mat{\gamma}}_{0}\mat{X}]{0}{}F_{u,0}
						}{
							1 - \sbracket{ 	\link[\tr{\mat{\gamma}}\tilde{\mat{x}}]{0,\mat{\theta}}{} - \kappa
							}F_{u}\rbracket{Y|\mat{Z}}
						}
						\right\vert\tr{\mat{\gamma}}\mat{X}=u
					}\\%
					&\geq
					\expect[\sbracket]{\left.
						\cbracket{
							1 - \link[\tr{\mat{\gamma}}_{0}\mat{X}]{0}{}F_{u,0}
						}\kappa F_{u}\rbracket{Y|\mat{Z}}
						\right\vert\tr{\mat{\gamma}}\mat{X}=u
					}\\%
					&=
					\kappa\cdot\expect[\sbracket]{\left.
						\expect[\sbracket]{
							1 - \Delta | \mat{X}, Y, \mat{Z}	
						}F_{u}\rbracket{Y|\mat{Z}}
						\right\vert\tr{\mat{\gamma}}\mat{X}=u
					}\\%
					&\geq
					\kappa\cdot\expect[\sbracket]{\left.
						\rbracket{1 - \Delta}F_{u}\rbracket{Y|\mat{Z}}
						\indicator{Y > \tau_{0}}
						\right\vert\tr{\mat{\gamma}}\mat{X}=u
					}\\%
					&=
					\kappa\cdot\prob[\sbracket]{\left.
						Y > \tau_{0}
						\right\vert\tr{\mat{\gamma}}\mat{X}=u
					}\\%
					&=
					\kappa\cdot\prob[\sbracket]{\left.
						B = 0, C > \tau_{0}
						\right\vert\tr{\mat{\gamma}}\mat{X}=u
					}\\%
					&=
					\kappa\cdot\expect[\sbracket]{\left.
						\expect[\sbracket]{\left.
							\indicator{B = 0}\indicator{C > \tau_{0}}
							\right\vert\mat{X}
						}
						\right\vert\tr{\mat{\gamma}}\mat{X}=u
					}\\%
					&\geq
					c\epsilon\kappa>0.
			\end{split}
		\end{equation}
			Here the second equality follows from the definition of $\tau_{0}$ in Assumption \ref{enum:latency_assumptions}\ref{enum:cure_threshold} 
			and $F_{u}\rbracket{y|\mat{z}} = 1$ if $y > \tau_{0}$. The last inequality follows from $\prob[\sbracket]{B = 1, C \geq \tau_{0} | \mat{X}} \geq c\epsilon$ as shown in \eref{eq:positive_expect_censor}. 
			By Assumption \ref{enum:expectation_ratio_continuity}, 
			there exists $\eta > 0$, such that 
			$A(u)< 1 + c\epsilon\kappa$ for $u \in ({\tr{\mat{\gamma}}\tilde{\mat{x}} - \eta, \tr{\mat{\gamma}}\tilde{\mat{x}} + \eta})$.  Equation \eqref{eqn:difference} implies that, for $u \in (\tr{\mat{\gamma}}\tilde{\mat{x}} - \eta,$ $ \tr{\mat{\gamma}}\tilde{\mat{x}})$, we have
			\begin{align*}
				\begin{split}
					&\expect[\sbracket]{\left.
						\frac{
							1 - \link[\tr{\mat{\gamma}}_{0}\mat{X}]{0}{}F_{u,0}
						}{
							1 - \sbracket{ 	\link[\tr{\mat{\gamma}}\tilde{\mat{x}}]{0,\mat{\theta}}{} - \kappa
							}F_{u}
						}
						\right\vert\tr{\mat{\gamma}}\mat{X}=u
					} + c\epsilon\kappa\\%
					&\leq
					\expect[\sbracket]{\left.
						\frac{
							1 - \link[\tr{\mat{\gamma}}_{0}\mat{X}]{0}{}F_{u,0}
						}{
							1 - \link[\tr{\mat{\gamma}}\tilde{\mat{x}}]{0,\mat{\theta}}{}F_{u}
						}
						\right\vert\tr{\mat{\gamma}}\mat{X}=u
					}=A(u)
					< 1 + c\epsilon\kappa.
				\end{split}
			\end{align*}
			From \eqref{eq:cond_expect_ineq}, it follows that, for $u \in ({\tr{\mat{\gamma}}\tilde{\mat{x}} - \eta, \tr{\mat{\gamma}}\tilde{\mat{x}}})$,
			\begin{equation*}
				\begin{split}
				&\expect[\sbracket]{\left.
					\frac{
						1 - \varphi_{0}(\tr{\mat{\gamma}}_{0}\mat{X})F_{u,0}
					}{
						1 - \link[u]{0,\mat{\theta}}{}F_{u}
					}
					\right\vert\tr{\mat{\gamma}}\mat{X}=u
				}\\
			&
				\leq 
				\expect[\sbracket]{\left.
					\frac{
						1 -\varphi_{0}(\tr{\mat{\gamma}}_{0}\mat{X})F_{u,0}
					}{
						1 - \sbracket{ 	\link[\tr{\mat{\gamma}}\tilde{\mat{x}}]{0,\mat{\theta}}{} - \kappa
						}F_{u}
					}
					\right\vert\tr{\mat{\gamma}}\mat{X}=u
				}
				< 1.
				\end{split}
			\end{equation*}
			By a similar construction of $\tilde{\varphi}$ as in the first case, we get $H'(0) > 0$ which leads to a contradiction.%
			
	\noindent (ii)
			If $\varphi_{0,\mat{\theta}}$ is left-continuous at $\tr{\mat{\gamma}}\tilde{\mat{x}} $, we have
			$\link[\tr{\mat{\gamma}}\tilde{\mat{x}}]{0,\mat{\theta}}{} < \link[\tr{\mat{\gamma}}\tilde{\mat{x}}]{0,\mat{\theta}}{} + \kappa \leq \link[u]{0,\mat{\theta}}{}$ for $u > \tr{\mat{\gamma}}\tilde{\mat{x}}$. This implies that
			\begin{align*}
				\begin{split}
					\expect[\sbracket]{\left.
						\frac{
							1 - \varphi_{0}(\tr{\mat{\gamma}}_{0}\mat{X})F_{u,0}
						}{
							1 - \link[\tr{\mat{\gamma}}\tilde{\mat{x}}]{0,\mat{\theta}}{}F_{u}
						}
						\right\vert\tr{\mat{\gamma}}\mat{X}=u
					}
					&\leq
					\expect[\sbracket]{\left.
						\frac{
							1 - \link[\tr{\mat{\gamma}}_{0}\mat{X}]{0}{}F_{u,0}
						}{
							1 - \sbracket{
								\link[\tr{\mat{\gamma}}\tilde{\mat{x}}]{0,\mat{\theta}}{} + \kappa
							}F_{u}
						}
						\right\vert\tr{\mat{\gamma}}\mat{X}=u
					}\\
					&<
					\expect[\sbracket]{\left.
						\frac{
							1 - \link[\tr{\mat{\gamma}}_{0}\mat{X}]{0}{}F_{u,0}
						}{
							1 - \link[u]{0,\mat{\theta}}{}F_{u}
						}
						\right\vert\tr{\mat{\gamma}}\mat{X}=u
					}.\\
				\end{split}
			\end{align*}
			As in (i),
			\begin{equation*}
				\frac{
					1
				}{
					1 - \sbracket{ 	\link[\tr{\mat{\gamma}}\tilde{\mat{x}}]{0,\mat{\theta}}{} + \kappa
					}F_{u}\rbracket{Y|\mat{Z}}
				} -
				\frac{
					1
				}{
					1 - \link[\tr{\mat{\gamma}}\tilde{\mat{x}}]{0,\mat{\theta}}{}F_{u}
				} 
				= \frac{
					\kappa F_{u}
				}{\rbracket{1 - \xi}^{2}}
			\end{equation*}
			for some $\link[\tr{\mat{\gamma}}\tilde{\mat{x}}]{0,\mat{\theta}}{}F_{u}
			< \xi < \sbracket{ 	\link[\tr{\mat{\gamma}}\tilde{\mat{x}}]{0,\mat{\theta}}{} + \kappa}F_{u}
			$. Using a similar argument in (i), one can show that
			\begin{equation*}
				\expect[\sbracket]{\left.
					\frac{
						1 - \link[\tr{\mat{\gamma}}_{0}\mat{X}]{0}{}F_{u,0}
					}{
						1 - \sbracket{ 	\link[\tr{\mat{\gamma}}\tilde{\mat{x}}]{0,\mat{\theta}}{} + \kappa
						}F_{u}
					}
					\right\vert\tr{\mat{\gamma}}\mat{X}=u
				}
				>A(u)+ c\epsilon\kappa.
			\end{equation*}
			By Assumption \ref{enum:expectation_ratio_continuity}, 
			there exists $\eta > 0$, such that $A(u)> 1 - c\epsilon\kappa$ for $u \in ({\tr{\mat{\gamma}}\tilde{\mat{x}} - \eta, \tr{\mat{\gamma}}\tilde{\mat{x}} + \eta})$.
			Hence, we have for $u \in ({\tr{\mat{\gamma}}\tilde{\mat{x}}, \tr{\mat{\gamma}}\tilde{\mat{x}}} + \eta)$,
			\begin{equation*}
				\expect[\sbracket]{\left.
					\frac{
						1 - \varphi_{0}(\tr{\mat{\gamma}}_{0}\mat{X})F_{u,0}
					}{
						1 - \link[u]{0,\mat{\theta}}{}F_{u}
					}
					\right\vert\tr{\mat{\gamma}}\mat{X}=u
				}
				\geq 
				\expect[\sbracket]{\left.
					\frac{
						1 - \varphi_{0}(\tr{\mat{\gamma}}_{0}\mat{X})F_{u,0}
					}{
						1 - [	\varphi_{0,\mat{\theta}}(\tr{\mat{\gamma}}\tilde{\mat{x}}) + \kappa
						]F_{u}
					}
					\right\vert\tr{\mat{\gamma}}\mat{X}=u
				}
				> 1.
			\end{equation*}
			By a similar construction of $\tilde{\varphi}$ as in the second case, we get $H'(0) > 0$ and leads to a contradiction.%
	\end{proof}

For the proof of Proposition \ref{prop:link_mle_convergence}
we need results on the entropy numbers of certain classes of functions, which we derive below. First we introduce some notation that will be used in the following series of Lemmas.  Consider a class of functions $\family{F}$ equipped with a norm $\norm{\cdot}{}$. For $\zeta > 0$, the bracketing number $\bracketnum{\family{F}}{\norm{\cdot}{}} = N$ is the minimal number of pairs of functions $\{[{f_{j}^{L}, f_{j}^{U}}]\colon j = 1, \cdots, N\}$ such that $\Vert{f_{j}^{U} - f_{j}^{L}}\Vert\leq \zeta$, for all $j = 1, \cdots, N$ and for each $f \in \family{F}$, there is a $j \in \cbracket{1, \cdots, N}$ such that $f_{j}^{L} \leq f \leq f_{j}^{U}$. The $\zeta$--entropy with bracketing of $\family{F}$ is the logarithm of the bracketing number, $\bracketentropy{\family{F}}{\norm{\cdot}{}} = \log\bracketnum{\family{F}}{\norm{\cdot}{}}$. $\mathbb{P}_{0}$ denotes the distribution of $\rbracket{Y, \Delta, \mat{X}, \mat{Z}}$. We will consider the following classes of functions.
\begin{itemize}
	\item
	$\family{M}$ is the class of all non-decreasing bounded functions on $\real$.
	\item
	$\family{F}$ is the class of functions $F\rbracket{y, \mat{z}} = 
	1 - \exp\sbracket{-\function{\Lambda}{}{}{y}\exp\rbracket{\tr{\mat{\beta}}\mat{z}}}$, $y \in [0,\tau_{0}]$ and $\mat{z} \in \family{Z}$, where  $\Lambda \in \tilde{\family{D}}$ and $\mat{\beta} \in \family{B}$.
	\item
	$\family{G}_{\epsilon^{\prime}}$ is the class of functions $g\rbracket{y, \delta, \mat{x}, \mat{z}} = \rbracket{1 - \delta}\log\sbracket{1 - \link[\tr{\mat{\gamma}}\mat{x}]{}{}\function{F}{}{}{y,\mat{z}}}$, $y \in [0, \tau_{0}]$, $\delta \in \cbracket{0, 1}$, $\mat{x} \in \family{X}$ and $\mat{z} \in \family{Z}$, where $\mat{\gamma} \in \family{S}_{d-1}$, $\varphi \in \family{M}_{\epsilon^{\prime}}$ and $F \in \family{F}$,
	\item
	$\family{H}_{\epsilon^{\prime}}$ is the class of functions $h\rbracket{\delta, \mat{x}} = \delta\log\link[\tr{\mat{\gamma}}\mat{x}]{}{}$, $\delta \in \cbracket{0, 1}$ and $\mat{x} \in \family{X}$, where $\mat{\gamma} \in \family{S}_{d-1}$ and $\varphi \in \family{M}_{\epsilon^{\prime}}$.
	\item
	$\family{L}_{\epsilon^{\prime}}$ is the class of functions $$l\rbracket{y, \delta, \mat{x}, \mat{z}} = \delta\log\link[\tr{\mat{\gamma}}\mat{x}]{}{} + \rbracket{1 - \delta}\log\sbracket{1 - \link[\tr{\mat{\gamma}}\mat{x}]{}{}\function{F}{}{}{y,\mat{z}}},$$ $y \in [0, \tau_{0}]$, $\delta \in \cbracket{0, 1}$, $\mat{x} \in \family{X}$ and $\mat{z} \in \family{Z}$, where $\mat{\gamma} \in \family{S}_{d-1}$, $\varphi \in \family{M}_{\epsilon^{\prime}}$ and $F_{} \in \family{F}$.
	\item
	$\tilde{\family{L}}_{\epsilon^{\prime}}$ is the class of functions
	\begin{align*}
		\tilde{l}\rbracket{y, \delta, \mat{x}, \mat{z}} &= \delta\sbracket{
			\log\varphi_1(\tr{\mat{\gamma}}\mat{x}) -
			\log\varphi_2(\tr{\mat{\gamma}}\mat{x})
		}\\
		& + (1 - \delta)\{
			\log[1 - \varphi_1(\tr{\mat{\gamma}}\mat{x})F(y,\mat{z})] -
			\log[1 - \varphi_1(\tr{\mat{\gamma}}\mat{x})F(y,\mat{z})]\}
		,
	\end{align*}
	for	$y \in [0, \tau_{0}]$, $\delta \in \cbracket{0, 1}$, $\mat{x} \in \family{X}$, $\mat{z} \in \family{Z}$, $\mat{\gamma} \in \family{S}_{d-1}$, $\varphi_{1}, \varphi_{2} \in \family{M}_{\epsilon^{\prime}}$ and $F\in \family{F}$.
\end{itemize}

We begin with a result from \cite{VW1996} 
and we use it to construct $\zeta$-brackets for the other classes.
\begin{lemma}[Theorem 2.7.5 of \cite{VW1996}
	]
	\label{lemma:bracket_entropy_M}
	There exists a constant $A > 0$ such that
	\begin{equation*}
		\bracketentropy{\family{M}}{\norm{\cdot}{r,Q}} \leq \frac{A}{\zeta},
	\end{equation*}
	for all $\zeta > 0$, $r \geq 1$, and all probability measures $Q$ on $\real$, where $\norm{\cdot}{r,Q}$ is the $L_{r}$--norm corresponding to $Q$.
\end{lemma}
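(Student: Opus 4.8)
The plan is to establish this classical bound on the bracketing entropy of bounded monotone functions by an explicit construction of brackets out of monotone step functions on a grid, counting those brackets via a lattice-path argument and controlling their width through the bounded total variation of monotone functions. First I would reduce to the canonical situation in which every $\varphi\in\family{M}$ maps into $[0,1]$: a general uniform bound merely rescales the bracket widths and can be absorbed into the constant $A$ (and, as in Theorem 2.7.5 of \cite{VW1996}, $A$ may be allowed to depend on $r$). Working directly with the given probability measure $Q$ to avoid any quantile-transform translation issues, I would then partition the line into $m\asymp 1/\zeta$ cells $C_{1},\dots,C_{m}$ along the quantiles of $Q$, so that each $Q(C_{j})$ is of order $1/m$, and partition the range $[0,1]$ by the equispaced levels $c_{\ell}=\ell/m$.

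Given $f\in\family{M}$, I would bracket it between the smallest grid-valued step function lying above it and the largest lying below it: on each cell $C_{j}$ set $f^{U}$ equal to $\lceil m\sup_{C_{j}}f\rceil/m$ and $f^{L}$ equal to $\lfloor m\inf_{C_{j}}f\rfloor/m$. Both are non-decreasing step functions taking values among the $c_{\ell}$ and jumping only at the cell boundaries, so each admissible pair corresponds to a monotone staircase on the $m\times m$ grid. The number of such staircases is at most $\binom{2m}{m}\le 4^{m}$, whence $\bracketentropy{\family{M}}{\norm{\cdot}{r,Q}}\le m\log 4=O(1/\zeta)$, which is exactly the asserted order; the count is automatically uniform over all $Q$ and all $r$ because it is purely combinatorial.

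It remains to verify that each bracket satisfies $\norm{f^{U}-f^{L}}{r,Q}\le\zeta$. On $C_{j}$ the width is at most $v_{j}+2/m$, where $v_{j}=\sup_{C_{j}}f-\inf_{C_{j}}f$ is the oscillation of $f$; the decisive input is that, by monotonicity together with $f\in[0,1]$, these oscillations telescope, $\sum_{j}v_{j}\le 1$. The hard part will be converting this $L_{1}$-type control into an $L_{r}$ bound while keeping the staircase count at $\exp(O(1/\zeta))$: a single uniform grid forces, for $r>1$, either an $r$-dependent blow-up of the width on steep cells or a spurious $\log(1/\zeta)$ factor in the cardinality, since $\sum_{j}(v_{j}+2/m)^{r}Q(C_{j})$ is dominated by the cells where $f$ rises steeply. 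I expect the resolution to be an adaptive partition — fine where $f$ is flat and coarse where it is steep, the steep part carrying little $Q$-mass precisely because $\sum_{j}v_{j}\le 1$ — so that these contributions sum to at most $\zeta^{r}$ with $m\asymp 1/\zeta$. This monotonicity-driven trade-off between bracket width and bracket cardinality is the main obstacle, and it is exactly what delivers a bound of order $1/\zeta$ that is free of logarithmic factors and uniform over $Q$.
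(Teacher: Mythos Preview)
The paper does not supply a proof of this lemma at all: it is stated verbatim as Theorem~2.7.5 of \cite{VW1996} and then invoked as a black box in the proofs of Lemma~\ref{lemma:bracket_entropy_F} and Lemma~\ref{lemma:cont_monotone_cover}. There is therefore no ``paper's own proof'' to compare your proposal against.

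For what it is worth, your sketch is the standard quantile-grid / staircase construction that underlies the cited theorem, and you have correctly isolated the only genuinely delicate step, namely upgrading the $L_{1}$ control $\sum_{j}v_{j}\le 1$ to an $L_{r}$ bracket bound of order $\zeta$ while keeping the cardinality at $\exp(O(1/\zeta))$. Your remark that the constant $A$ may depend on $r$ is also consistent with how the result is stated in \cite{VW1996}. Since the paper treats the lemma as an external citation, any self-contained argument along these lines would be an addition rather than a replacement.
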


\begin{lemma}
	\label{lemma:bracket_entropy_F}
	Suppose that Assumptions \ref{enum:bounded_support} and \ref{enum:bound_dens} 
	are satisfied 
	Let $\zeta > 0$. There exists a constant $C_{1} > 0$ depending on $\bar{q}_{2}$, and $r_{2}$ only such that
	\begin{equation*}
		\bracketentropy{\family{F}}{\norm{\cdot}{1,\mathbb{P}_{0}}} \leq \frac{C_{1}\rbracket{q + 1}}{\zeta}.
	\end{equation*}
	
	Furthermore, 
	there exists a constant $C_{2} > 0$ depending on $\bar{q}_{1}$, and $r_{1}$ and a constant $C_{3} > 0$ depending on $\epsilon^{\prime}, \bar{q}_{1}, \bar{q}_{2}, r_{1}$, and $r_{2}$ such that
	\begin{equation*}
		\bracketentropy{\family{G}_{\epsilon^{\prime}}}{\norm{\cdot}{1,\mathbb{P}_{0}}} \leq \frac{C_{2}d + C_{3}\rbracket{q + 1}}{\zeta}.
	\end{equation*}
	
	Moreover, there exists a constant $C_{4} > 0$ depending on $\epsilon^{\prime}, \bar{q}_{1}$, and $r_{1}$ such that
	\begin{equation*}
		\bracketentropy{\family{H}_{\epsilon^{\prime}}}{\norm{\cdot}{1,\mathbb{P}_{0}}} \leq \frac{C_{4}\rbracket{d + 1}}{\zeta}.
	\end{equation*}
\end{lemma}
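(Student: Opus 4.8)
The plan is to construct $\zeta$-brackets for each of the three classes by the same recipe: bracket the \emph{monotone} ingredient through Lemma~\ref{lemma:bracket_entropy_M}, cover the Euclidean parameter ($\mat{\beta}\in\family{B}\subset\real[q]$ or $\mat{\gamma}\in\family{S}_{d-1}$) by a finite grid, and glue the two using Lipschitz continuity of the outer transformation on a region where it stays bounded away from its singularities. Every bound ends up of order $1/\zeta$ because the monotone part alone contributes entropy $A/\zeta$, whereas a compact subset of $\real[q]$ or $\family{S}_{d-1}$ contributes only $O(\log(1/\zeta))$, which is absorbed into the $q/\zeta$ or $d/\zeta$ terms once $\zeta$ is small. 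Throughout set $R_{2}:=\sup_{\mat{\beta}\in\family{B}}\sup_{\mat{z}\in\family{Z}}|\tr{\mat{\beta}}\mat{z}|$, finite by \ref*{enum:bounded_support} and \ref*{enum:compact_param_space}.

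\textbf{Class $\family{F}$.} The map $F=1-\exp(-\Lambda(y)e^{\tr{\mat{\beta}}\mat{z}})$ is non-decreasing in $\Lambda(y)$, has derivative in $\Lambda$ bounded by $e^{R_{2}}$, and is Lipschitz in $\tr{\mat{\beta}}\mat{z}$ with constant at most $M e^{R_{2}}$ on $\tilde{\family{D}}$ (using $\Lambda\leq M$). First I would bracket $\Lambda\in\tilde{\family{D}}$, a family of monotone functions bounded by $M$, via Lemma~\ref{lemma:bracket_entropy_M} relative to the law of $Y$ under $\mathbb{P}_{0}$, obtaining pairs $[\Lambda_{j}^{L},\Lambda_{j}^{U}]$ of $L_{1}$-width $\leq\zeta_{1}$ and $\log$-cardinality $\leq A/\zeta_{1}$. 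Monotonicity in $\Lambda$ turns these into brackets $1-\exp(-\Lambda_{j}^{L}(y)e^{\tr{\mat{\beta}}\mat{z}})\leq F\leq 1-\exp(-\Lambda_{j}^{U}(y)e^{\tr{\mat{\beta}}\mat{z}})$ of $L_{1}(\mathbb{P}_{0})$-width $\leq e^{R_{2}}\zeta_{1}$ for each fixed $\mat{\beta}$. I would then cover $\family{B}$ by balls of radius $\rho_{1}$ ($\log$-cardinality $\lesssim q\log(1/\rho_{1})$), widen each bracket by the constant $M e^{R_{2}}r_{2}\rho_{1}$ to absorb the variation in $\mat{\beta}$, and take $\zeta_{1}\asymp\zeta$, $\rho_{1}\asymp\zeta$ so the total width is $\leq\zeta$. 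Adding the two $\log$-cardinalities gives $\bracketentropy{\family{F}}{\norm{\cdot}{1,\mathbb{P}_{0}}}\leq Ae^{R_{2}}/\zeta+c\,q\log(1/\zeta)\leq C_{1}(q+1)/\zeta$, with $C_{1}$ depending on $\bar{q}_{2},r_{2}$ as claimed.

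\textbf{Class $\family{H}_{\epsilon'}$ and the main difficulty.} Here $\psi:=\log\varphi$ is non-decreasing and bounded in $[\log\epsilon',0]$, and $h=\delta\,\psi(\tr{\mat{\gamma}}\mat{x})$. For a \emph{fixed} $\mat{\gamma}$, Lemma~\ref{lemma:bracket_entropy_M} applied to the law of $\tr{\mat{\gamma}}\mat{X}$ brackets $\psi(\tr{\mat{\gamma}}\mat{x})$ with $\log$-cardinality $\leq A/\zeta$, uniformly in $\mat{\gamma}$ since $A$ is universal over probability measures; the difficulty is that these brackets depend on $\mat{\gamma}$, and because $\varphi$ is only monotone (not Lipschitz) I cannot control the change of $\psi(\tr{\mat{\gamma}}\mat{x})$ in $\mat{\gamma}$ by a Lipschitz estimate. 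I would resolve this by a shift argument: cover $\family{S}_{d-1}$ by balls $B_{\mat{\gamma}_{k}}(\rho_{2})$, so that for $\norm{\mat{\gamma}-\mat{\gamma}_{k}}{2}\leq\rho_{2}$ one has $|\tr{\mat{\gamma}}\mat{x}-\tr{\mat{\gamma}_{k}}\mat{x}|\leq\rho_{2}r_{1}$ and monotonicity yields $\psi(\tr{\mat{\gamma}_{k}}\mat{x}-\rho_{2}r_{1})\leq\psi(\tr{\mat{\gamma}}\mat{x})\leq\psi(\tr{\mat{\gamma}_{k}}\mat{x}+\rho_{2}r_{1})$. Bracketing the monotone family $\{\psi\}$ and forming the pair $[\psi_{j}^{L}(\tr{\mat{\gamma}_{k}}\mat{x}-\rho_{2}r_{1}),\,\psi_{j}^{U}(\tr{\mat{\gamma}_{k}}\mat{x}+\rho_{2}r_{1})]$, the $L_{1}(\mathbb{P}_{0})$-width splits into a genuine bracket-width term (controlled by Lemma~\ref{lemma:bracket_entropy_M} with respect to the shifted law) plus a modulus term $\expect[\sbracket]{\psi_{j}^{L}(\tr{\mat{\gamma}_{k}}\mat{X}+\rho_{2}r_{1})-\psi_{j}^{L}(\tr{\mat{\gamma}_{k}}\mat{X}-\rho_{2}r_{1})}$. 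The latter is bounded, via the identity $\int\{f(u+a)-f(u-a)\}\,\dd u=2a\{f(\infty)-f(-\infty)\}$ for bounded monotone $f$, the range bound $|\log\epsilon'|$ on $\psi$, and the density bound $\bar{q}_{1}$ of \ref*{enum:bound_dens}, by $2\bar{q}_{1}r_{1}|\log\epsilon'|\rho_{2}$. Taking $\rho_{2}\asymp\zeta$ (with the implied constant depending on $\epsilon',\bar{q}_{1},r_{1}$) makes this $\lesssim\zeta$, and summing $d\log(1/\rho_{2})$ with $A/\zeta$ gives $\bracketentropy{\family{H}_{\epsilon'}}{\norm{\cdot}{1,\mathbb{P}_{0}}}\leq C_{4}(d+1)/\zeta$.

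\textbf{Class $\family{G}_{\epsilon'}$.} This merges the previous constructions through the outer map $(u,v)\mapsto\log(1-uv)$, which on $u\in[\epsilon',1-\epsilon']$, $v\in[0,1]$ stays bounded away from its singularity since $1-uv\geq\epsilon'$, whence $|\log(1-u_{1}v_{1})-\log(1-u_{2}v_{2})|\leq(1/\epsilon')(|u_{1}-u_{2}|+|v_{1}-v_{2}|)$. I would bracket $v=F(y,\mat{z})$ as in the $\family{F}$ step ($\log$-cardinality $\lesssim(q+1)/\zeta$) and $u=\varphi(\tr{\mat{\gamma}}\mat{x})$ by the single-index shift argument of the previous step; crucially, here one brackets $\varphi$ \emph{itself}, whose range is $\leq1$, rather than $\log\varphi$, so the modulus term is $2\bar{q}_{1}r_{1}\rho_{2}$ and the $d$-term carries a constant $C_{2}$ depending only on $\bar{q}_{1},r_{1}$, the $\epsilon'$-dependence being pushed into the additive $(q+1)$-term. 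Since $\log(1-uv)$ decreases in $u$ and $v$, the pair $[(1-\delta)\log(1-\varphi^{U}F^{U}),\,(1-\delta)\log(1-\varphi^{L}F^{L})]$ is a valid bracket, and using $\varphi^{U}F^{U}-\varphi^{L}F^{L}\leq(\varphi^{U}-\varphi^{L})+(F^{U}-F^{L})$ its width is $\leq(1/\epsilon')\expect[\sbracket]{(\varphi^{U}-\varphi^{L})(\tr{\mat{\gamma}}\mat{X})+(F^{U}-F^{L})(Y,\mat{Z})}$. Calibrating both widths and adding the two $\log$-cardinalities (brackets multiply, so entropies add) yields $\bracketentropy{\family{G}_{\epsilon'}}{\norm{\cdot}{1,\mathbb{P}_{0}}}\leq(C_{2}d+C_{3}(q+1))/\zeta$, with $C_{3}$ depending on $\epsilon',\bar{q}_{1},\bar{q}_{2},r_{1},r_{2}$. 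The single genuinely delicate point throughout is the single-index uniformization of the monotone link in $\family{H}_{\epsilon'}$ and $\family{G}_{\epsilon'}$; once the shift argument is in place, the latency bracketing and the Lipschitz gluing are routine.
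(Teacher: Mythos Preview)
Your proposal is correct and follows essentially the same route as the paper: bracket the monotone ingredient ($\Lambda$ or $\varphi$) via Lemma~\ref{lemma:bracket_entropy_M}, cover the Euclidean parameter by a finite net, and glue via Lipschitz/mean-value bounds on the outer map; your shift argument for the index and the modulus bound $\int\{f(u+a)-f(u-a)\}\,\dd u\leq 2a\cdot\mathrm{range}(f)$ together with the density bound $\bar q_{1}$ are exactly the paper's inequality~\eqref{eq:link_diff_bound}. The only noticeable variation is in $\family{F}$: you absorb the $\mat\beta$-variation by a uniform Lipschitz widening (constant $Me^{R_{2}}r_{2}\rho_{1}$), whereas the paper instead shifts the exponent, $\exp(\tr{\mat\beta_{j}}\mat z\pm\zeta_{\mat\beta}r_{2})$, and then controls the resulting integral via the density bound $\bar q_{2}$---which is why $\bar q_{2}$ enters their $C_{1}$; your route is slightly cleaner and does not actually need $\bar q_{2}$, so the stated dependence is simply conservative.
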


\begin{proof}
	Let $\zeta_{\mat{\beta}} \in (0, 1)$. Since $\family{B}$ is a compact subset of $\real[q]$, it can be covered by $N_{1}$ balls with radius $\zeta_{\mat{\beta}}$, where $N_{1} \leq (A_1/\zeta_{\mat{\beta}})^{q}$ for some constant $A_1 > 0$. Let $\cbracket{\mat{\beta}_{1}, \cdots, \mat{\beta}_{N_{1}}}$ be the centers of such balls. Consider $f\rbracket{y, \mat{z}} = 1 - \exp\sbracket{-\function{\Lambda}{}{}{y}\exp\rbracket{\tr{\mat{\beta}}\mat{z}}} \in \family{F}$, $y \in [0,\tau_{0}]$ and $\mat{z} \in \family{Z}$, for some $\Lambda \in \tilde{\family{D}}$ and $\mat{\beta} \in \family{B}$. We can find $j \in \cbracket{1, \cdots, N_{1}}$ such that $\norm{\mat{\beta} - \mat{\beta}_{j}}{2} \leq \zeta_{\mat{\beta}}$. By the monotonicity of the exponential function and the Cauchy-Schwarz inequality, we have
	\begin{equation*}
		\exp\rbracket{\tr{\mat{\beta}}_{j}\mat{z} - \zeta_{\mat{\beta}}r_{2}} 
		\leq 
		\exp\rbracket{\tr{\mat{\beta}}\mat{z}} 
		\leq
		\exp\rbracket{\tr{\mat{\beta}}_{j}\mat{z} + \zeta_{\mat{\beta}}r_{2}},
	\end{equation*}
	for all $\mat{z} \in \family{Z}$. Let $\zeta_{\Lambda} > 0$. By Lemma \ref{lemma:bracket_entropy_M},  the class $\tilde{\family{D}}$ can be covered by $\zeta_{\Lambda}$--brackets $[\Lambda_{l}^{L}, \Lambda_{l}^{U}]$, $l = 1, \cdots, N_{2}$, such that
	\begin{equation*}
		\int_{[0, \tau_{0}]}\sbracket{\function{\Lambda}{l}{U}{y} - \function{\Lambda}{l}{L}{y}}^{2}\dd \function{Q}{Y}{}{y} \leq \zeta_{\Lambda}^{2},
	\end{equation*}
	where ${Q}_{Y}$ denotes the distribution of $Y$ and $N_{2} \leq \exp{\rbracket{A_2/\zeta_{\Lambda}}}$ for some constant $A_2 > 0$. 
	We note that $\Lambda_{l}^{L}$ and $\Lambda_{l}^{U}$ can always be taken to be bounded below by 0 and bounded above by the uniform upper bound $M$ of the class $\tilde{\mathcal{D}}$. Otherwise, we can take $\Lambda_{l}^{L} \vee 0$ and $\Lambda_{l}^{U} \wedge M$.
	Then, we have
	\begin{equation*}
		1 - \exp\left[-\Lambda_l^L(y)\exp(\tr{\mat{\beta}}_{j}\mat{z} - \zeta_{\mat{\beta}}r_{2})\right]
		\leq 
		f(y, \mat{z})
		\leq
		1 - \exp\sbracket{-\Lambda_l^U(y)\exp(\tr{\mat{\beta}}_{j}\mat{z} +\zeta_{\mat{\beta}}r_{2})},
	\end{equation*}
	for some $l = 1, \cdots, N_{2}$, and for all $y \in [0, \tau_{0}]$ and $\mat{z} \in \family{Z}$. We then show that, for certain choices of $\zeta_{\Lambda}$ and $\zeta_{\mat{\beta}}$,  the brackets 
	\[
	\begin{split}
[f_l^L,f_l^U]=&\left[1 - \exp\cbracket{-\Lambda_l^L(y)\exp(\tr{\mat{\beta}}_{j}\mat{z} - \zeta_{\mat{\beta}}r_{2})},\right.\\
& \quad 1 - \left.\exp\cbracket{-\Lambda_l^U(y)\exp(\tr{\mat{\beta}}_{j}\mat{z} + \zeta_{\mat{\beta}}r_{2})}\right]
	\end{split}
		\]
	are $\zeta$-brackets with respect to $\Vert\cdot\Vert_{1,\mathbb{P}_{0}}$ that cover $\family{F}$. We have 
	\begin{align}
		\begin{split}
			\label{eq:bracket_size_F_L1}
			\Vert f_l^U-f_l^L\Vert_{1,\mathbb{P}_{0}}=	&\int_{[0, \tau_{0}] \times \family{Z}}
			\bigg|
			\exp\sbracket{-\function{\Lambda}{l}{U}{y}\exp\rbracket{\tr{\mat{\beta}}_{j}\mat{z} + \zeta_{\mat{\beta}}r_{2}}}\\%
			&- \exp\sbracket{-\function{\Lambda}{l}{L}{y}\exp\rbracket{\tr{\mat{\beta}}_{j}\mat{z} - \zeta_{\mat{\beta}}r_{2}}}
			\bigg|
			\dd Q_{Y, \mat{Z}}\left(y, \mat{z}\right).
		\end{split}
	\end{align}
	Considering the integrand of \eref{eq:bracket_size_F_L1}, by the mean value theorem we obtain
	\begin{align*}
		\begin{split}
			&\bigg|
			\exp\sbracket{-\function{\Lambda}{l}{U}{y}\exp\rbracket{\tr{\mat{\beta}}_{j}\mat{z} + \zeta_{\mat{\beta}}r_{2}}}
			- \exp\sbracket{-\function{\Lambda}{l}{L}{y}\exp\rbracket{\tr{\mat{\beta}}_{j}\mat{z} - \zeta_{\mat{\beta}}r_{2}}}
			\bigg|\\%
			&=
			\bigg|
			\exp\rbracket{-\xi}\sbracket{
				\function{\Lambda}{l}{L}{y}\exp\rbracket{\tr{\mat{\beta}}_{j}\mat{z} - \zeta_{\mat{\beta}}r_{2}}
				-
				\function{\Lambda}{l}{U}{y}\exp\rbracket{\tr{\mat{\beta}}_{j}\mat{z} + \zeta_{\mat{\beta}}r_{2}}
			}
			\bigg|\\
			&\leq 
			\bigg|
			\function{\Lambda}{l}{L}{y}\exp\rbracket{\tr{\mat{\beta}}_{j}\mat{z} - \zeta_{\mat{\beta}}r_{2}}
			-
			\function{\Lambda}{l}{U}{y}\exp\rbracket{\tr{\mat{\beta}}_{j}\mat{z} + \zeta_{\mat{\beta}}r_{2}}
			\bigg|,
		\end{split}
	\end{align*}
	for some $\xi \in (\function{\Lambda}{l}{L}{y}\exp\rbracket{\tr{\mat{\beta}}_{j}\mat{z} - \zeta_{\mat{\beta}}r_{2}}, \function{\Lambda}{l}{U}{y}\exp\rbracket{\tr{\mat{\beta}}_{j}\mat{z} + \zeta_{\mat{\beta}}r_{2}})$. As a result, from  Minkowski inequality it follows that
	\begin{align}
		\begin{split}
			\label{eq:bound_F_L1}
			\Vert f_l^U-f_l^L\Vert_{1,\mathbb{P}_{0}}
			&\leq
			\int_{[0, \tau_{0}] \times \family{Z}}		
			\left|
			\function{\Lambda}{l}{L}{y} - \function{\Lambda}{}{}{y}
			\right|\exp(\tr{\mat{\beta}}_{j}\mat{z} - \zeta_{\mat{\beta}}r_{2})
			\dd Q_{Y, \mat{Z}}\left(y, \mat{z}\right)\\%
			&+
			\int_{[0, \tau_{0}] \times \family{Z}}
			\Lambda(y)
			\left|
			\exp(\tr{\mat{\beta}}_{j}\mat{z} - \zeta_{\mat{\beta}}r_{2})
			-
			\exp(\tr{\mat{\beta}}_{j}\mat{z} + \zeta_{\mat{\beta}}r_{2})
			\right|
			\dd Q_{Y, \mat{Z}}\left(y, \mat{z}\right)\\%
			&+
			\int_{[0, \tau_{0}] \times \family{Z}}		
			\left|
			\function{\Lambda}{}{}{y} - \function{\Lambda}{l}{U}{y}
			\right|\exp(\tr{\mat{\beta}}_{j}\mat{z} + \zeta_{\mat{\beta}}r_{2})
			\dd Q_{Y, \mat{Z}}\left(y, \mat{z}\right),%
		\end{split}
	\end{align}
	Consider the first integral on the right hand side of the last inequality in \eref{eq:bound_F_L1}. Using the Cauchy-Schwarz inequality, we have
	\begin{align*}
		\begin{split}
			&\int_{[0, \tau_{0}] \times \family{Z}}
			\left|
			\function{\Lambda}{l}{L}{y} - \function{\Lambda}{}{}{y}
			\right|\exp\rbracket{\tr{\mat{\beta}}_{j}\mat{z} - \zeta_{\mat{\beta}}r_{2}}
			\dd Q_{Y, \mat{Z}}\left(y, \mat{z}\right)\\
			&\leq
			\cbracket{
				\int_{[0, \tau_{0}]}\sbracket{
					\function{\Lambda}{l}{L}{y} - \function{\Lambda}{}{}{y}
				}^{2}\dd Q_{Y}\rbracket{y}
				\cdot
				\int_{\family{Z}}\sbracket{
					\exp\rbracket{\tr{\mat{\beta}}_{j}\mat{z} - \zeta_{\mat{\beta}}r_{2}}
				}^{2}\dd Q_{\mat{Z}}\rbracket{\mat{z}}
			}^{\frac{1}{2}}\\
			&\leq
			\exp\rbracket{b}\zeta_{\Lambda} 
			\leq \exp\rbracket{b + r_{2}}\zeta_{\Lambda},
		\end{split}
	\end{align*}
	where $b$ is such that the support of $\tr{\mat{\beta}}_{j}\mat{Z}$ is included in an interval $\sbracket{a,b}$ for all $j$. 
	Similarly, we also have
	\begin{equation*}
		\int_{[0, \tau_{0}] \times \family{Z}}
		\left|
		\function{\Lambda}{}{}{y} - \function{\Lambda}{l}{U}{y}
		\right|\exp\rbracket{\tr{\mat{\beta}}_{j}\mat{z} + \zeta_{\mat{\beta}}r_{2}}
		\dd Q_{Y, \mat{Z}}\left(y, \mat{z}\right) \leq \exp\rbracket{b + r_{2}}\zeta_{\Lambda}.
	\end{equation*}
	For the second integral on the right hand side of  \eref{eq:bound_F_L1}, we can show that
	\begin{align*}
		\begin{split}
			&\int_{[0, \tau_{0}] \times \family{Z}}			
			\function{\Lambda}{}{}{y}
			\left|
			\exp\rbracket{\tr{\mat{\beta}}_{j}\mat{z} - \zeta_{\mat{\beta}}r_{2}}
			-
			\exp\rbracket{\tr{\mat{\beta}}_{j}\mat{z} + \zeta_{\mat{\beta}}r_{2}}
			\right|
			\dd Q_{Y, \mat{Z}}\left(y, \mat{z}\right)\\
			&\leq
			\bar{q}_{2}M
			\int_{a}^{b}
			\exp\rbracket{t + \zeta_{\mat{\beta}}r_{2}}
			-
			\exp\rbracket{t - \zeta_{\mat{\beta}}r_{2}}
			\dd t
			\\
			&=
			\bar{q}_{2}M\rbracket{
				\int_{b - \zeta_{\mat{\beta}}r_{2}}^{b + \zeta_{\mat{\beta}}r_{2}}
				\exp\rbracket{u}
				\dd u
				-
				\int_{a - \zeta_{\mat{\beta}}r_{2}}^{a + \zeta_{\mat{\beta}}r_{2}}
				\exp\rbracket{u}
				\dd u
			}\\
			&\leq
			2M\bar{q}_{2}\sbracket{
				\exp\rbracket{b + r_{2}} - \exp\rbracket{a - r_{2}}
			}r_{2}\zeta_{\mat{\beta}}\\
			&\leq
			2M\bar{q}_{2}r_{2}\exp\rbracket{b + r_{2}}\zeta_{\mat{\beta}},
		\end{split}
	\end{align*}
	where $M$ denotes the upper bound of the class of functions $\tilde{\mathcal{D}}$.
	If we take $\zeta_{\Lambda} = \frac{\zeta}{3\exp\rbracket{b + r_{2}}}$ and $\zeta_{\mat{\beta}} = 
	\frac{\zeta}{6M\bar{q}_2r_2\exp(b + r_{2})}$ , we obtain
	\begin{align*}
		\Vert f_l^U-f_l^L\Vert_{1,\mathbb{P}_{0}}\leq
		\zeta.
	\end{align*}
	Hence, using $\log x \leq x - 1$, for $x > 0$, we have
	\begin{align*}
		\begin{split}
			\bracketentropy[\zeta]{\family{F}}{\norm{\cdot}{1,\mathbb{P}_{0}}}
			&\leq
			\log N_{1} + \log N_{2}\\%
			&\leq
			q\log\rbracket{\frac{6A_1M\bar{q}_2r_2\exp\rbracket{b + r_{2}}}{\zeta}} +
			\frac{3A_2\exp\rbracket{b + r_{2}}}{\zeta}\\
			&\leq
			\frac{C_1\rbracket{q + 1}}{\zeta}.
		\end{split}
	\end{align*}
	
	Next we construct brackets for the class of functions $\family{G}_{\epsilon'}$ to show the second assertion of the Lemma. Let $\zeta_{\mat{\gamma}} > 0$. By Lemma 7.5 of \cite{BDJ2019}, $\family{S}_{d-1}$ can be covered by $N_{3}$ neighborhoods with diameter at most $\zeta_{\mat{\gamma}}$, where $N_{3} \leq \rbracket{A_3/\zeta_{\mat{\gamma}}}^{d}$ with a constant $A_3 > 0$. Let $\cbracket{\mat{\gamma}_{1}, \cdots, \mat{\gamma}_{N_{3}}}$ be elements of each these neighborhoods. Let $\zeta_{\varphi} > 0$ and consider $\zeta_{\varphi}$--brackets $[\varphi_{k}^{L}, \varphi_{k}^{U}]$, $k = 1, \cdots, N_{4}$, covering the class $\family{M}_{\epsilon^{\prime}}$ such that
	\begin{equation*}
		\int_{\real}
		\left|
		\function{\varphi}{k}{U}{t} - \function{\varphi}{k}{L}{t}
		\right|
		\dd\function{Q}{i}{\pm}{t} \leq \zeta_{\varphi},\quad k=1,\cdots,N_{4},
	\end{equation*}
	where $\function{Q}{i}{\pm}{t}$ denotes the distribution of $\tr{\mat{\gamma}}_{i}\mat{X} \pm \zeta_{\mat{\gamma}}r_{1}$, $i=1,\cdots,N_{3}$, and $N_{4} \leq \exp{\rbracket{A_4/\zeta_{\varphi}}}$ for some constant $A_4 > 0$, by Lemma \ref{lemma:bracket_entropy_M}. We note that $\varphi_{k}^{L}$ and $\varphi_{k}^{U}$ can always be taken to be bounded below by $\epsilon^{\prime}$ and bounded above by $1 - \epsilon^{\prime}$, respectively. Otherwise, we can take $\varphi_{k}^{L} \vee \epsilon^{\prime}$ and $\varphi_{k}^{U} \wedge 1 - \epsilon^{\prime}$. Using the monotonicity of $\varphi$ and the Cauchy-Schwarz inequality, we have, for all $\mat{x} \in \family{X}$,
	\begin{equation*}
		\function{\varphi}{k}{L}{\tr{\mat{\gamma}}_{i}\mat{x} - \zeta_{\mat{\gamma}}r_{1}}
		\leq 
		\function{\varphi}{}{}{\tr{\mat{\gamma}}\mat{x}}
		\leq
		\function{\varphi}{k}{U}{\tr{\mat{\gamma}}_{i}\mat{x} + \zeta_{\mat{\gamma}}r_{1}},
	\end{equation*}
	for some $i = 1,\cdots,N_{3}$ and $k = 1,\cdots,N_{4}$. Using the result from the first statement of the Lemma, we can consider a $\zeta_F$-bracket $[F_{j}^{L}, F_{j}^{U}]$, $j = 1, \cdots, N$, covering the class $\family{F}$ such that
	\begin{equation*}
		\int\left|\function{F}{j}{U}{y|\mat{z}} - \function{F}{j}{L}{y|\mat{z}}\right|\dd\mathbb{P}_{0}\left(y, \delta, \mat{x}, \mat{z}\right) \leq \zeta_F,
		\quad j=1, \cdots, N,
	\end{equation*}
	where $N \leq \exp[C_1(q+1)/\zeta_F]$. Then, we have
	\begin{align*}
		\begin{split}
			\log\sbracket{1 - \function{\varphi}{k}{U}{\tr{\mat{\gamma}}_{i}\mat{x} + \zeta_{\mat{\gamma}}r_{1}} \function{F}{j}{U}{y|\mat{z}}} 
			&\leq
			\log\sbracket{1 - \function{\varphi}{}{}{\tr{\mat{\gamma}}\mat{x}} \function{F}{u}{}{y|\mat{z}}}\\%
			&\leq
			\log\sbracket{1 - \function{\varphi}{k}{L}{\tr{\mat{\gamma}}_{i}\mat{x} - \zeta_{\mat{\gamma}}r_{1}} \function{F}{j}{L}{y|\mat{z}}},
		\end{split}
	\end{align*}
	for some $j=1,\cdots,N$. We then show that, for a certain choice of $\zeta_F$, the brackets
	\[
	\begin{split}
		[g_{i,j,k}^L,g_{i,j,k}^U]&=\left[(1-\delta)\log\left\{1 - \function{\varphi}{k}{U}{\tr{\mat{\gamma}}_{i}\mat{x} + \zeta_{\mat{\gamma}}r_{1}} \function{F}{j}{U}{y|\mat{z}}\right\},\right.\\
		&\qquad\left.(1-\delta) \log\cbracket{1 - \function{\varphi}{k}{L}{\tr{\mat{\gamma}}_{i}\mat{x} - \zeta_{\mat{\gamma}}r_{1}} \function{F}{j}{L}{y|\mat{z}}}\right]
	\end{split}
	\] are $\zeta$-brackets with respect to $\Vert\cdot\Vert_{1,\mathbb{P}_{0}}$ for the class $\family{G}_{\epsilon'}$. 	Using the Minkowski inequality, we have 
	\begin{equation}
		\label{eq:bracket_size_censored_L1_triangle}
		\begin{split}
			\Vert g_{i,j,k}^U-g_{i,j,k}^L\Vert_{1,\mathbb{P}_{0}}
			&\leq \int_{[0, \tau_{0}] \times \family{X} \times \family{Z}}
			\bigg|
			\log\sbracket{1 - \function{\varphi}{k}{U}{\tr{\mat{\gamma}}_{i}\mat{x} - \zeta_{\mat{\gamma}}r_{1}} \function{F}{j}{L}{y|\mat{z}}}\\%
			&\qquad-
			\log\sbracket{1 - \function{\varphi}{k}{U}{\tr{\mat{\gamma}}_{i}\mat{x} + \zeta_{\mat{\gamma}}r_{1}} \function{F}{j}{U}{y|\mat{z}}}
			\bigg|
			\dd Q_{Y, \mat{X}, \mat{Z}}\left(y, \mat{x}, \mat{z}\right)\\%
			&\quad+ \int_{[0, \tau_{0}] \times \family{X} \times \family{Z}}
			\bigg|
			\log\sbracket{1 - \function{\varphi}{k}{L}{\tr{\mat{\gamma}}_{i}\mat{x} - \zeta_{\mat{\gamma}}r_{1}} \function{F}{j}{L}{y|\mat{z}}}\\%
			&\qquad-
			\log\sbracket{1 - \function{\varphi}{k}{U}{\tr{\mat{\gamma}}_{i}\mat{x} - \zeta_{\mat{\gamma}}r_{1}} \function{F}{j}{L}{y|\mat{z}}}
			\bigg|
			\dd Q_{Y, \mat{X}, \mat{Z}}\left(y, \mat{x}, \mat{z}\right).\\%
		\end{split}
	\end{equation}
	For the second integral in \eref{eq:bracket_size_censored_L1_triangle} we have
	\begin{align*}
		&\int_{[0, \tau_{0}] \times \family{X} \times \family{Z}}
		\bigg|
		\log\sbracket{1 - \varphi_k^L(\tr{\mat{\gamma}}_{i}\mat{x} - \zeta_{\mat{\gamma}}r_{1}) F_j^L(y|\mat{z})}\\%
		&\qquad\qquad\qquad-
		\log\sbracket{1 - \varphi_k^U(\tr{\mat{\gamma}}_{i}\mat{x} - \zeta_{\mat{\gamma}}r_{1}) F_j^L(y|\mat{z})}
		\bigg|
		\dd Q_{Y, \mat{X}, \mat{Z}}\left(y, \mat{x}, \mat{z}\right)\\
		&\leq{
			\frac{1}{\epsilon^{\prime}}	
			\int_{[0, \tau_{0}] \times \family{X} \times \family{Z}}
			\left|
			\varphi_k^U(\tr{\mat{\gamma}}_{i}\mat{x} - \zeta_{\mat{\gamma}}r_{1})
			-
		\varphi_k^L(\tr{\mat{\gamma}}_{i}\mat{x} - \zeta_{\mat{\gamma}}r_{1})\right|
			F_j^L(y|\mat{z})
			\dd Q_{Y, \mat{X}, \mat{Z}}\left(y, \mat{x}, \mat{z}\right)}\\
		&\leq 
		\frac{1}{\epsilon^{\prime}}\int_{\real}
		\bigg|
		\function{\varphi}{k}{U}{t} -
		\function{\varphi}{k}{L}{t}
		\bigg|
		\dd Q_{i}^{-}\left(t\right)\leq
		{\frac{\zeta_{\varphi}}{\epsilon^{\prime}}},
	\end{align*}
	where we have used {the mean value theorem,  $0 \leq {F}_{j}^{L} \leq 1$ and $\epsilon^{\prime} \leq {\varphi}_{k}^{L}, {\varphi}_{k}^{U} \leq 1 - \epsilon^{\prime}$}. 
	For the integrand of the first integral in \eref{eq:bracket_size_censored_L1_triangle}, we have
	\begin{align*}
		&\bigg|
		\log\sbracket{1 - \function{\varphi}{k}{U}{\tr{\mat{\gamma}}_{i}\mat{x} - \zeta_{\mat{\gamma}}r_{1}} \function{F}{j}{L}{y|\mat{z}}}
		-
		\log\sbracket{1 - \function{\varphi}{k}{U}{\tr{\mat{\gamma}}_{i}\mat{x} + \zeta_{\mat{\gamma}}r_{1}} \function{F}{j}{U}{y|\mat{z}}}
		\bigg|\\
		&\leq{
			\frac{1}{\epsilon^{\prime}}\left|
			\function{\varphi}{k}{U}{\tr{\mat{\gamma}}_{i}\mat{x} + \zeta_{\mat{\gamma}}r_{1}} \function{F}{j}{U}{y|\mat{z}}
			-
			\function{\varphi}{k}{U}{\tr{\mat{\gamma}}_{i}\mat{x} - \zeta_{\mat{\gamma}}r_{1}} \function{F}{j}{L}{y|\mat{z}}\right|
		}\\
		&\leq
		\frac{1}{\epsilon^{\prime}}
		\function{\varphi}{k}{U}{\tr{\mat{\gamma}}_{i}\mat{x} + \zeta_{\mat{\gamma}}r_{1}}
		\left|	\function{F}{j}{U}{y|\mat{z}} - \function{F}{j}{L}{y|\mat{z}}
		\right|\\
		&\quad+ 	\frac{1}{\epsilon^{\prime}}\function{F}{j}{L}{y|\mat{z}}\left|
		\function{\varphi}{k}{U}{\tr{\mat{\gamma}}_{i}\mat{x} + \zeta_{\mat{\gamma}}r_{1}} -
		\function{\varphi}{k}{U}{\tr{\mat{\gamma}}_{i}\mat{x} - \zeta_{\mat{\gamma}}r_{1}}
		\right|\\
		&\leq
		\frac{1}{\epsilon^{\prime}}
		\left|	\function{F}{j}{U}{y|\mat{z}} - \function{F}{j}{L}{y|\mat{z}}
		\right|+ 	\frac{1}{\epsilon^{\prime}}\left|
		\function{\varphi}{k}{U}{\tr{\mat{\gamma}}_{i}\mat{x} + \zeta_{\mat{\gamma}}r_{1}} -
		\function{\varphi}{k}{U}{\tr{\mat{\gamma}}_{i}\mat{x} - \zeta_{\mat{\gamma}}r_{1}}
		\right|.
	\end{align*}
	Note that, by the monotonicity of $\varphi_k^U$, we have
	\begin{align}
		\begin{split}
			\label{eq:link_diff_bound}
			&	\int_{\family{X}}
			\left|
			\function{\varphi}{k}{U}{\tr{\mat{\gamma}}_{i}\mat{x} + \zeta_{\mat{\gamma}}r_{1}} -
			\function{\varphi}{k}{U}{\tr{\mat{\gamma}}_{i}\mat{x} - \zeta_{\mat{\gamma}}r_{1}}
			\right|\dd Q_{\mat{X}}\left(\mat{x}\right)\\
			&=
			\int_{\family{X}}
			\function{\varphi}{k}{U}{\tr{\mat{\gamma}}_{i}\mat{x} + \zeta_{\mat{\gamma}}r_{1}} -
			\function{\varphi}{k}{U}{\tr{\mat{\gamma}}_{i}\mat{x} - \zeta_{\mat{\gamma}}r_{1}}
			\dd Q_{\mat{X}}\left(\mat{x}\right)\\
			&\leq
			\bar{q}_{1}
			\int_{-r_{1}}^{r_{1}}\left[
			\function{\varphi}{k}{U}{t + \zeta_{\mat{\gamma}}r_{1}} -
			\function{\varphi}{k}{U}{t - \zeta_{\mat{\gamma}}r_{1}}\right]
			\dd t\\
			&=
			\bar{q}_{1}\rbracket{
				\int_{r_{1} - \zeta_{\mat{\gamma}}r_{1}}^{r_{1} + \zeta_{\mat{\gamma}}r_{1}}
				\function{\varphi}{k}{U}{u}
				\dd u -
				\int_{-r_{1} - \zeta_{\mat{\gamma}}r_{1}}^{-r_{1} + \zeta_{\mat{\gamma}}r_{1}}
				\function{\varphi}{k}{U}{u}
				\dd u
			}\\
			&=
			2r_{1}\bar{q}_{1}\rbracket{1 - 2\epsilon^{\prime}}\zeta_{\mat{\gamma}}.
		\end{split}
	\end{align}
	Hence, the first integral in \eref{eq:bracket_size_censored_L1_triangle} can be bounded by 
	\begin{align*}
		\frac{1}{\epsilon^{\prime}}\left[\int
		\left|
		F_j^U(y|\mat{z}) - F_j^L(y|\mat{z})
		\right|\dd \mathbb{P}_{0}\left(y, \delta,\mat{x}, \mat{z}\right)+
		2r_{1}\bar{q}_{1}\rbracket{1 - 2\epsilon^{\prime}}\zeta_{\mat{\gamma}}\right]\leq
		\frac{
				\zeta_F + 2r_{1}\bar{q}_{1}\zeta_{\mat{\gamma}}}{\epsilon^{\prime}}
		.
	\end{align*}
	Choosing $\zeta_{\varphi} = \epsilon'\zeta/3$, $\zeta_F=\epsilon'\zeta/3$, $\zeta_{\mat{\gamma}} = \epsilon'\zeta/6r_{1}\bar{q}_{1}$, we get $	\Vert g_{i,j,k}^U-g_{i,j,k}^L\Vert_{1,\mathbb{P}_{0}}\leq\zeta$.   Hence, using $\log x \leq x$, we have
	\begin{align*}
		\begin{split}
			\bracketentropy[\zeta]{\family{G}_{\epsilon^{\prime}}}{\norm{\cdot}{1,\mathbb{P}_{0}}}
			&\leq
			\log N + \log N_{3} + \log N_{4}\\%
			&\leq
			\frac{C_1\rbracket{q + 1}}{\zeta_F} +
			d\log\rbracket{\frac{A_3}{\zeta_{\mat\gamma}}} + \frac{A_4}{\zeta_\varphi}\\
			&\leq
			\frac{C_2d+C_3(q+1)}{\zeta},
		\end{split}
	\end{align*}
	for some positive constants $C_2,C_3$. 
	
	To prove the third statement of the Lemma, we can use a similar argument as for the second statement and show that the brackets $[\log\function{\varphi}{k}{L}{\tr{\mat{\gamma}}_{i}\mat{x} - \zeta_{\mat{\gamma}}r_{1}},$ $ \log\function{\varphi}{k}{U}{\tr{\mat{\gamma}}_{i}\mat{x} + \zeta_{\mat{\gamma}}r_{1}}]$ for $k\in\{1,\dots,N_4\}$, $i\in\{1,\dots,N_3\}$ are $\zeta$-brackets with respect to $\Vert\cdot\Vert_{1,\mathbb{P}_{0}}$ if we choose 
	$\zeta_{\varphi} = \epsilon'\zeta/2$ and $\zeta_{\mat{\gamma}} = \epsilon'\zeta/4r_{1}\bar{q}_{1}$.
	Hence
	\begin{align*}
		\begin{split}
			\bracketentropy[\zeta]{\family{H}_{\epsilon^{\prime}}}{\norm{\cdot}{1,\mathbb{P}_{0}}}
			\leq
			\log N_{3} + \log N_{4}
			\leq
			d\log\rbracket{\frac{A_3}{\zeta_{\mat\gamma}}} + \frac{A_4}{\zeta_\varphi}
			\leq
			\frac{C_4\rbracket{d + 1}}{\zeta}.
		\end{split}
	\end{align*}
\end{proof}

\begin{lemma}
	\label{lemma:bracket_entropy_L}
	Suppose that Assumptions \ref{enum:bounded_support} and \ref{enum:bound_dens} 	are satisfied.
	Let $\zeta > 0$. There exists a constant $A_{1} > 0$ depending on $\epsilon^{\prime}, \bar{q}_{1}$, and $r_{1}$ and a constant $A_{2} > 0$ depending on $\epsilon^{\prime}, \bar{q}_{1}, \bar{q}_{2}, r_{1}$, and $r_{2}$, such that
	\begin{equation*}
		\bracketentropy{\family{L}_{\epsilon^{\prime}}}{\norm{\cdot}{1,\mathbb{P}_{0}}} \leq \frac{A_{1}\rbracket{d + 1} + A_{2}\rbracket{q + 1}}{\zeta}.
	\end{equation*}
	
	Moreover, there exists a constant $A_{3} > 0$ depending on $\epsilon^{\prime}, \bar{q}_{1}$, and $r_{1}$ and a constant $A_{4} > 0$ depending on $\epsilon^{\prime}, \bar{q}_{1}, \bar{q}_{2}, r_{1}$, and $r_{2}$, such that
	\begin{equation*}
		\bracketentropy{\tilde{\family{L}}_{\epsilon^{\prime}}}{\norm{\cdot}{1,\mathbb{P}_{0}}} \leq \frac{A_{3}\rbracket{d + 1} + A_{4}\rbracket{q + 1}}{\zeta}.
	\end{equation*}
\end{lemma}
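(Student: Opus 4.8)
The plan is to exploit the fact that the classes $\family{L}_{\epsilon^{\prime}}$ and $\tilde{\family{L}}_{\epsilon^{\prime}}$ are built, term by term, from the classes $\family{H}_{\epsilon^{\prime}}$ and $\family{G}_{\epsilon^{\prime}}$ whose bracketing entropies have already been controlled in Lemma \ref{lemma:bracket_entropy_F}. Indeed, every $l \in \family{L}_{\epsilon^{\prime}}$ is the pointwise sum $l = h + g$ of a function $h(\delta,\mat{x}) = \delta\log\varphi(\tr{\mat{\gamma}}\mat{x}) \in \family{H}_{\epsilon^{\prime}}$ and a function $g(y,\delta,\mat{x},\mat{z}) = (1-\delta)\log[1-\varphi(\tr{\mat{\gamma}}\mat{x})F(y,\mat{z})] \in \family{G}_{\epsilon^{\prime}}$ that share the same $\mat{\gamma}$ and $\varphi$. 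Consequently $\family{L}_{\epsilon^{\prime}}$ is contained in the Minkowski sum $\family{H}_{\epsilon^{\prime}} + \family{G}_{\epsilon^{\prime}}$, and it suffices to bound the bracketing entropy of this sum.

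The key step is the standard additivity of bracketing numbers under sums. If $[h^L,h^U]$ is a $\zeta$-bracket for $h$ with respect to $\norm{\cdot}{1,\mathbb{P}_{0}}$ and $[g^L,g^U]$ is a $\zeta$-bracket for $g$, then $h^L + g^L \le h + g \le h^U + g^U$ holds pointwise, and by the triangle inequality $\norm{(h^U+g^U)-(h^L+g^L)}{1,\mathbb{P}_{0}} \le \norm{h^U-h^L}{1,\mathbb{P}_{0}} + \norm{g^U-g^L}{1,\mathbb{P}_{0}} \le 2\zeta$. Hence adding the bracket sets of $\family{H}_{\epsilon^{\prime}}$ and $\family{G}_{\epsilon^{\prime}}$ produces $2\zeta$-brackets covering the sum, so that $\bracketentropy[2\zeta]{\family{L}_{\epsilon^{\prime}}}{\norm{\cdot}{1,\mathbb{P}_{0}}} \le \bracketentropy{\family{H}_{\epsilon^{\prime}}}{\norm{\cdot}{1,\mathbb{P}_{0}}} + \bracketentropy{\family{G}_{\epsilon^{\prime}}}{\norm{\cdot}{1,\mathbb{P}_{0}}}$. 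Substituting the two bounds from Lemma \ref{lemma:bracket_entropy_F}, namely $C_{4}(d+1)/\zeta$ and $[C_{2}d + C_{3}(q+1)]/\zeta$, and rescaling $\zeta$, I would obtain the claimed bound $[A_{1}(d+1) + A_{2}(q+1)]/\zeta$ for $\family{L}_{\epsilon^{\prime}}$ after collecting constants.

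For $\tilde{\family{L}}_{\epsilon^{\prime}}$ I would argue identically, noting that each $\tilde{l}$ is the difference of two functions in $\family{L}_{\epsilon^{\prime}}$ that share the same $\mat{\gamma}$ and $F$ but use $\varphi_{1}$ and $\varphi_{2}$ respectively. Since the bracketing entropy of $-\family{L}_{\epsilon^{\prime}}$ equals that of $\family{L}_{\epsilon^{\prime}}$ (replace each bracket $[l^L,l^U]$ by $[-l^U,-l^L]$), the same additivity argument gives $\bracketentropy[2\zeta]{\tilde{\family{L}}_{\epsilon^{\prime}}}{\norm{\cdot}{1,\mathbb{P}_{0}}} \le 2\,\bracketentropy{\family{L}_{\epsilon^{\prime}}}{\norm{\cdot}{1,\mathbb{P}_{0}}}$, which is of the same order. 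Alternatively, one can assemble the four pieces (two link brackets for $\varphi_{1},\varphi_{2}$ together with the censored and uncensored terms) directly from the grids constructed in the proof of Lemma \ref{lemma:bracket_entropy_F}. There is no genuine analytic difficulty here; the only point requiring care -- the \emph{main obstacle}, such as it is -- is the bookkeeping of constants and of the factors of $2$ introduced by rescaling $\zeta$ and combining brackets, together with verifying that the pointwise ordering of the sum-brackets is preserved even though $\log\varphi$ is negative and is multiplied by the indicators $\delta$ and $1-\delta$.
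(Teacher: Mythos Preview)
Your proposal is correct and follows essentially the same route as the paper: the paper takes $\zeta/2$-brackets from $\family{H}_{\epsilon^{\prime}}$ and $\family{G}_{\epsilon^{\prime}}$ (via Lemma~\ref{lemma:bracket_entropy_F}), adds them pointwise, and uses the Minkowski inequality to obtain $\zeta$-brackets for $\family{L}_{\epsilon^{\prime}}$, then simply states that the second assertion ``can be shown with the same type of argument.'' The only difference is cosmetic---you take $\zeta$-brackets and rescale at the end, whereas the paper starts from $\zeta/2$-brackets---and your treatment of $\tilde{\family{L}}_{\epsilon^{\prime}}$ via the inclusion $\tilde{\family{L}}_{\epsilon^{\prime}}\subset\family{L}_{\epsilon^{\prime}}+(-\family{L}_{\epsilon^{\prime}})$ is a clean way to formalize what the paper leaves implicit.
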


\begin{proof}
	Let $\zeta > 0$. Using the last two statements of Lemma \ref{lemma:bracket_entropy_F}, we can consider $\zeta/2$-brackets $[h_{i}^{L},h_{i}^{U}]$, $i = 1,\cdots,N_{1}$, covering the class $\family{H}_{\epsilon^{\prime}}$
	where $N_{1} \leq \exp\rbracket{\frac{2C_{4}\rbracket{d + 1}}{\zeta}}$. Also, consider $\zeta/2$-brackets $[g_{j}^{L},g_{j}^{U}]$, $j = 1,\cdots,N_{2}$, covering the class $\family{G}_{\epsilon^{\prime}}$, 
	where $N_{2} \leq \exp\rbracket{\frac{2C_{2}d+ 2C_{3}\rbracket{q + 1}}{\zeta}}$. Then, for any $l \in \family{L}_{\epsilon^{\prime}}$, we have
	\begin{equation*}
		\function{h}{i}{L}{\delta, \mat{x}} + \function{g}{j}{L}{y, \delta, \mat{x}, \mat{z}}
		\leq
		\function{l}{}{}{y, \delta, \mat{x}, \mat{z}}
		\leq
		\function{h}{i}{U}{\delta, \mat{x}} + \function{g}{j}{U}{y, \delta, \mat{x}, \mat{z}},
	\end{equation*}
	for some $i = 1, \cdots N_{1}$ and $j = 1, \cdots, N_{2}$. Using Minkowski inequality, we obtain
	\begin{align*}
		\begin{split}
			&\int
			\left|
			\function{h}{i}{U}{\delta, \mat{x}} + \function{g}{j}{U}{y, \delta, \mat{x}, \mat{z}} - \function{h}{i}{L}{\delta, \mat{x}} - \function{g}{j}{L}{y, \delta, \mat{x}, \mat{z}}
			\right|
			\dd \function{\mathbb{P}}{0}{}{y, \delta, \mat{x}, \mat{z}}\\
			&\leq
			\int
			\left|
			\function{h}{i}{U}{\delta, \mat{x}} -
			\function{h}{i}{L}{\delta, \mat{x}}
			\right|
			\dd\function{\mathbb{P}}{0}{}{y, \delta, \mat{x}, \mat{z}}\\
			&\quad+
			\int
			\left|
			\function{g}{j}{U}{y, \delta, \mat{x}, \mat{z}} -
			\function{g}{j}{L}{y, \delta, \mat{x}, \mat{z}}
			\right|
			\dd\function{\mathbb{P}}{0}{}{y, \delta, \mat{x}, \mat{z}}\\
			&\leq \zeta/2+\zeta/2=\zeta.
		\end{split}
	\end{align*}
	Hence, the $\zeta$-brackets $[	h_i^U+ g_j^U,h_i^L+g_j^L]$, $i=1,\dots,N_1$, $j=1,\dots,,N_2$, cover $\mathcal{L}_{\epsilon'}$,  and
	\begin{align*}
		\begin{split}
			\bracketentropy[\zeta]{\family{L}_{\epsilon^{\prime}}}{\norm{\cdot}{1,\mathbb{P}_{0}}}
			\leq
			\log N_{1} + \log N_{2}
			\leq
			\frac{A_{1} \rbracket{d + 1} + A_{2}\rbracket{q + 1}}{\zeta}.
		\end{split}
	\end{align*}
	
	The second statement of the Lemma can be shown with the same type of argument. 	
\end{proof}

\begin{lemma}
	\label{lemma:cont_monotone_cover}
	Let $Q$ be any probability measures on $\real$ and denote by $\norm{\cdot}{r,Q}$ the $L_{r}$-norm corresponding to $Q$. Let $\tilde{\family{M}}_{\epsilon^{\prime}}$ be the class of continuous monotone functions on $\real$ with values in $\sbracket{\epsilon^{\prime}, 1 - \epsilon^{\prime}}$. For any $\zeta > 0$, there exist $M \leq \exp\rbracket{\frac{A}{\zeta}}$ balls of radius of $\zeta$ with respect to the $L_{r}$-norm and centered in $\tilde{\family{M}}_{\epsilon^{\prime}}$ such that their union covers $\tilde{\family{M}}_{\epsilon^{\prime}}$. The constant  $A > 0$ does not depend on $\zeta$.
\end{lemma}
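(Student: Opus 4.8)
The plan is to deduce the assertion from the bracketing entropy bound for monotone functions in Lemma~\ref{lemma:bracket_entropy_M}, exploiting the elementary fact that a bracket of width $\zeta$ furnishes a ball of radius $\zeta$ as soon as one selects a representative of the class lying inside it. First I would observe that $\tilde{\family{M}}_{\epsilon'}$ is a subclass of $\family{M}$, since every element of $\tilde{\family{M}}_{\epsilon'}$ is a bounded non-decreasing function (it is monotone and takes values in $\sbracket{\epsilon', 1-\epsilon'}$). Consequently, any collection of brackets covering $\family{M}$ also covers $\tilde{\family{M}}_{\epsilon'}$, so that $\bracketnum{\tilde{\family{M}}_{\epsilon'}}{\norm{\cdot}{r,Q}} \leq \bracketnum{\family{M}}{\norm{\cdot}{r,Q}}$ for every $\zeta > 0$, every $r \geq 1$, and every probability measure $Q$ on $\real$.

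Next I would invoke Lemma~\ref{lemma:bracket_entropy_M} to obtain, for a fixed $\zeta > 0$, a family $\cbracket{[f_j^L, f_j^U] \colon j = 1, \dots, N}$ with $\norm{f_j^U - f_j^L}{r,Q} \leq \zeta$ and $N \leq \exp\rbracket{A/\zeta}$, such that every $\varphi \in \tilde{\family{M}}_{\epsilon'}$ satisfies $f_j^L \leq \varphi \leq f_j^U$ for some $j$; here $A$ is the constant supplied by Lemma~\ref{lemma:bracket_entropy_M} and, crucially, is independent of both $\zeta$ and $Q$. Discarding the brackets that contain no element of $\tilde{\family{M}}_{\epsilon'}$, for each remaining index $j$ I would fix a representative $g_j \in \tilde{\family{M}}_{\epsilon'}$ with $f_j^L \leq g_j \leq f_j^U$. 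The decisive observation is then that any $\varphi \in \tilde{\family{M}}_{\epsilon'}$ lying in bracket $j$ and the center $g_j$ are both squeezed between $f_j^L$ and $f_j^U$, so that $\vert \varphi(t) - g_j(t)\vert \leq f_j^U(t) - f_j^L(t)$ for all $t$; taking $L_r(Q)$-norms yields $\norm{\varphi - g_j}{r,Q} \leq \norm{f_j^U - f_j^L}{r,Q} \leq \zeta$.

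Hence the balls of radius $\zeta$ centered at the $g_j \in \tilde{\family{M}}_{\epsilon'}$ cover $\tilde{\family{M}}_{\epsilon'}$, and their number $M$ is at most $N \leq \exp\rbracket{A/\zeta}$, which is exactly the claimed bound with a constant $A$ not depending on $\zeta$. The only point requiring care — and the closest thing to an obstacle — is that the lemma demands the covering balls be centered \emph{inside} the class $\tilde{\family{M}}_{\epsilon'}$, whereas the bracket endpoints $f_j^L, f_j^U$ need be neither continuous nor valued in $\sbracket{\epsilon', 1-\epsilon'}$; this is precisely what the in-bracket representative $g_j$ resolves, and the observation that both $\varphi$ and $g_j$ lie in the same bracket avoids the usual factor-of-two loss in the radius incurred by a naive centre-replacement argument. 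I would finally note that the uniformity of the bound over all probability measures $Q$ is inherited directly from Lemma~\ref{lemma:bracket_entropy_M}, so no further argument is needed on that front.
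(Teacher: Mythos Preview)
Your argument is correct and follows the same overall strategy as the paper: start from the entropy bound for bounded monotone functions (Lemma~\ref{lemma:bracket_entropy_M}) and then re-center into $\tilde{\family{M}}_{\epsilon'}$. The execution differs slightly. The paper first converts brackets into covering balls of radius $\zeta/2$ (via Lemma~2.1 of \cite{van2000empirical}), then for each ball meeting $\tilde{\family{M}}_{\epsilon'}$ picks a continuous monotone element as a new center and enlarges the radius to $\zeta$ by the triangle inequality. You instead stay with the brackets and exploit the pointwise sandwich $f_j^L\le \varphi,\,g_j\le f_j^U$ to get $|\varphi-g_j|\le f_j^U-f_j^L$ and hence $\norm{\varphi-g_j}{r,Q}\le\zeta$ directly, without passing through an intermediate covering or incurring the factor-of-two radius inflation. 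Your route is marginally cleaner for this reason; the paper's route has the minor advantage of being phrased entirely in terms of metric balls, which matches the statement of the lemma more literally. Both yield the same bound $M\le\exp(A/\zeta)$ with a constant independent of $\zeta$ and $Q$.
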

\begin{proof}[Proof of Lemma \ref{lemma:cont_monotone_cover}]
	By Lemma \ref{lemma:bracket_entropy_M} and Lemma 2.1 in \cite{van2000empirical},
	$\family{M}_{\epsilon^{\prime}}$ can be covered by $N$ balls of radius of $\zeta/2$, where $N \leq \exp\rbracket{A/\zeta}$ for some constant $A > 0$. Let $\cbracket{\varphi_{1}, \cdots, \varphi_{N}}$ be the centers of each of these balls which are not necessarily elements of $\tilde{\family{M}}_{\epsilon'}$. 	Next we construct new balls whose union covers $\tilde{\family{M}}_{\epsilon^{\prime}}$.
	For each $i \in \cbracket{1, \cdots, N}$, if there is a continuous monotone function $\tilde{\varphi}_{i}$ in the ball $B_{\varphi_{i}}(\zeta/2)$, we define a new ball, $B_{\tilde{\varphi}_{i}}({\zeta})$, centered at $\tilde{\varphi}_{i}$ with a radius of $\zeta$. Otherwise, if the ball $B_{\varphi_{i}}\rbracket{\zeta/2}$ does not contain any continuous monotone function we eliminate it. 
	Thus we have at most $M \leq N \leq \exp\rbracket{A/{\zeta}}$ of such balls with radius $\zeta$. Next we show that they cover $\tilde{\family{M}}_{\epsilon'}.$ 
	
	Consider $\varphi \in \tilde{\family{M}}_{\epsilon^{\prime}} \subset \family{M}_{\epsilon^{\prime}}$. We can find $i \in \cbracket{1, \cdots, N}$ such that $\norm{\varphi - \varphi_{i}}{r,Q} \leq \zeta/2$. 
	Let $\tilde{\varphi}_{j}$ be center of the new ball constructed using the preceding argument that corresponds to $B_{\varphi_{i}}\rbracket{\zeta/2}$. We have $\norm{{\varphi} - \tilde{\varphi}_{j}}{r,Q} \leq \norm{{\varphi} - \varphi_{i}}{r,Q} + \norm{\varphi_{i} - \tilde{\varphi}_{j}}{r,Q} \leq \zeta$. Hence ${\varphi} \in B_{\tilde{\varphi}_{j}}\rbracket{\zeta}$ for some $j \in \cbracket{1, \cdots M}$. We conclude that $\tilde{\family{M}}_{\epsilon^{\prime}} \subset \bigcup_{j=1}^{M} B_{\tilde{\varphi}_{j}}\rbracket{\zeta}$.
\end{proof}	
	\begin{proof}[Proof of Proposition \ref{prop:link_mle_convergence}]
		Recall that for fixed $\mat{\theta}$, $\varphi_{0,\mat{\theta}} = \argmax_{\varphi \in \family{M}_{\epsilon^{\prime}}}l_{\mat{\theta}}\rbracket{\varphi}$, where 
		\begin{equation*}
			l_{\mat{\theta}}\rbracket{\varphi} 
			= 
			\expect[\sbracket]{l\rbracket{Y, \Delta, \mat{X}, \mat{Z}; \mat{\theta}, \varphi}}
			=
			\int 
			l\rbracket{y, \delta, \mat{x}, \mat{z}; \mat{\theta}, \varphi}
			\dd\mathbb{P}_{0}\rbracket{y, \delta, \mat{x}, \mat{z}},
		\end{equation*}
		and $\mathbb{P}_{0}$ denotes the distribution of $\rbracket{Y, \Delta, \mat{X}, \mat{Z}}$. On the other hand, for fixed $\mat{\theta}$, we have defined $\hat{\varphi}_{n,\mat{\theta}} 
		= \argmax_{\varphi \in \family{M}_{\epsilon^{\prime}}}
		\function{l}{n}{}{\mat{\gamma},\mat{\beta},\Lambda,\varphi}$, where
		\begin{equation*}
			\function{l}{n}{}{\mat{\gamma},\mat{\beta},\Lambda,\varphi}
			=
			\int 
			l\rbracket{y, \delta, \mat{x}, \mat{z}; \mat{\theta}, \varphi}
			\dd\empirical{y, \delta, \mat{x}, \mat{z}},
		\end{equation*}
		and $\mathbb{P}_{n}$ denotes the empirical distribution of the $\rbracket{y_{i}, \delta_{i}, \mat{x}_{i}, \mat{z}_{i}}$. We first show that
		\begin{align}
			\label{eq:link_mle_l2_dist}
			\begin{split}
				&\int 
				l\rbracket{y, \delta, \mat{x}, \mat{z}; \mat{\theta}, \hat{\varphi}_{n,\mat{\theta}}} - l\rbracket{y, \delta, \mat{x}, \mat{z}; \mat{\theta}, \varphi_{0,\mat{\theta}}}
				\dd\rbracket{\mathbb{P}_{n} - \mathbb{P}_{0}}\rbracket{y, \delta, \mat{x}, \mat{z}}\\%
				&\quad\geq
				\tilde{c} \int_{\family{X}}\cbracket{
					\function{\hat{\varphi}}{n,\mat{\theta}}{}{\tr{\mat{\gamma}}\mat{x}} - \function{\varphi}{0,\mat{\theta}}{}{\tr{\mat{\gamma}}\mat{x}}
				}^{2}\dd\function{Q}{\mat{X}}{}{\mat{x}}=\tilde{c}\Vert\hat\varphi_{n,\mat\theta}-\varphi_{0,\mat{\theta}} \Vert_{Q_{\mat{X}},\mat\gamma},
			\end{split}
		\end{align}
		for some constant $\tilde{c} > 0$.
		Then it suffices to show that
		\begin{equation}
			\label{eq:llh_link_convergence}
			\sup_{\mat{\theta}\in\tilde\Theta}
			\int 
			l\rbracket{y, \delta, \mat{x}, \mat{z}; \mat{\theta}, \hat{\varphi}_{n,\mat{\theta}}} - l\rbracket{y, \delta, \mat{x}, \mat{z}; \mat{\theta}, \varphi_{0,\mat{\theta}}}
			\dd\rbracket{\mathbb{P}_{n} - \mathbb{P}_{0}}\rbracket{y, \delta, \mat{x}, \mat{z}} 
		\end{equation}
		converges almost surely to zero. 
		
		For fixed $\mat{\theta}\in\tilde\Theta$ and ${\varphi} \in \family{M}_{\epsilon^{\prime}}$, define $H\rbracket{t} = \expect[\sbracket]{l_{\mat{\theta}}\rbracket{t{\varphi} + \rbracket{1 - t}\varphi_{0,\mat{\theta}}}}$, where $0 < t < 1$. We have
		\begin{align*}
			\begin{split}
				&\int 
				l\rbracket{y, \delta, \mat{x}, \mat{z}; \mat{\theta}, \varphi} - l\rbracket{y, \delta, \mat{x}, \mat{z}; \mat{\theta}, \varphi_{0,\mat{\theta}}}
				\dd\mathbb{P}_{0}\rbracket{y, \delta, \mat{x}, \mat{z}}\\
				&=
				H\rbracket{1} - H\rbracket{0}
				=
				H'(0) +
				\frac{1}{2}H''(t^*)
			\end{split}
		\end{align*}
		for some $t^{\ast} \in \rbracket{0, 1}$. By the definition of $\varphi_{0,\mat{\theta}}$ and the convexity of $\family{M}_{\epsilon^{\prime}}$, $H'(0)\leq 0$ for all $\varphi \in \family{M}_{\epsilon^{\prime}}$. From \eqref{eq:second_deriv_expectation} and \eqref{eq:second_deriv_expectation_lb}, we have
		\begin{align*}
			\begin{split}
				H''(t)
				&\leq
				-\expect[\sbracket]{
					\link[\tr{\mat{\gamma}}_{0}\mat{X}]{0}{}F_{u,0}
					\cbracket{1 - \link[\tr{\mat{\gamma}}_{0}\mat{X}]{0}{}F_{u,0}}
					\cbracket{\link[\tr{\mat{\gamma}}\mat{X}]{}{} - \link[\tr{\mat{\gamma}}\mat{X}]{0,\mat{\theta}}{}}^{2}
				}\\%
				&\leq
				-c\epsilon^{2}\int_{\family{X}}\cbracket{
					\function{\varphi}{}{}{\tr{\mat{\gamma}}\mat{x}} - \function{\varphi}{0,\mat{\theta}}{}{\tr{\mat{\gamma}}\mat{x}}
				}^{2}\dd\function{Q}{\mat{X}}{}{\mat{x}}.
			\end{split}
		\end{align*}
		Then,
		\begin{equation*}
			\int 
			l\rbracket{y, \delta, \mat{x}, \mat{z}; \mat{\theta}, \varphi} - l\rbracket{y, \delta, \mat{x}, \mat{z}; \mat{\theta}, \varphi_{0,\mat{\theta}}}
			\dd\mathbb{P}_{0}\rbracket{y, \delta, \mat{x}, \mat{z}}\leq
			-\tilde{c}\Vert\hat\varphi_{n,\mat\theta}-\varphi_{0,\mat{\theta}} \Vert_{Q_{\mat{X}},\mat\gamma},
		\end{equation*}
		where $\tilde{c} = \frac12c\epsilon^{2} > 0$.
		By the definition of $\hat{\varphi}_{n,\mat{\theta}}$, we have
		\begin{equation*}
			\int 
			l\rbracket{y, \delta, \mat{x}, \mat{z}; \mat{\theta}, \hat{\varphi}_{n,\mat{\theta}}} - l\rbracket{y, \delta, \mat{x}, \mat{z}; \mat{\theta}, \varphi_{0,\mat{\theta}}}
			\dd\empirical{y, \delta, \mat{x}, \mat{z}}
			\geq 0.
		\end{equation*}
		Combining the two inequalities above and the fact that $\hat{\varphi}_{n,\mat{\theta}} \in \family{M}_{\epsilon^{\prime}}$ gives \eqref{eq:link_mle_l2_dist}.%

		To obtain the almost sure convergence of the expression in \eqref{eq:llh_link_convergence}, we consider the class $\tilde{\family{L}}_{\epsilon^{\prime}}$ of functions of the form 
		\begin{equation*}
			\label{eqn:class_tilde_L}
			\begin{split}
				\tilde{l}\rbracket{y, \delta, \mat{x}, \mat{z}} &= \delta\sbracket{
					\log\link[\tr{\mat{\gamma}}\mat{x}]{1}{} -
					\log\link[\tr{\mat{\gamma}}\mat{x}]{2}{}
				}\\
				&\quad + \rbracket{1 - \delta}\cbracket{
					\log\sbracket{1 - \varphi_1(\tr{\mat{\gamma}}\mat{x})F_u(y|\mat{z})} -
					\log\sbracket{1 - \varphi_2(\tr{\mat{\gamma}}\mat{x})F_u(y|\mat{z})}
				},
			\end{split}
		\end{equation*}
		for some  $\varphi_{1}, \varphi_{2} \in \family{M}_{\epsilon^{\prime}}$, $F_{u}(y|\mat{z})=1-\exp\{-\Lambda(y)\exp(\mat{\beta}^T\mat{z})\}$ and $(\mat{\gamma},\mat\beta,\Lambda)\in\tilde\Theta$. With a series of Lemmas in Appendix we show that $\tilde{\family{L}}_{\epsilon^{\prime}}$ is a Glivenko-Cantelli class of functions (see Lemma~\ref{lemma:bracket_entropy_L}). Note also that the class is uniformly bounded since for any {$\tilde{l} \in \tilde{\family{L}}_{\epsilon^{\prime}}$, $\big|\tilde{l}\big| \leq -\log{\epsilon^{\prime}} < \infty$}. As a result,
		\begin{equation*}
			\sup_{\tilde{l} \in \tilde{\family{L}}_{\epsilon^{\prime}}}
			\bigg|
			\int 
			\tilde{l}\rbracket{y, \delta, \mat{x}, \mat{z}}\dd\rbracket{\mathbb{P}_{n} - \mathbb{P}_{0}}\rbracket{y, \delta, \mat{x}, \mat{z}}
			\bigg|
			\to 0\quad{a.s. \text{ as } n\to\infty}.
		\end{equation*}
		Since $\hat{\varphi}_{n,\mat{\theta}}, \varphi_{0,\mat{\theta}} \in \family{M}_{\epsilon^{\prime}}$, we complete the proof by the following inequality,
		\begin{align*}
			\begin{split}
				\sup_{\mat{\theta}\in\tilde\Theta}
				&\int 
				l\rbracket{y, \delta, \mat{x}, \mat{z}; \mat{\theta}, \hat{\varphi}_{n,\mat{\theta}}} - l\rbracket{y, \delta, \mat{x}, \mat{z}; \mat{\theta}, \varphi_{0,\mat{\theta}}}
				\dd\rbracket{\mathbb{P}_{n} - \mathbb{P}_{0}}\rbracket{y, \delta, \mat{x}, \mat{z}}\\
				&\leq
				\sup_{\tilde{l} \in \tilde{\family{L}}_{\epsilon^{\prime}}}
				\bigg|
				\int 
				\tilde{l}\rbracket{y, \delta, \mat{x}, \mat{z}}
				\dd\rbracket{\mathbb{P}_{n} - \mathbb{P}_{0}}\rbracket{y, \delta, \mat{x}, \mat{z}}
				\bigg| 	\to 0\quad{a.s. \text{ as } n\to\infty}.
			\end{split}
		\end{align*}
	\end{proof}
	
	\begin{proof}[Proof of Proposition \ref{prop:smooth_estimator_convergence}]
		{For fixed $\mat{\theta} \in \tilde\Theta$ and $\mat{x} \in \family{X}$}, using the properties of the kernel density $k$ and a change of variable, we have
		\[
		\begin{split}
			\function{\hat{\varphi}}{n,\mat{\theta}}{s}{\tr{\mat{\gamma}}\mat{x}} - \function{\varphi}{0,\mat{\theta}}{}{\tr{\mat{\gamma}}\mat{x}}
			= &
			\int_{-1}^{1}
			\function{k}{}{}{u}\sbracket{
				\function{\hat{\varphi}}{n,\mat{\theta}}{}{\tr{\mat{\gamma}}\mat{x} - hu} - \function{\varphi}{0,\mat{\theta}}{}{\tr{\mat{\gamma}}\mat{x} - hu}
			}\dd u \\%
			&\quad+
			\int_{-1}^{1}
			\function{k}{}{}{v}\sbracket{
				\function{\varphi}{0,\mat{\theta}}{}{\tr{\mat{\gamma}}\mat{x} - hv} - 
				\function{\varphi}{0,\mat{\theta}}{}{\tr{\mat{\gamma}}\mat{x}}
			}
			\dd v.
		\end{split}
		\]
		Then, we have
		\begin{align}
			\begin{split}
				\label{eq:smooth_link_l2_triangle}
				&\cbracket{
					\int_{\family{X}}
					\sbracket{
						\function{\hat{\varphi}}{n,\mat{\theta}}{s}{\tr{\mat{\gamma}}\mat{x}} - \function{\varphi}{0,\mat{\theta}}{}{\tr{\mat{\gamma}}\mat{x}}
					}^{2}
					\dd\function{Q}{\mat{X}}{}{\mat{x}}
				}^{\frac{1}{2}}\\%
				&\leq
				\cbracket{
					\int_{\family{X}}
					\sbracket{
						\int_{-1}^{1}
						\function{k}{}{}{u}\cbracket{
							\function{\hat{\varphi}}{n,\mat{\theta}}{}{\tr{\mat{\gamma}}\mat{x} - hu} - \function{\varphi}{0,\mat{\theta}}{}{\tr{\mat{\gamma}}\mat{x} - hu}
						}\dd u
					}^{2}
					\dd\function{Q}{\mat{X}}{}{\mat{x}}
				}^{\frac{1}{2}}\\%
				&+
				\cbracket{
					\int_{\family{X}}
					\sbracket{
						\int_{-1}^{1}
						\function{k}{}{}{v}\cbracket{
							\function{\varphi}{0,\mat{\theta}}{}{\tr{\mat{\gamma}}\mat{x} - hv} - 
							\function{\varphi}{0,\mat{\theta}}{}{\tr{\mat{\gamma}}\mat{x}}
						}\dd v
					}^{2}
					\dd\function{Q}{\mat{X}}{}{\mat{x}}
				}^{\frac{1}{2}}{= \rbracket{I} + \rbracket{II}}.
			\end{split}
		\end{align}
		We proceed by showing that {$\rbracket{I}$ and $\rbracket{II}$}  converge to zero uniformly on $\tilde\Theta$ with probability one.%
		
		Consider the inner integral of {$\rbracket{I}$} in \eqref{eq:smooth_link_l2_triangle} and using Jensen's inequality gives
		\begin{align*}
			&\sbracket{
				\int_{-1}^{1}
				\function{k}{}{}{u}\cbracket{
					\function{\hat{\varphi}}{n,\mat{\theta}}{}{\tr{\mat{\gamma}}\mat{x} - hu} - \function{\varphi}{0,\mat{\theta}}{}{\tr{\mat{\gamma}}\mat{x} - hu}
				}\dd u
			}^{2}\\%
			&\leq
			K^{2}\int_{-1}^{1}
			\cbracket{
				\function{\hat{\varphi}}{n,\mat{\theta}}{}{\tr{\mat{\gamma}}\mat{x} - hu} - \function{\varphi}{0,\mat{\theta}}{}{\tr{\mat{\gamma}}\mat{x} - hu}
			}^{2}\dd u,
		\end{align*}
		where $K$ is the upper bound of the kernel function $k$. By assumption (A1)(v) and for $\mat\gamma$ in a neighborhood of $\mat\gamma_0$ we have $\family{I}_{\mat\gamma}=[\ubar{I}_{\mat\gamma},\bar{I}_{\mat\gamma}]$.
		Then, we can write
		\begin{align}
			\begin{split}
				\label{eq:smooth_link_l2_triangle_1st}
				\sup_{\mat{\theta}\in\tilde\Theta}\,(I)^2
				&\leq
				K^{2}\sup_{\mat{\theta}\in\tilde\Theta}\int_{\family{X}}
				\int_{-1}^{1}
				\cbracket{
					\function{\hat{\varphi}}{n,\mat{\theta}}{}{\tr{\mat{\gamma}}\mat{x} - hu} - \function{\varphi}{0,\mat{\theta}}{}{\tr{\mat{\gamma}}\mat{x} - hu}
				}^{2}
				\dd u\,
				\dd\function{Q}{\mat{X}}{}{\mat{x}}\\%
				&=
				K^{2}\sup_{\mat{\theta}\in\tilde\Theta}\int_{\family{I}_{\mat{\gamma}}}
				\int_{-1}^{1}
				\cbracket{
					\function{\hat{\varphi}}{n,\mat{\theta}}{}{v - hu} - \function{\varphi}{0,\mat{\theta}}{}{v - hu}
				}^{2}
				\dd u\,
				\dd\function{G}{\tr{\mat{\gamma}}\mat{X}}{}{v}\\%
				&=
				K^{2}\sup_{\mat{\theta}\in\tilde\Theta}\int_{\family{I}_{\mat{\gamma}}}\frac{1}{h}
				\int_{v - h}^{v + h}
				\cbracket{
					\function{\hat{\varphi}}{n,\mat{\theta}}{}{s} - \function{\varphi}{0,\mat{\theta}}{}{s}
				}^{2}
				\dd s\,
				\dd\function{G}{\tr{\mat{\gamma}}\mat{X}}{}{v}\\%
				&=
				K^{2}\sup_{\mat{\theta}\in\tilde\Theta}\int_{\ubar{I}_{\mat\gamma} - h}^{\bar{I}_{\mat\gamma} + h}\cbracket{
					\function{\hat{\varphi}}{n,\mat{\theta}}{}{s} - \function{\varphi}{0,\mat{\theta}}{}{s}
				}^{2}\frac{1}{h}	\int_{s - h}^{s + h}
				\dd\function{G}{\tr{\mat{\gamma}}\mat{X}}{}{v}\,
				\dd s,
			\end{split}
		\end{align}
		where $G_{\tr{\mat{\gamma}}\mat{X}}$ denotes the distribution function of $\tr{\mat{\gamma}}\mat{X}$. 
		Recall that $h = h_{n} \converge 0$ as $n \converge \infty$.  For $s\in[\ubar{I}_{\mat\gamma}+h,\bar{I}_{\mat\gamma}-h]$, since the density function $g_{\tr{\mat{\gamma}}\mat{X}}$ is continuous, by the mean value theorem, there exists $\xi_{n} \in  \rbracket{s - h_{n}, s + h_{n}}$ such that
		\begin{equation}
			\label{eq:smooth_link_mvt}
			\frac{1}{h}	\int_{s - h}^{s + h}
			\dd\function{G}{\tr{\mat{\gamma}}\mat{X}}{}{v}=	\frac{1}{h}\left\{\function{G}{\tr{\mat{\gamma}}\mat{X}}{}{s+h}-\function{G}{\tr{\mat{\gamma}}\mat{X}}{}{s-h}\right\}
			= 2\function{g}{\tr{\mat{\gamma}}\mat{X}}{}{\xi_{n}}.
		\end{equation}	
		By Assumption (A9) it follows that 
		\begin{equation}
			\label{eq:index_density_convergence}
			\sup_{\mat{\theta}\in\tilde\Theta}\sup_{s\in[\ubar{I}_{\mat\gamma}+h,\bar{I}_{\mat\gamma}-h] }\left|
			\frac{1}{h}	\int_{s - h}^{s + h}
			\dd\function{G}{\tr{\mat{\gamma}}\mat{X}}{}{v}- 2\function{g}{\tr{\mat{\gamma}}\mat{X}}{}{s}
			\right| \converge 0\quad \text{ as } n\to\infty.
		\end{equation}
		Since the length $\family{I}_{\mat\gamma}$ is uniformly bounded, $\lbrace{
			\function{\hat{\varphi}}{n,\mat{\theta}}{}{s} - \function{\varphi}{0,\mat{\theta}}{}{s}
		}\rbrace^{2} \leq (1 - 2\epsilon^{\prime})^{2}$, from  \eqref{eq:smooth_link_l2_triangle_1st}, \eqref{eq:index_density_convergence} and Proposition \ref{prop:link_mle_convergence} we obtain
		\begin{equation}
			\label{eqn:sup_I}
			\begin{aligned}
			&	\sup_{\mat{\theta}\in\tilde\Theta}\,(I)^2\leq \tilde{K}h+	2K^2\sup_{\mat{\theta}\in\tilde\Theta}\int_{\family{I}_{\mat\gamma}}
				\cbracket{
					\function{\hat{\varphi}}{n,\mat{\theta}}{}{s} - \function{\varphi}{0,\mat{\theta}}{}{s}
				}^{2}
				\dd\function{G}{\tr{\mat{\gamma}}\mat{X}}{}{s}\\%
				&+K^2
				\sup_{\mat{\theta}\in\tilde\Theta}\int_{\ubar{I}_{\mat\gamma} + h}^{\bar{I}_{\mat\gamma} -h}
				\cbracket{
					\hat{\varphi}_{n,\mat{\theta}}(s) - \varphi_{0,\mat{\theta}}(s)
				}^{2}\left|
				\int_{s - h}^{s + h}\frac{1}{h}
				\dd G_{\tr{\mat{\gamma}}\mat{X}}(v)-2g_{\tr{\mat{\gamma}}\mat{X}}(s)\right|\,
				\dd s\converge[a.s.] 0,
			\end{aligned}
		\end{equation}
		where the term $\tilde{K}h$ comes from dealing with the boundary regions of the integrals and the constant $\tilde{K}$ depends on the uniform bound of $\function{g}{\tr{\mat{\gamma}}\mat{X}}{}{\cdot}$. 
		This means that
		the first term $(I)$ in \eqref{eq:smooth_link_l2_triangle} converges to zero uniformly on $\tilde\Theta$ with probability one. 
		
		{Next, we deal with $\rbracket{II}$ in \eqref{eq:smooth_link_l2_triangle}, which is a deterministic term.} With a similar argument as above, one can show that 
		\begin{align}
			\begin{split}
				\label{eq:smooth_link_l2_triangle_2nd}
				\sup_{\mat{\theta}\in\tilde\Theta}\,(II)^2
				&\leq
				K^{2}\sup_{\mat{\theta}\in\tilde\Theta}\int_{\family{X}}
				\int_{-1}^{1}
				\cbracket{
					\function{\varphi}{0,\mat{\theta}}{}{\tr{\mat{\gamma}}\mat{x} - hv} - \function{\varphi}{0,\mat{\theta}}{}{\tr{\mat{\gamma}}\mat{x}}
				}^{2}
				\dd v\,
				\dd\function{Q}{\mat{X}}{}{\mat{x}}.		
			\end{split}
		\end{align}
		For a fixed $\mat{\theta} \in \tilde\Theta$, the inner integral in the previous display converges to zero since $\varphi_{0,\mat{\theta}}$ is continuous and $h \converge 0$. However, this does not hold uniformly on $\tilde\Theta$. To circumvent the non-uniformity issue, we construct a covering of the space of bounded continuous monotone functions by a finite number of balls and approximate $\varphi_{0,\mat{\theta}}$ by one of the centers of such balls.  Let $\zeta > 0$ and let $[{\tilde{A}, \tilde{B}}]$ be an interval that contains all the intervals $[{\ubar{I}_{\mat\gamma} - h, \bar{I}_{\gamma} + h}]$. Denote by $Q
		$ the uniform distribution on $[{\tilde{A}, \tilde{B}}]$. By Lemma B.4, 
		where $M \leq \exp\sbracket{A/r}$ with a constant $A > 0$.  For a fixed $\mat{\theta} \in \tilde\Theta$, using the Minkowski inequality, we have
		\begin{align}
			\begin{split}
				\label{eq:smooth_link_l2_triangle_2nd_split}
				&\cbracket{
					\int_{\family{X}}
					\int_{-1}^{1}
					\cbracket{
						\function{\varphi}{0,\mat{\theta}}{}{\tr{\mat{\gamma}}\mat{x} - hv} - \function{\varphi}{0,\mat{\theta}}{}{\tr{\mat{\gamma}}\mat{x}}
					}^{2}
					\dd v\,
					\dd\function{Q}{\mat{X}}{}{\mat{x}}
				}^{\frac{1}{2}}	\\%
				&\leq
				\cbracket{
					\int_{\family{X}}
					\int_{-1}^{1}
					\cbracket{
						\function{\varphi}{0,\mat{\theta}}{}{\tr{\mat{\gamma}}\mat{x} - hv} - \function{\tilde{\varphi}}{i(\mat{\theta})}{}{\tr{\mat{\gamma}}\mat{x} - hv}
					}^{2}
					\dd v\,
					\dd\function{Q}{\mat{X}}{}{\mat{x}}
				}^{\frac{1}{2}}\\%
				&\quad +
				\cbracket{
					\int_{\family{X}}
					\int_{-1}^{1}
					\cbracket{
						\function{\tilde{\varphi}}{i(\mat{\theta})}{}{\tr{\mat{\gamma}}\mat{x} - hv} - \function{\tilde{\varphi}}{i(\mat{\theta})}{}{\tr{\mat{\gamma}}\mat{x}}
					}^{2}
					\dd v\,
					\dd\function{Q}{\mat{X}}{}{\mat{x}}
				}^{\frac{1}{2}}\\%
				&\quad +
				\cbracket{
					2	\int_{\family{X}}
					\cbracket{
						\function{\tilde{\varphi}}{i(\mat{\theta})}{}{\tr{\mat{\gamma}}\mat{x}} -
						\function{\varphi}{0,\mat{\theta}}{}{\tr{\mat{\gamma}}\mat{x}}
					}^{2}
					\dd\function{Q}{\mat{X}}{}{\mat{x}}
				}^{\frac{1}{2}}{= \rbracket{III} + \rbracket{IV} + \rbracket{V}
				}
			\end{split}
		\end{align}
		We will show that there exists $N$ independent of $\mat\theta$ such that each of these three terms are smaller than $\zeta/3$ for $n>N$ and any $\mat\theta\in\tilde\Theta$. Since this holds for any $\zeta>0$, we can then conclude that $(II)$ converges to zero uniformly in $\mat\theta$. 
		Consider the square of {$\rbracket{III}$} in \eqref{eq:smooth_link_l2_triangle_2nd_split}. As in  \eqref{eq:smooth_link_l2_triangle_1st}-\eqref{eqn:sup_I}, we obtain
		\begin{align*}
			\begin{split}
				&\int_{\family{X}}
				\int_{-1}^{1}
				\cbracket{
					\function{{\varphi}}{0,\mat{\theta}}{}{\tr{\mat{\gamma}}\mat{x} - hv} - \function{\tilde{\varphi}}{i(\mat{\theta})}{}{\tr{\mat{\gamma}}\mat{x} - hv}
				}^{2}
				\dd v\,
				\dd\function{Q}{\mat{X}}{}{\mat{x}}\\%
				&\leq c\left\{h+\int_{I_{\mat\gamma}}
				\cbracket{
					\function{{\varphi}}{0,\mat{\theta}}{}{s} - \function{\tilde{\varphi}}{i(\mat{\theta})}{}{s}
				}^{2}\dd\function{G}{\tr{\mat{\gamma}}\mat{X}}{}{s}\right.\\
				&\quad\left.+	\sup_{s\in[\ubar{I}_{\mat\gamma}+h,\bar{I}_{\mat\gamma}-h] }\left|
				\frac{1}{h}	\int_{s - h}^{s + h}
				\dd\function{G}{\tr{\mat{\gamma}}\mat{X}}{}{v}- 2\function{g}{\tr{\mat{\gamma}}\mat{X}}{}{s}
				\right| 			\right\}\\
				&\leq c\left(h+\bar{q}_1(\tilde{B}-\tilde{A})r^2+2\sup_{s\in[\ubar{I}_{\mat\gamma}+h,\bar{I}_{\mat\gamma}-h] }\left|\function{g}{\tr{\mat{\gamma}}\mat{X}}{}{\xi_{n,s}}-\function{g}{\tr{\mat{\gamma}}\mat{X}}{}{s}\right|\right),
			\end{split}
		\end{align*}
		where $s-h\leq \xi_{n,s}\leq s+h$, $\bar{q}_1$ denotes the uniform upper bound on the density functions $g_{\mat\gamma^T\mat{X}}(\cdot)$, and $c$ is a positive constant. By uniform equicontinuity of the family of functions $\{g_{\tr{\mat{\gamma}}\mat{X}}\}$ it follows that there exists $N_1$ such that,  $$\sup_{s\in[\ubar{I}_{\mat\gamma}+h,\bar{I}_{\mat\gamma}-h] }\left|\function{g}{\tr{\mat{\gamma}}\mat{X}}{}{\xi_{n,s}}-\function{g}{\tr{\mat{\gamma}}\mat{X}}{}{s}\right|\leq \frac{\zeta^2}{54c}$$ 
		for all $\mat\theta\in\tilde\Theta$, $h\leq \zeta^2/27c$ whenever $n\geq N_1$. If we  choose $r={\zeta/\sqrt{27c\bar{q}_1(\tilde{B}-\tilde{A})}}$ we obtain $(III)^2\leq \zeta^2/9$ for all $\mat\theta\in\tilde\Theta$ and $n>N_1$.
		In the same way, for the square of {$\rbracket{V}$} in \eqref{eq:smooth_link_l2_triangle_2nd_split} we  obtain, for $n>N_2$,
		\begin{equation}
			\label{eq:smooth_link_l2_triangle_2nd_splited_3rd}
			(V)^2 {\leq} 2\sup_{\mat{\theta}\in\tilde\Theta}
			\int_{\family{X}}
			\cbracket{
				\function{\tilde{\varphi}}{i(\mat{\theta})}{}{\tr{\mat{\gamma}}\mat{x}} -
				\function{\varphi}{0,\mat{\theta}}{}{\tr{\mat{\gamma}}\mat{x}}
			}^{2}
			\dd\function{Q}{\mat{X}}{}{\mat{x}}
			\leq
			{\frac{\zeta^2}{9}}.
		\end{equation}
		For the square of {\rbracket{IV}} in \eqref{eq:smooth_link_l2_triangle_2nd_split}, we have
		\begin{align}
			\begin{split}
				\label{eq:smooth_link_l2_triangle_2nd_splited_2nd}
				&	
				\int_{\family{X}}
				\int_{-1}^{1}
				\cbracket{
					\function{\tilde{\varphi}}{i(\mat{\theta})}{}{\tr{\mat{\gamma}}\mat{x} - hv} - \function{\tilde{\varphi}}{i(\mat{\theta})}{}{\tr{\mat{\gamma}}\mat{x}}
				}^{2}
				\dd v\,
				\dd\function{Q}{\mat{X}}{}{\mat{x}}\\%
				&\leq
				\bar{q}_{1}
				\int_{\tilde{A}}^{\tilde{B}}
				\int_{-1}^{1}
				\cbracket{
					\function{\tilde{\varphi}}{i(\mat{\theta})}{}{u - hv} - \function{\tilde{\varphi}}{i(\mat{\theta})}{}{u}
				}^{2}
				\dd v\,\dd u.\\%
			\end{split}
		\end{align}
		Since the functions	  $\tilde{\varphi}_{i(\mat{\theta})}$ are continuous on a compact, they are uniformly continuous and since we are dealing with a finite number $M$ of such functions the family $\{\tilde{\varphi}_{i(\mat{\theta})}: i=1,\dots,M\}$ is uniformly equicontinuous. Therefore, since 
		$h = h_{n} \converge 0$ as $n \converge \infty$, we can find an integer $N_3$ such that $$|\function{\tilde{\varphi}}{i(\mat{\theta})}{}{u - hv} - \function{\tilde{\varphi}}{i(\mat{\theta})}{}{u}| \leq \zeta /\sqrt{ 18\bar{q}_{1}({\tilde{B} - \tilde{A}})}$$ for all $u\in[\tilde{A},\tilde{B}]$, $v\in[-1,1]$ and $n \geq N_3$. Consequently,  $(IV)^2\leq \zeta^2/9$ for $n \geq N_3$ and any $\mat\theta\in\tilde\Theta$. 
		Combining these results, we have for $n \geq \max\{N_1,N_2,N_3\}$,
		\begin{equation*}
			\sup_{\mat{\theta}\in\tilde\Theta}\int_{\family{X}}
			\int_{-1}^{1}
			\cbracket{
				\function{\varphi}{0,\mat{\theta}}{}{\tr{\mat{\gamma}}\mat{x} - hv} - \function{\varphi}{0,\mat{\theta}}{}{\tr{\mat{\gamma}}\mat{x}}
			}^{2}
			\dd v\,
			\dd\function{Q}{\mat{X}}{}{\mat{x}} \leq \zeta^2,
		\end{equation*}
		from which the uniform almost sure convergence of $(II)$ follows. 
	\end{proof}
	
	\begin{proof}[Proof of Theorem \ref{prop:theta_estimator_consistency}]
		This can be proved by applying Theorem 1 of \cite{DK2020}, which boils down to verifying the required conditions. First we define some notation. 
		We equip the space $\tilde\Theta= \Gamma \times \mathcal{B} \times \tilde{\family{D}}$ with the metric  
		$d$ which is defined by
		\begin{equation*}
			d\rbracket{\mat{\theta}_{1}, \mat{\theta}_{2}} = \max\cbracket{\norm{\mat{\gamma}_{1} - \mat{\gamma}_{2}}{2}, \norm{\mat{\beta}_{1} - \mat{\beta}_{2}}{2}, \sup_{t \in \sbracket{0, \tau_{0}}}|\Lambda_{1}\rbracket{t} - \Lambda_{2}\rbracket{t}|},
		\end{equation*}
		where $\mat{\theta}_{j} = \rbracket{\mat{\gamma}_{j}, \mat{\beta}_{j}, \Lambda_{j}} \in \tilde\Theta$, $j = 1, 2$.
		Let $\family{H}$ denote the infinite-dimensional parameter space for the link,  defined by
		$
		\family{H} = \{\map{h}{\family{X} \times \tilde\Theta}{\sbracket{\epsilon^{\prime}, 1-\epsilon^{\prime}}},\,h\rbracket{\mat{x}, \mat{\theta}} = {\varphi}_{\mat{\theta}}({\tr{\mat{\gamma}}\mat{x}})\,\text{for some}~\varphi_{\mat{\theta}}\in\family{M}_{\epsilon^{\prime}}\}. 
		$
		We equip $\family{H}$ with the metric
		\begin{equation*}
			\function{d}{\family{H}}{}{h_{1}, h_{2}} = \sup_{\mat{\theta} \in \tilde\Theta}\rbracket{
				\int_{\family{X}}\cbracket{
					\function{\varphi}{1,\mat{\theta}}{}{\tr{\mat{\gamma}}\mat{x}} - 
					\function{\varphi}{2,\mat{\theta}}{}{\tr{\mat{\gamma}}\mat{x}}
				}^{2}\dd\function{Q}{\mat{X}}{}{\mat{x}}
			}^{\frac{1}{2}},
		\end{equation*}
		where $\function{h}{j}{}{\mat{x}, \mat{\theta}} = \function{\varphi}{j, \mat{\theta}}{}{\tr{\mat{\gamma}}\mat{x}}$ for some $\varphi_{j,\mat{\theta}} \in \family{M}_{\epsilon^{\prime}}, j=1,2$. Let $\function{h}{0}{}{\mat{x}, \mat{\theta}} = \function{\varphi}{0, \mat{\theta}}{}{\tr{\mat{\gamma}}\mat{x}}$, where $\varphi_{0,\mat{\theta}}$ is defined in \eqref{eqn:phi_theta}. Note that, because of the Kullback-Leibler inequality, 
		\[
		\expect[\sbracket]{l\rbracket{Y, \Delta, \mat{X}, \mat{Z}; \mat{\theta}, \varphi}} \leq \expect[\sbracket]{l\rbracket{Y, \Delta, \mat{X}, \mat{Z}; \mat{\theta}_{0}, \varphi_{0}}}
		\]
		for all $\mat{\theta} \in \tilde\Theta$, $\varphi \in \family{M}_{\epsilon^{\prime}}$ and, under the identifiability conditions, the equality holds only for $\mat{\theta} = \mat{\theta}_{0}$ and $\varphi = \varphi_{0}$.  By Proposition \ref{prop:true_link} it follows that $
		\mat{\theta}_{0} = \argmax_{\mat{\theta} \in \tilde\Theta}\expect[\sbracket]{l\rbracket{Y, \Delta, \mat{X}, \mat{Z}; \mat{\theta}, \varphi_{0,\mat{\theta}}}}.
		$ Then, the non-parametric estimator of $h_{0}$ is $\hat{h}(\mat{x}, \mat{\theta}) = \function{\hat{\varphi}}{n,\mat{\theta}}{s}{\tr{\mat{\gamma}}\mat{x}}$, where $\hat{\varphi}_{n,\mat{\theta}}^{s}$ is our smoothed monotone link estimator defined in \eqref{eq:smooth_isotonic_estimator}.
		Consider i.i.d. realization $\rbracket{y_{i},\delta_{i},\mat{x}_{i},\mat{z}_{i}}$, $i=1,\cdots,n$ of $\rbracket{Y, \Delta, \mat{X}, \mat{Z}}$. Let $h \in \family{H}$ such that $\function{h}{}{}{\mat{x}, \mat{\theta}} = \function{\varphi}{\mat{\theta}}{}{\tr{\mat{\gamma}}\mat{x}}$ for some $\varphi_{\mat{\theta}} \in \family{M}_{\epsilon^{\prime}}$. We define \[
		\function{M}{}{}{\mat{\theta}, h} = \expect[\sbracket]{l\rbracket{Y, \Delta, \mat{X}, \mat{Z}; \mat{\theta}, \varphi_{\mat{\theta}}}}\quad\text{and}\quad\function{M}{n}{}{\mat{\theta}, h} = \frac{1}{n}\sum_{i=1}^{n}l\rbracket{Y, \Delta, \mat{X}, \mat{Z}; \mat{\theta}, \varphi_{\mat{\theta}}}.
		\]
		   Next we verify the required conditions (A1)-(A5) of Theorem 1 of \cite{DK2020}. 
		
		Condition (A1)  is satisfied by definition of $\hat{\mat{\theta}}_{n}$.
		For condition (A2), note that $\function{M}{}{}{\mat{\theta}, h_{0}} = \expect[\sbracket]{l\rbracket{Y, \Delta, \mat{X}, \mat{Z}; \mat{\theta}, \varphi_{0,\mat{\theta}}}}$. Under the model identifiability assumptions, the negative Kullback-Leibler divergence $\function{M}{}{}{\mat{\theta}_{0}, h_{0}} - \function{M}{}{}{\mat{\theta}, h_{0}}$ attains its maximum uniquely at $\mat{\theta}_{0}$. Since $\function{M}{}{}{\mat{\theta}, h_{0}} $ is continuous with respect to $\mat\theta$, on each compact set $\tilde\Theta\cap\{\mat\theta:d(\mat\theta,\mat\theta_0)\geq \delta\}$, it will obtain a maximum strictly smaller than $\function{M}{}{}{\mat{\theta}_{0}, h_{0}}$, which indicates that (A2) is fulfilled.
		Condition (A3) follows immediately from Proposition \ref{prop:smooth_estimator_convergence} and the fact that $\hat{\varphi}^s_{n,\mat{\theta}} \in \family{M}_{\epsilon^{\prime}}$ for all $\mat{\theta} \in \tilde\Theta$.
		Using Remark (ii) of \cite{DK2020},	Condition (A4) follows from the fact that the family of uniformly bounded functions
		\[
		\begin{split}
			\family{L}_{\epsilon^{\prime}}&=\left\{l\rbracket{y, \delta, \mat{x}, \mat{z}} = \rbracket{1 - \delta}\log\sbracket{1 - \varphi(\tr{\mat{\gamma}}\mat{x})\left\{ 1 - \exp\rbracket{-\function{\Lambda}{}{}{y}e^{\tr{\mat{\beta}}\mat{z}}}\right\}}\right.\\
			&\qquad\qquad\qquad\qquad+ \delta\log\link[\tr{\mat{\gamma}}\mat{x}]{}{};
			(\mat\gamma,\mat\beta,\Lambda)\in\tilde\Theta, \, \varphi\in\family{M}_{\epsilon'}\bigg\}
		\end{split}
		\]
		is Glivenko-Cantelli (see Lemma~\ref{lemma:bracket_entropy_L}). 
		For condition (A5), note that for fixed $\mat{\theta} \in \tilde\Theta$, we have, from the proof of Proposition \ref{prop:link_maximizer}, that
		\begin{equation*}
			|\function{M}{}{}{\mat{\theta}, h} - \function{M}{}{}{\mat{\theta}, h_{0}}| \leq \frac{2}{\epsilon^{\prime}}\rbracket{
				\int_{\family{X}}\cbracket{
					\function{\varphi}{\mat{\theta}}{}{\tr{\mat{\gamma}}\mat{x}} - 
					\function{\varphi}{0,\mat{\theta}}{}{\tr{\mat{\gamma}}\mat{x}}
				}^{2}\dd\function{Q}{\mat{X}}{}{\mat{x}}
			}^{\frac{1}{2}},
		\end{equation*}
		where $\function{h}{}{}{\mat{x}, \mat{\theta}} = \function{\varphi}{\mat{\theta}}{}{\tr{\mat{\gamma}}\mat{x}}$ for some $\varphi_{\mat{\theta}} \in \family{M}_{\epsilon^{\prime}}$. Then it follows  immediately from the definition of $d_{\family{H}}$ that  
		$ \sup_{\mat{\theta} \in \Theta} |\function{M}{}{}{\mat{\theta}, h} - \function{M}{}{}{\mat{\theta}, h_{0}}|$ converges to zero as $d_{\family{H}}\rbracket{h, h_{0}} \converge 0$.
	\end{proof}

	{
		\begin{proof}[Proof of Corollary \ref{prop:link_s_u_consistency}]
			We first show the convergence of link estimator $\hat{p}_{n}$. Using Minkowski inequality and $\varphi_{0,{\mat{\theta}}_{0}} = \varphi_{0}$, we have
			\begin{align*}
				\begin{split}
					&\cbracket{
						\int_{\family{X}}\cbracket{
							{\hat{\varphi}}_{n,\hat{\mat{\theta}}_{n}}^{s}({\tr{\hat{\mat{\gamma}}}_{n}\mat{x}}) - 	\link[\tr{\mat{\gamma}}_{0}\mat{x}]{0}{}
						}^{2}\dd\function{Q}{\mat{X}}{}{\mat{x}}
					}^{\frac{1}{2}}\\
					&\leq
					\cbracket{
						\int_{\family{X}}\cbracket{
							{\hat{\varphi}}_{n,\hat{\mat{\theta}}_{n}}^{s}({\tr{\hat{\mat{\gamma}}}_{n}\mat{x}}) - 	{{\varphi}}_{0,\hat{\mat{\theta}}_{n}}({\tr{\hat{\mat{\gamma}}}_{n}\mat{x}})
						}^{2}\dd\function{Q}{\mat{X}}{}{\mat{x}}
					}^{\frac{1}{2}}\\%
					&\quad+
					\cbracket{
						\int_{\family{X}}\cbracket{
							{{\varphi}}_{0,\hat{\mat{\theta}}_{n}}({\tr{\hat{\mat{\gamma}}}_{n}\mat{x}}) - 	
							{{\varphi}}_{0,{\mat{\theta}}_{0}}({\tr{\hat{\mat{\gamma}}}_{n}\mat{x}})
						}^{2}\dd\function{Q}{\mat{X}}{}{\mat{x}}
					}^{\frac{1}{2}}\\%
					&\quad+
					\cbracket{
						\int_{\family{X}}\cbracket{
							{{\varphi}}_{0}({\tr{\hat{\mat{\gamma}}}_{n}\mat{x}}) - 	\link[\tr{\mat{\gamma}}_{0}\mat{x}]{0}{}
						}^{2}\dd\function{Q}{\mat{X}}{}{\mat{x}}
					}^{\frac{1}{2}}\\%
					& = (I) + (II) + (III).
				\end{split}
			\end{align*}
			
			Consider the square of $(I)$, using the result of Proposition \ref{prop:smooth_estimator_convergence}, we have
			\[
				(I)^{2} \leq \sup_{\mat{\theta}\in\tilde\Theta}
				\int_{\family{X}}\cbracket{
					\function{\hat{\varphi}}{n,\mat{\theta}}{s}{\tr{\mat{\gamma}}\mat{x}} - 	\function{\varphi}{0,\mat{\theta}}{}{\tr{\mat{\gamma}}\mat{x}}
				}^{2}\dd\function{Q}{\mat{X}}{}{\mat{x}} \converge[P] 0\quad\text{as}~n\to\infty.
			\]
			
			For the square of $(III)$, by the assumption that $\varphi'_{0}$ is uniformly bounded on $\family{I}_{0}$ and using the mean value theorem followed by the Cauchy-Schwarz inequality, we have, for every $\mat{x}\in\family{X}$,
			\[
				\vert
					\varphi_{0}(\tr{\hat{\mat{\gamma}}}_{n}\mat{x}) - \varphi_{0}(\tr{{\mat{\gamma}}}_{0}\mat{x})
				\vert
				\leq c_{1}r_{1}\norm{\hat{\mat{\gamma}}_{n} - \mat{\gamma}_{0}}{2},
			\]
			where $c_{1}$ a positive constant that does not depend on $\mat{x}$. Therefore, using the result of Theorem \ref{prop:theta_estimator_consistency}, we have
			\begin{align*}
				\begin{split}
					(III)^{2}&=
					\int_{\family{X}}\cbracket{
						{{\varphi}}_{0}({\tr{\hat{\mat{\gamma}}}_{n}\mat{x}}) - \link[\tr{\mat{\gamma}}_{0}\mat{x}]{0}{}
					}^{2}\dd\function{Q}{\mat{X}}{}{\mat{x}}\leq
					c_{1}^{2}r_{1}^{2}\norm{\hat{\mat{\gamma}}_{n} - \mat{\gamma}_{0}}{2}^{2}
					\converge[P]0.
				\end{split}
			\end{align*}
			
			For the square of $(II)$, we apply the argmax continuous mapping theorem \cite[Theorem 3.2.2]{van2000empirical} to show $\sup_{x\in\real}\vert{\varphi}_{0,\hat{\mat{\theta}}_{n}}(x) - {{\varphi}}_{0,{\mat{\theta}}_{0}}(x)\vert \converge[P] 0$ if $\hat{\mat{\theta}}_{n}\converge[P]{\mat{\theta}}_{0}$. Given this uniform convergence, the convergence of $(II)$ to zero in probability follows immediately. Let
			\begin{equation*}
				m\rbracket{y, \delta, \mat{x}, \mat{z}; \mat{\theta}, \varphi} = \delta\log\link{}{} + \rbracket{1 - \delta}\log\sbracket{1 - \link{}{}\uncuredsurvival[F]{\mat{z}}{y}{}}.
			\end{equation*}
			Define $\mathbb{M}_{n}(\varphi)=\mathbb{E}[{m({Y, \Delta, \mat{X}, \mat{Z}; \hat{\mat{\theta}}_{n}, \varphi})}]$ and $\mathbb{M}(\varphi)=\mathbb{E}[{m({Y, \Delta, \mat{X}, \mat{Z}; {\mat{\theta}}_{0}, \varphi})}]$, where $\varphi$ belongs to the metric space $H=(\family{M}_{\epsilon^\prime}, \norm{\cdot}{\infty})$ with $\norm{\varphi}{\infty}=\sup_{x\in\real}\vert\varphi(x)\vert$. By the definition of $\varphi_{0,\mat{\theta}}$ in \eqref{eqn:phi_theta}, we have $\varphi_{0,\hat{\mat{\theta}}_{n}}=\argmax_{\varphi \in \family{M}_{\epsilon^{\prime}}}\mathbb{M}_{n}(\varphi)$ and $\varphi_{0,{\mat{\theta}}_{0}}=\argmax_{\varphi \in \family{M}_{\epsilon^{\prime}}}\mathbb{M}(\varphi)$. We note that the continuity of the sample paths $\varphi \mapsto \mathbb{M}(\varphi)$ on $H$ follows from a similar argument in the proof of Proposition \ref{prop:link_maximizer}. $\varphi_{0,{\mat{\theta}}_{0}}$ being tight and being the unique maximizer of $\mathbb{M}$, $\varphi_{0,\hat{\mat{\theta}}_{n}}$ being uniformly tight and maximizing $\mathbb{M}_{n}$ follow from their definitions. Therefore, it remains to show $\mathbb{M}_{n}$ converges weakly to $\mathbb{M}$ in $\ell^{\infty}(K)$ for every compact $K \subset H$. By Theorem \ref{prop:theta_estimator_consistency} and since every $\varphi\in\family{M}_{\epsilon^\prime}$ are uniformly bounded, we can show a stronger statement that $\sup_{\varphi\in\family{M}_{\epsilon^{\prime}}}\vert\mathbb{M}_{n}(\varphi)-\mathbb{M}(\varphi)\vert\converge[P]0$ and hence the weak convergence condition holds.			
			For a fixed $\varphi \in \family{M}_{\epsilon^\prime}$, we have
			\begin{align*}
				\begin{split}
					&\vert\mathbb{M}_{n}(\varphi) - \mathbb{M}(\varphi)\vert\\%
					&\leq
					\left\vert\expect[\sbracket]{
						\Delta\log\frac{
							\varphi(\tr{\hat{\mat{\gamma}}}_{n}\mat{X})
						}{
							\varphi(\tr{\mat{\gamma}}_{0}\mat{X})
						}
					}\right\vert
					+
					\left\vert\expect[\sbracket]{
						(1 - \Delta)\log\frac{
							1 - \varphi(\tr{\hat{\mat{\gamma}}}_{n}\mat{X})F_{u}(Y|\mat{Z};\hat{\mat{\beta}}_{n},\hat{\Lambda}_{n})
						}{
							1 - \varphi(\tr{\mat{\gamma}}_{0}\mat{X})F_{u}(Y|\mat{Z};\mat{\beta}_{0},\Lambda_{0})
						}
					}\right\vert\\%
					&= (IV) + (V),
				\end{split}
			\end{align*}
			where $F_{u}(t|\mat{z};\mat{\beta},\Lambda) = 1 - \exp[{-\Lambda(t)\exp({\tr{\mat{\beta}}\mat{z}}})]$.
						For $(IV)$, using a similar argument as in the proof of Proposition \ref{prop:link_maximizer}, we have
			\begin{equation*}
				(IV)\leq\frac{1}{\epsilon^{\prime}}\left(\expect[\sbracket]{
					\left\lbrace
						\varphi(\tr{\hat{\mat{\gamma}}}_{n}\mat{X}) - 
						\varphi(\tr{\mat{\gamma}}_{0}\mat{X})
					\right\rbrace^{2}
				}\right)^{\frac{1}{2}}.
			\end{equation*}
			To show that the term on the right-hand side of the above inequality converges to zero in probability, let $\zeta > 0$ and consider $\zeta$--brackets $[\varphi_{k}^{L}, \varphi_{k}^{U}]$, $k = 1, \cdots, N$, covering the class $\family{M}_{\epsilon^{\prime}}$ such that
			\begin{equation*}
				\int_{\family{I}_{0}}
				\cbracket{
					\function{\varphi}{k}{U}{t} - \function{\varphi}{k}{L}{t}
				}^{2}
				\dd G_{\tr{\mat{\gamma}}_{0}\mat{X}}(t) \leq \zeta,\quad k=1,\cdots,N,
			\end{equation*}
			where $N\leq\exp(A/\zeta)$ for some constant $A >0$, by Lemma \ref{lemma:bracket_entropy_M}. Using a similar argument in the proof of Lemma \ref{lemma:bracket_entropy_F}, we have, 
			\begin{align*}
				\begin{split}
					&\left(\int_{\family{X}}
					\cbracket{
						\varphi(\tr{\hat{\mat{\gamma}}}_{n}\mat{x}) - 
						\varphi(\tr{\mat{\gamma}}_{0}\mat{x})
					}^{2}
					\dd G_{\mat{X}}(\mat{x})\right)^{\frac{1}{2}}\\%
					&\leq
					\left(\int_{\family{X}}
					\cbracket{
						\varphi_{k}^{U}(\tr{\mat{\gamma}}_{0}\mat{x}+r_{1}\Vert\mat{e}_{n}\Vert_{2}) - 
						\varphi_{k}^{U}(\tr{\mat{\gamma}}_{0}\mat{x})
					}^{2}
					\dd G_{\mat{X}}(\mat{x})\right)^{\frac{1}{2}}\\%
					&\quad+
					\left(\int_{\family{X}}
					\cbracket{
						\varphi_{k}^{U}(\tr{\mat{\gamma}}_{0}\mat{x}) - 
						\varphi_{k}^{L}(\tr{\mat{\gamma}}_{0}\mat{x})
					}^{2}
					\dd G_{\mat{X}}(\mat{x})\right)^{\frac{1}{2}}=(A) + (B),
				\end{split}
			\end{align*}
			for some $k=1,\cdots,N$, where $\mat{e}_{n}=\hat{\mat{\gamma}}_{n} - \mat{\gamma}_{0}$.
			For the square of $(A)$, using a similar argument in \eqref{eq:link_diff_bound} in the proof of Lemma \ref{lemma:bracket_entropy_F}, we have
			\begin{align*}
				\begin{split}
					(A)^{2} 
					&\leq
					\bar{q}_{1}(1-2\epsilon^{\prime})\rbracket{
						\int_{-r_{1}+r_{1}\norm{\mat{e}_{n}}{2}}^{r_{1}+ r_{1}\norm{\mat{e}_{n}}{2}}
							\function{\varphi}{k}{U}{t}
						\dd t - 
						\int_{-r_{1}}^{r_{1}}
							\function{\varphi}{k}{U}{t}
						\dd t
					}\\%
					&=
					\bar{q}_{1}(1-2\epsilon^{\prime})\rbracket{
						\int_{r_{1}}^{r_{1}+r_{1}\norm{\mat{e}_{n}}{2}}
							\function{\varphi}{k}{U}{t}
						\dd t - 
						\int_{-r_{1}}^{-r_{1}+r_{1}\norm{\mat{e}_{n}}{2}}
							\function{\varphi}{k}{U}{t}
						\dd t
					}\\%
					&\leq
					\bar{q}_{1}r_{1}(1-2\epsilon^{\prime})^{2}\norm{\mat{e}_{n}}{2}.
				\end{split}
			\end{align*}
		For the square of $(B)$, since $[\varphi_{k}^{L}, \varphi_{k}^{U}]$, $k = 1, \cdots, N$, are $\zeta$--brackets covering $\family{M}_{\epsilon^{\prime}}$, choosing $\zeta=\norm{\mat{e}_{n}}{2}$ we have $(B)^{2}\leq\norm{\mat{e}_{n}}{2}$. Note that $\zeta$ depends on $n$ and on the realization of the data $\omega$ but the argument can be followed for any fixed $n$ and $\omega$.
			Therefore, by Theorem \ref{prop:theta_estimator_consistency} together with the results on $(A)$ and $(B)$, we obtain
						\begin{align*}
				\begin{split}
					&\sup_{\varphi\in\family{M}_{\epsilon^{\prime}}}\left\vert\expect[\sbracket]{
						\Delta\log\frac{
							\varphi(\tr{\hat{\mat{\gamma}}}_{n}\mat{X})
						}{
							\varphi(\tr{\mat{\gamma}}_{0}\mat{X})
						}
					}\right\vert\\%
					&\leq\frac{1}{\epsilon^{\prime}}
					\left\lbrace
					\sup_{k}
					\left(\int_{\family{X}}
					\cbracket{
						\varphi_{k}^{U}(\tr{\mat{\gamma}}_{0}\mat{x}+r_{1}\Vert\mat{e}_{n}\Vert_{2}) - 
						\varphi_{k}^{U}(\tr{\mat{\gamma}}_{0}\mat{x})
					}^{2}
					\dd G_{\mat{X}}(\mat{x})\right)^{\frac{1}{2}}
					\right.\\%
					&\left.\qquad+
					\sup_{k}
					\left(\int_{\family{X}}
					\cbracket{
						\varphi_{k}^{U}(\tr{\mat{\gamma}}_{0}\mat{x}) - 
						\varphi_{k}^{L}(\tr{\mat{\gamma}}_{0}\mat{x})
					}^{2}
					\dd G_{\mat{X}}(\mat{x})\right)^{\frac{1}{2}}\right\rbrace\\%
					&\leq\frac{1}{\epsilon^{\prime}}
					\cbracket{
						\rbracket{1 + (1-2\epsilon^{\prime})\sqrt{\bar{q}_{1}r_{1}}}
						\sqrt{\norm{\mat{e}_{n}}{2}}
					}
					\converge[P]0\quad\text{as}~n\to\infty.
				\end{split}
			\end{align*}
			Consider $(V)$, using the mean value theorem, the Cauchy-Schwarz inequality, $\epsilon^{\prime} \leq \varphi \leq 1-\epsilon^{\prime}$, $F_{u} \leq 1$, and the Minkowski inequality, we have
			\begin{align*}
				\begin{split}
					(V) 
					&\leq 
					\frac{1}{\epsilon^{\prime}}\left\lbrace
						\left(
							\expect[\sbracket]{\cbracket{
									\varphi(\tr{\hat{\mat{\gamma}}}_{n}\mat{X}) - 
									\varphi(\tr{\mat{\gamma}}_{0}\mat{X})
								}^{2}F_{u}^{2}(Y|\mat{Z};\mat{\beta}_{0},\Lambda_{0})}
						\right)^\frac{1}{2}
					\right.\\%
					&\left.\qquad+
						\left(
							\expect[\sbracket]{\cbracket{
									F_{u}(Y|\mat{Z};\hat{\mat{\beta}}_{n},\hat{\Lambda}_{n}) - 
									F_{u}(Y|\mat{Z};\mat{\beta}_{0},\Lambda_{0})
								}^{2}\varphi^{2}(\tr{\hat{\mat{\gamma}}}_{n}\mat{X})}
						\right)^\frac{1}{2}
					\right\rbrace\\%
					&\leq
					\frac{1}{\epsilon^{\prime}}
					\left(
					\expect[\sbracket]{\cbracket{
							\varphi(\tr{\hat{\mat{\gamma}}}_{n}\mat{X}) - 
							\varphi(\tr{\mat{\gamma}}_{0}\mat{X})
						}^{2}}
					\right)^\frac{1}{2}\\%
					&\qquad+
					\frac{1 - \epsilon^{\prime}}{\epsilon^{\prime}}
					\left(
					\expect[\sbracket]{\cbracket{
							F_{u}(Y|\mat{Z};\hat{\mat{\beta}}_{n},\hat{\Lambda}_{n}) - 
							F_{u}(Y|\mat{Z};\mat{\beta}_{0},\Lambda_{0})
						}^{2}}
					\right)^\frac{1}{2}\\%
					&= (C) + (D)
				\end{split}
			\end{align*}
			The convergence of $(C)$ to zero in probability follows from the previous results on $(IV)$.			
			For the square of $(D)$, using the mean value theorem, $F_{u} \leq 1$, and the Minkowski inequality, we have
			\begin{align*}
				\begin{split}
					\sbracket{\frac{\epsilon^{\prime}}{1 - \epsilon^{\prime}}(D)}^{2}
					&\leq
					\left(
						\expect[\sbracket]{\cbracket{
								\hat{\Lambda}_{n}(Y) - \Lambda_{0}(Y)
							}^{2}\exp(2\tr{\hat{\mat{\beta}}}_{n}\mat{Z})
					}\right)^\frac{1}{2}\\%
					&\quad+
					\left(
					\expect[\sbracket]{\cbracket{
							\exp(\tr{\hat{\mat{\beta}}}_{n}\mat{Z}) - \exp(\tr{\mat{\beta}}_{0}\mat{Z})
						}^{2}\Lambda_{0}^{2}(Y)
					}\right)^\frac{1}{2}\\
					&= (E) + (F).
				\end{split}
			\end{align*}
			For $(E)$, we note that, for every $\mat{z}\in\family{Z}$, 
			\[
				\exp(2\tr{\hat{\mat{\beta}}}_{n}\mat{z}) \leq \exp(2\tr{\mat{\beta}}_{0}\mat{z} + 2r_{2}\norm{\mat{e}^{\prime}_{n}}{2}) \leq c_{2}\exp(\norm{\mat{e}^{\prime}_{n}}{2}),
			\]
			where $\mat{e}^{\prime}_{n} = \hat{\mat{\beta}}_n-\mat\beta_0$ and $c_{2}$ is a positive constant. The existence of the constant $c_2$ follows from assumptions \ref{enum:bounded_support} and \ref{enum:compact_param_space}. By Theorem \ref{prop:theta_estimator_consistency} that $\Vert\hat{\mat{\beta}}_n-\mat\beta_0\Vert_2$ and $\sup_{t \in \sbracket{0, \tau_{0}}}|\hat\Lambda_{n}(t) - \Lambda_{0}(t)|$ converge to zero in probability, $(E)$ converges to zero in probability.
			For $(F)$, we note that $\Lambda_{0}$ is strictly increasing and $\Lambda_{0}(\tau_{0}) <\infty$ by
			\asref{enum:Lambda}. Using the mean value theorem, the Cauchy-Schwarz inequality, assumptions \ref{enum:bounded_support} and \ref{enum:compact_param_space}, and Theorem \ref{prop:theta_estimator_consistency}, we have
			\[
				(F)^{2} \leq c_{3}\exp(\norm{\mat{e}^{\prime}_{n}}{2})\norm{\mat{e}^{\prime}_{n}}{2}
				\converge[P]0\quad\text{as}~n\to\infty,
			\]
			where $c_{3}$ is a positive constant. Thus, combining the above results on $(IV)$ and $(V)$, we have $\sup_{\varphi\in\family{M}_{\epsilon^{\prime}}}\vert\mathbb{M}_{n}(\varphi)-\mathbb{M}(\varphi)\vert\converge[P]0$ as $n\to\infty$.
			
			To show the convergence of the latency estimator $\hat{S}_{u}$, using the mean value theorem,  $F_{u} \leq 1$, and the triangle inequality, we have, for every $t\in[0,\tau_{0}]$ and $\mat{z}\in\family{Z}$,
			\[
				\left\vert
				\hat{S}_{u}(t|\mat{z}) - S_{u}(t|\mat{z})
				\right\vert
				\leq
				\left\vert
				[\hat{\Lambda}_{n}(t) - \Lambda_{0}(t)]\exp(\tr{\hat{\mat{\beta}}}_{n}\mat{z})
				\right\vert
				+
				\left\vert
				[\exp(\tr{\hat{\mat{\beta}}}_{n}\mat{z}) - \exp(\tr{\mat{\beta}}_{0}\mat{z})]\Lambda_{0}(t)
				\right\vert.
			\]
			Using similar arguments as in $(E)$ and $(F)$, we have 
			\[
				\sup_{t \in \sbracket{0, \tau_{0}}}
				\left\vert
				\hat{S}_{u}(t|\mat{z}) - S_{u}(t|\mat{z})
				\right\vert
				\converge[P] 0\quad\text{as}~n\to\infty.
			\]
		\end{proof}
	}

	\section{Additional simulation results}
	\subsection{Selection of truncation parameters}
	\label{sec:appendix_setting_B}
	When the link function has a small range as in experiment B of the simulation study, it is better to determine the truncation parameters for the link function in a data driven way instead of taking fixed bounds $\epsilon'=10^{-6}$ and $1-\epsilon'$. In practice, determining the truncation parameters from the data is challenging. We use ideas form the range-regularized isotonic regression problem studied in \cite{LR2017}. Specifically, we replace the uniform bound restricted MLE in step \ref{eq:link_MLE_Mstep} of Algorithm \ref{algo:link_estimator_algo} by considering the following range-regularized isotonic regression problem 
	\begin{equation}
		\label{eq:BIR}
		\begin{aligned}
			\minimize_{\mat{y}, a, b} \quad& \sum_{i=1}^{n}\rbracket{w_{i} - y_{i}}^{2} + \mu\rbracket{b - a}\\%
			\text{subject to} \quad& a \leq y_{1} \leq y_{2} \leq \cdots \leq y_{n} \leq b,
		\end{aligned}
	\end{equation}
	where $a = \underset{i = 1, \cdots, n}{\inf}y_{i}$, $b = \underset{i = 1, \cdots, n}{\sup}y_{i}$, and $\mu$ is the regularization parameter for shrinking the range $b - a$. To reduce computational cost, the data driven truncation parameters are computed only at the first iteration of the EM algorithm. We applied the bounded isotonic regression algorithm proposed by \cite{LR2017} to obtain solutions over the regularization path. The optimal regularization parameter and hence the lower and upper truncation, are determined by a $K$-fold cross-validation with the expected prediction error for cure probability (EPECP) \citep{JSP2017} as the performance metric. EPECP is a weight adjusted Brier score for assessing the prediction accuracy of a mixture cure model in predicting cure probability. \cite{JSP2017} proposed an estimator for EPECP and studied the statistical properties of this estimator.%
	
	In Tables~\ref{tab:sim_results_B_const_link}-\ref{tab:sim_results_B_nonconst_link} below we provide extra simulation results for experiment B using this choice of data-dependent truncation parameters. 
	\begin{table}[h]
		\centering
		\caption[Simulation results]{Simulation results for experiment B when the truncation parameters are equal \label{tab:sim_results_B_const_link}}
		{
			\renewcommand{\arraystretch}{1}
			\small
			\setlength{\tabcolsep}{2pt}
			\begin{tabular}{cccl@{\extracolsep{6pt}}cc@{\extracolsep{6pt}}cc}
	\hline
	\multirow{2}{*}{Sample Size} & \multirow{2}{*}{$\lambda_{C}$} & \multirow{2}{*}{\shortstack{Number of\\Cases}} & \multirow{2}{*}{Method} & \multicolumn{2}{c}{$\text{MSE}\rbracket{\hat{p},p_{0}}$} & \multicolumn{2}{c}{$\hat{\mat{\gamma}}$} \\ \cline{5-8} 
	&                                &                                  &                         & Mean                       & Variance                    & Bias               & Variance            \\ \hline
	\multirow{4}{*}{250}         & \multirow{2}{*}{0.1}           & \multirow{2}{*}{129}             & mSIC                    & 0.00350                    & 1.77E-06                    & 1.03314            & 0.23414             \\
	&                                &                                  & SIC                     & 0.00666                    & 8.47E-05                    & 1.10141            & 0.30039             \\ \cline{2-8} 
	& \multirow{2}{*}{0.4}           & \multirow{2}{*}{131}             & mSIC                    & 0.00515                    & 9.97E-06                    & 0.95056            & 0.21264             \\
	&                                &                                  & SIC                     & 0.00936                    & 1.51E-04                    & 1.17386            & 0.29076             \\ \hline
	\multirow{4}{*}{500}         & \multirow{2}{*}{0.1}           & \multirow{2}{*}{101}             & mSIC                    & 0.00295                    & 3.26E-07                    & 0.87385            & 0.16862             \\
	&                                &                                  & SIC                     & 0.00400                    & 2.94E-05                    & 0.98576            & 0.30896             \\ \cline{2-8} 
	& \multirow{2}{*}{0.4}           & \multirow{2}{*}{116}             & mSIC                    & 0.00489                    & 4.57E-06                    & 0.84465            & 0.18364             \\
	&                                &                                  & SIC                     & 0.00504                    & 3.19E-05                    & 1.11335            & 0.29095             \\ \hline
\end{tabular}
		}
	\end{table}
	
	\begin{table}[h]
		\centering
		\caption[Simulation results]{Simulation results for experiment B when the truncation parameters are different \label{tab:sim_results_B_nonconst_link}}
		{
			\renewcommand{\arraystretch}{1}
			\small
			\setlength{\tabcolsep}{2pt}
			\begin{tabular}{cccl@{\extracolsep{6pt}}cc@{\extracolsep{6pt}}cc}
	\hline
	\multirow{2}{*}{Sample Size} & \multirow{2}{*}{$\lambda_{C}$} & \multirow{2}{*}{\shortstack{Number of\\Cases}} & \multirow{2}{*}{Method} & \multicolumn{2}{c}{$\text{MSE}\rbracket{\hat{p},p_{0}}$} & \multicolumn{2}{c}{$\hat{\mat{\gamma}}$} \\ \cline{5-6} \cline{7-8}
	&                                &                                  &                         & Mean                       & Variance                    & Bias               & Variance            \\ \hline
	\multirow{4}{*}{250}         & \multirow{2}{*}{0.1}           & \multirow{2}{*}{371}             & mSIC                  & 0.00818                    & 7.55E-05                    & 0.98945            & 0.18702             \\
	&                                &                                  & SIC                     & 0.00888                    & 1.00E-04                    & 1.00346            & 0.25033             \\ \cline{2-8} 
	& \multirow{2}{*}{0.4}           & \multirow{2}{*}{369}             & mSIC                    & 0.01052                    & 1.22E-04                    & 0.94831            & 0.16934             \\
	&                                &                                  & SIC                     & 0.01079                    & 1.28E-04                    & 1.00270            & 0.23379             \\ \hline
	\multirow{4}{*}{500}         & \multirow{2}{*}{0.1}           & \multirow{2}{*}{399}             & mSIC                    & 0.00468                    & 2.75E-05                    & 0.88273            & 0.17155             \\
	&                                &                                  & SIC                     & 0.00531                    & 4.46E-05                    & 0.90003            & 0.24909             \\ \cline{2-8} 
	& \multirow{2}{*}{0.4}           & \multirow{2}{*}{384}             & mSIC                    & 0.00622                    & 4.38E-05                    & 0.87617            & 0.16895             \\
	&                                &                                  & SIC                     & 0.00601                    & 4.80E-05                    & 0.92918            & 0.23441             \\ \hline
\end{tabular}
		}
	\end{table}

{
	\subsection{Non-monotone true link function}
	\label{sec:non_monotone}
To investigate how the method performs when the true link function is not monotone, we consider an additional simulation experiment with the following non-monotone link function,
	\begin{equation*}
		\function{\varphi}{D}{}{u} = \frac{
			\exp\sbracket{\psi(u - c)}
		}{
			1 + \exp\sbracket{\psi(u - c)}
		},
	\end{equation*}
	where $c$ is an intercept term and $\psi(x)= 0.5x^{3}-0.1x^{2}-0.8x+1$. \fref{fig:link_non_monotone} depicts such function over $[-4, 4]$ when the intercept term $c = 0$. \tref{tab:sim_settings_additional} shows the parameters  $\mat{\gamma}_{0}$, $\mat{\beta}_{0}$, and $\lambda_{C}$ of the additional simulation. The averages of the cure proportion, the censoring rate and the proportion of observations in the plateau are also reported in the same table. The truncation parameter $\epsilon^\prime$ is set to $10^{-6}$. Other settings, such as covariates, baseline distribution of the uncured subjects, censoring time distribution, remain the same as described in \sref{sec:sim}. \tref{tab:sim_results_additional} summarizes the simulation results, including the MSE of the link estimates, bias and variance of the coefficient estimates, for both SIC and mSIC methods. As expected, when the true link function is non-monotone, the SIC method performs better in estimating the link and $\mat{\gamma}$. In general, 
	the effect would depend on the amount of deviation from the monotonicity assumption and mSIC would still perform well for small deviations. On the other hand, we observe that both methods behave similarly in estimating $\mat{\beta}$ meaning that the latency component is not very sensitive to non-monotonicity of the link function.
	\begin{figure}
		\centering
		\includegraphics[scale=0.2]{./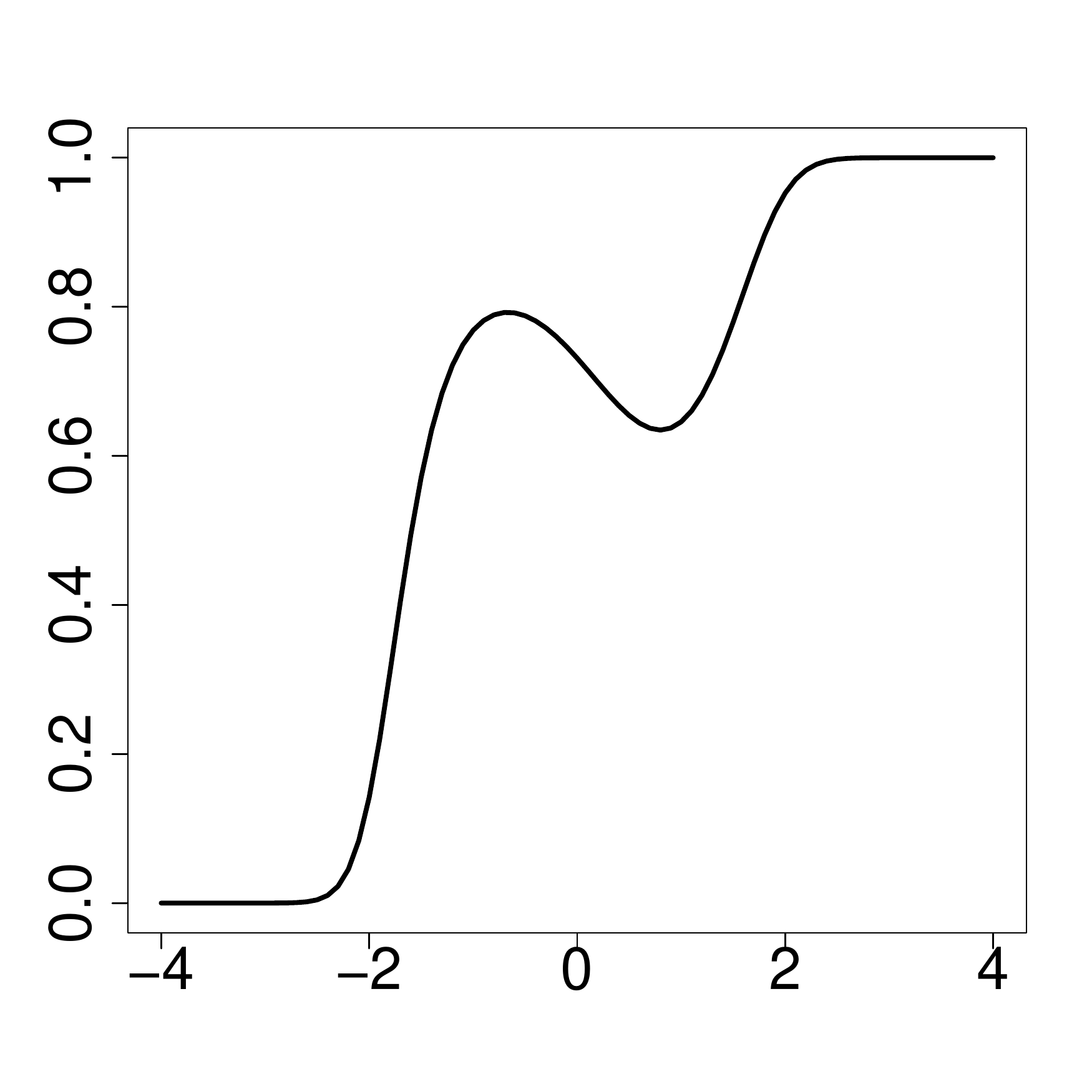}
		\caption{The non-monotone link when the intercept term $c=0$\label{fig:link_non_monotone}}
	\end{figure}
	
	\begin{table}
		\centering
		\caption[Additional simulation settings]{Additional simulation settings\label{tab:sim_settings_additional}}
		{
			\renewcommand{\arraystretch}{1}
			\small
			\setlength{\tabcolsep}{2pt}

\begin{tabular}{cc@{\extracolsep{3pt}}ccccc@{\extracolsep{3pt}}cc@{\extracolsep{3pt}}cccc}
	\hline
	Expt.                  & $c$ & ${\gamma}_{01}$     & ${\gamma}_{02}$    & ${\gamma}_{03}$     & ${\gamma}_{04}$    & ${\beta}_{01}$   & ${\beta}_{02}$  & $\lambda_{C}$ & \shortstack{Cure\\prop.}         & \shortstack{Cens.\\rate} & Plateau \\ \hline
	\multirow{2}{*}{D} & \multirow{2}{*}{-0.5} & \multirow{2}{*}{0.6718} & \multirow{2}{*}{0.2896} & \multirow{2}{*}{-0.1547}  & \multirow{2}{*}{0.6640} & \multirow{2}{*}{-0.4} & \multirow{2}{*}{-0.6} & 0.1           & \multirow{2}{*}{0.2679} & 0.3347         & 0.2079  \\
	&                      &                          &                         &                          &                         &                       &                      & 0.25           &                         & 0.4215         & 0.1463  \\ \hline
\end{tabular}

		}
	\end{table}

	\begin{table}
		\centering
		\caption[Additional simulation results]{Additional simulation results\label{tab:sim_results_additional}} 
		{
			\renewcommand{\arraystretch}{1}
			\small
			\setlength{\tabcolsep}{2pt}
			\begin{tabular}{cccl@{\extracolsep{6pt}}cc@{\extracolsep{6pt}}cc@{\extracolsep{6pt}}cc}
	\hline
	\multirow{2}{*}{Expt.} & \multirow{2}{*}{Size} & \multirow{2}{*}{$\lambda_{C}$} & \multirow{2}{*}{Method} & \multicolumn{2}{c}{$\text{MSE}\rbracket{\hat{p},p_{0}}$} & \multicolumn{2}{c}{$\hat{\mat{\gamma}}$} & \multicolumn{2}{c}{$\hat{\mat{\beta}}$} \\ \cline{5-6} \cline{7-8} \cline{9-10}
	&                              &                                &                         & Mean                       & Variance                    & Bias               & Variance            & Bias               & Variance           \\ \hline
	\multirow{8}{*}{D}          & \multirow{4}{*}{250}         & \multirow{2}{*}{0.1}           & mSIC                    & 0.01921                    & 1.24E-04                    & 0.6041             & 0.1425              & 0.2896            & 0.0301              \\
	&                              &                                & SIC                     & 0.01160                    & 9.05E-05                    & 0.5012             & 0.1283              & 0.2894            & 0.0300              \\ \cline{3-10} 
	&                              & \multirow{2}{*}{0.25}          & mSIC                    & 0.02312                    & 1.45E-04                    & 0.7073             & 0.1959              & 0.3241            & 0.0377              \\
	&                              &                                & SIC                     & 0.01457                    & 1.46E-04                    & 0.6338             & 0.1816              & 0.3190            & 0.0375              \\ \cline{2-10} 
	& \multirow{4}{*}{500}         & \multirow{2}{*}{0.1}           & mSIC                    & 0.01122                    & 5.06E-05                    & 0.4313             & 0.0813              & 0.2121            & 0.0147              \\
	&                              &                                & SIC                     & 0.00601                    & 2.69E-05                    & 0.3250             & 0.0551              & 0.2127            & 0.0150              \\ \cline{3-10} 
	&                              & \multirow{2}{*}{0.25}          & mSIC                    & 0.01324                    & 6.41E-05                    & 0.5126             & 0.1344              & 0.2349            & 0.0169              \\
	&                              &                                & SIC                     & 0.00818                    & 6.73E-05                    & 0.4236             & 0.0973              & 0.2348            & 0.0171              \\ \hline
\end{tabular}
		}
	\end{table}
	
}

{
	\subsection{Sensitivity to the choice of bandwidth}
	\label{sec:bandwidth}
	We re-consider Experiment A and investigate the influence of bandwidth on the mean squared error (MSE) of the estimated cure probability as mentioned in \sref{sec:sim}. In particular, we set the bandwidth parameter at the $k$-th iteration as $h_k = mr_{k}n^{-1/5}$, where $r_{k}$ is the range of the index $\mat{\gamma}^{T}\mat{X}$ computed at the $k$-th iteration and $m \in \lbrace0.25,0.5,0.75,1,1.25,1.5,2,2.5,3\rbrace$. The averages MSE over the 500 simulated datasets for each setting are computed and depicted in \fref{fig:mse_link_bw_sens}. Recall that in \sref{sec:sim} the bandwidth is set to $r_{k}n^{-1/5}$, i.e., $m=1$, and the mSIC method performs better than SIC. This figure shows that the MSE reaches minimum when $m=2$ indicating that there is still room for improvement on estimating the link for mSIC. However, $m=1$ is a satisfactory choice  that does not the increase computational cost of the method. Even without an optimal bandwidth mSIC outperforms the SIC method. Actually for almost all the considered values of $m$ the average MSE of mSIC remains below the one of SIC (given in Table \ref{tab:sim_results}). In general we observe that mSIC is more stable with respect to the choice of the bandwidth than the unconstrained estimator. Furthermore, the influence of the choice of bandwidth reduces when the sample size increases to 500.
	
	\begin{figure}[h]
		\centering
		\includegraphics[scale=0.45]{./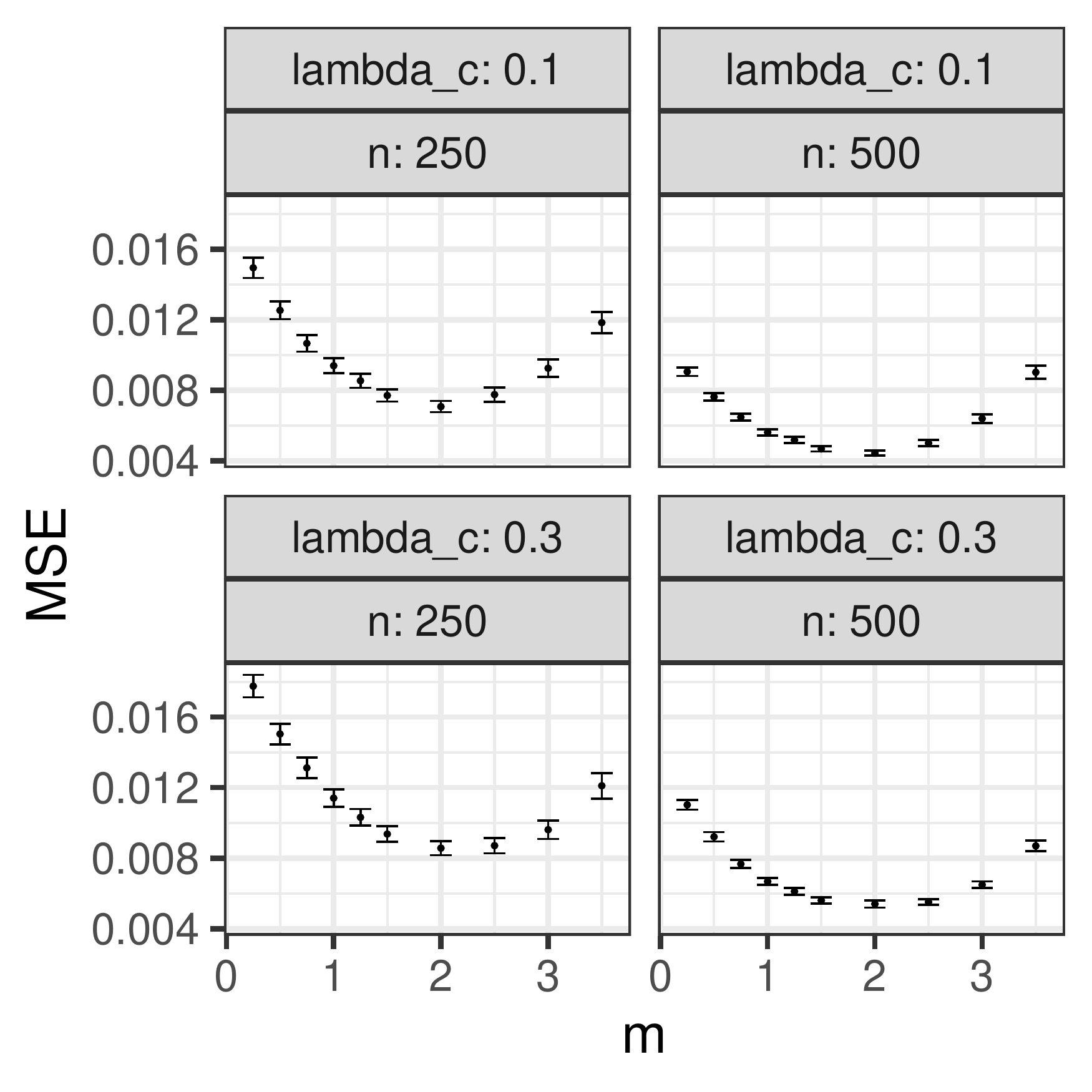}
		\caption{$\text{MSE}\rbracket{\hat{p},p_{0}}$ against different bandwidths $h=mrn^{-1/5}$, where $r$ is the range of the index $\mat{\gamma}^{T}\mat{X}$ and $m \in \lbrace0.25,0.5,0.75,1,1.25,1.5,2,2.5,3\rbrace$\label{fig:mse_link_bw_sens}}
	\end{figure}
}

{
	\subsection{Estimation without smoothing the monotone link estimator}
	\label{sec:smoothing}
	We re-consider Experiment A and study the estimation of $\mat{\gamma}$ and the link $p$ when the non-smoothed monotone link estimator is used, i.e. without applying step~2 in the estimation method described in Section \ref{sec:model_est}. Without smoothing the monotone link estimator, the M-step of the EM algorithm in \eqref{eq:gamma_mle} to estimate $\mat{\gamma}$ can be replaced by the score approach. The estimate of $\mat{\gamma}$ is obtained by computing the zero-crossings of 
	\begin{equation}
		\label{eq:gamma_score}
		\sum_{i=1}^{n} 					
		\left[\frac{w_{i}}{\varphi_{\mat\gamma}(\tr{\mat{\gamma}}\mat{x}_{i})} - \frac{1 - w_{i}}{1 - \varphi_{\mat\gamma}(\tr{\mat{\gamma}}\mat{x}_{i})}\right]\mat{x}_{i},
	\end{equation}
	where $\varphi_{\mat\gamma}$ denotes the monotone link estimate $\hat\varphi_{n,\mat\theta}$ for $\mat{\theta}=(\mat\gamma,\mat\beta^{(m-1)},\Lambda^{(m-1)})$, or minimizing the squared norm of \eqref{eq:gamma_score} over $\family{S}_{d-1}$. \tref{tab:sim_score} shows MSE of the estimated cure probability, bias and variance of the estimated $\mat{\gamma}$. In summary, the mSIC method with the smoothed monotone link estimates performs better in estimating $\mat{\gamma}$ and  the link $p$, among all simulation settings, when comparing with the score approach where the non-smoothed monotone link estimator is used.
	
	\begin{table}
	\centering
	\caption[Simulation results of the score approach]{Simulation results of the score approach\label{tab:sim_score}} 
	{
		\renewcommand{\arraystretch}{1}
		\small
		\setlength{\tabcolsep}{2pt}
		\begin{tabular}{ccl@{\extracolsep{6pt}}cc@{\extracolsep{6pt}}cc}
	\hline
	\multirow{2}{*}{Size} & \multirow{2}{*}{$\lambda_{C}$} & \multirow{2}{*}{Method} & \multicolumn{2}{c}{$\text{MSE}\rbracket{\hat{p},p_{0}}$} & \multicolumn{2}{c}{$\hat{\mat{\gamma}}$} \\ \cline{4-5} \cline{6-7} 
	&                                &                         & Mean                       & Variance                    & Bias               & Variance            \\ \hline
	\multirow{4}{*}{250}  & \multirow{2}{*}{0.1}           & mSIC                    & 0.00939                    & 4.39E-05                    & 0.57368            & 0.05038             \\
	&                                & mSIC (score)            & 0.01645                    & 7.95E-05                    & 0.59970            & 0.05342             \\ \cline{2-7} 
	& \multirow{2}{*}{0.3}           & mSIC                    & 0.01141                    & 5.92E-05                    & 0.61280            & 0.05145             \\
	&                                & mSIC (score)            & 0.01983                    & 1.21E-04                    & 0.64170            & 0.05902             \\ \hline
	\multirow{4}{*}{500}  & \multirow{2}{*}{0.1}           & mSIC                    & 0.00562                    & 1.50E-05                    & 0.43937            & 0.03810             \\
	&                                & mSIC (score)            & 0.01043                    & 3.25E-05                    & 0.49248            & 0.04135             \\ \cline{2-7} 
	& \multirow{2}{*}{0.3}           & mSIC                    & 0.00668                    & 2.01E-05                    & 0.48043            & 0.03989             \\
	&                                & mSIC (score)            & 0.01198                    & 4.08E-05                    & 0.49994            & 0.04055             \\ \hline
\end{tabular}
	}
	\end{table}
}

{	\subsection{Real data application revisited}
	\label{sec:app_application}
		In \sref{sec:application}, we compared the performance of the three models (LC, SIC, and mSIC) in predicting the uncure probability using the prediction error (PE) and observed that mSIC behaved the best. However, even the link estimate of the SIC method in \fref{fig:seer_links} is monotone, which makes it intriguing to understand the reason behind the better performance of mSIC.
	We looked at the link estimates over the 10 random splits for each method, which are shown in \fref{fig:seer_links_split}. We observe that the link estimates of the SIC method are in general not monotone and it was a coincidence that the particular random split considered in \sref{sec:application} gave a monotone estimator. This shows the need to impose the monotonicity assumption in the estimation procedure. Out of the 10 random splits, one in particular leads to a non-monotone link estimate that differs considerably from the others and also reaches value 1 for index in the range $(0.81, 0.87)$. For this case, we observe that the support of the fitted index $\tr{\hat{\mat{\gamma}}}_{n}\mat{x}^{\text{train}}_{i}$ for SIC, a histogram of which is shown in  Figure \ref{fig:seer_sic_index_hist_wo_bw_adj}, is divided into two disjoint intervals. The lack of observations in the region around $(0.81, 0.87)$ leads to the strange behavior of link estimate. Recall that the SIC method employs a leave-one-out cross-validation approach to search the optimal bandwidth over the interval $[0.4,1]$. For this particular case, the selected bandwidth was the upper bound~$1$. We increased the upper bound of the search interval from $1$ to $5$ and fitted again the SIC model for the same 10 random splits of the data. The average PE (standard deviation) for SIC with bandwidth adjustment is 90.54 (5.82), which improves slightly as compared to the SIC models without adjusting the bandwidth, with average PE (standard deviation) 91.84 (9.51). \fref{fig:seer_sic_index_hist_bw_adj} shows that the fitted index after adjusting the search of bandwidth is not divided into two disjoint intervals. This suggests that the cross-validation applied to choose the bandwidth of the SIC method is not stable in practice and the search interval should depend on the range of the index, hence should be data dependent. In contrast, we showed that the mSIC method performs well even with a simple  bandwidth choice.
\fref{fig:seer_sic_link_bw_adj} depicts the link estimates of the SIC method with the bandwidth adjustment.  Again, the link estimates are non-monotone in general but possess less variability as compared to the un-adjusted ones. The estimated links using SIC are more volatile in the region where the fitted index is sparse (see the $+$ sign in Figures \ref{fig:seer_links_split} and \ref{fig:seer_sic_link}) compared to  mSIC. Imposing the monotonicity constraint in mSIC increases the stability of the link estimate particularly in the region of index that possesses less available observations.
		\begin{figure}
		\centering
		\includegraphics[scale=0.55]{./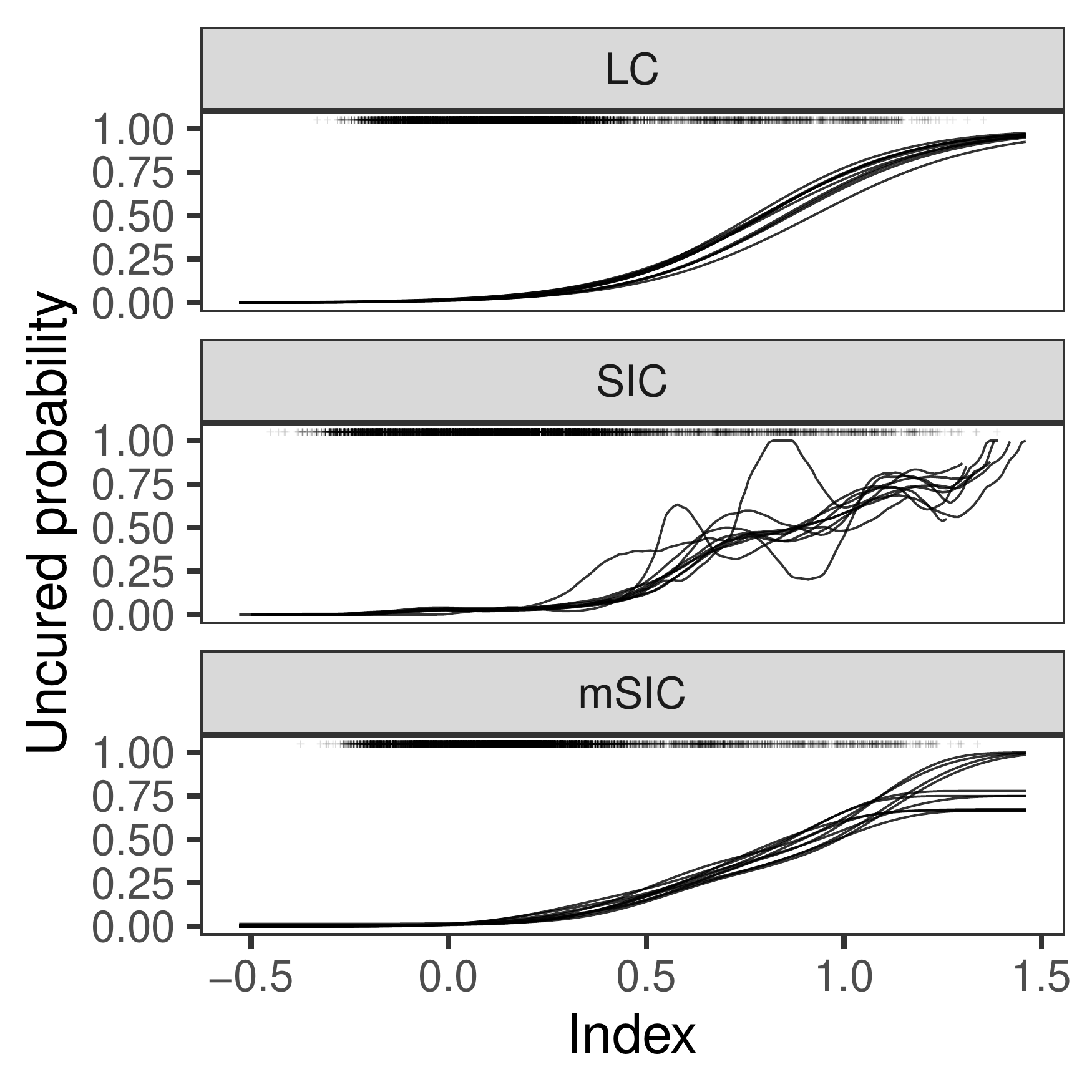}
		\caption{Link estimates over the 10 random splits ($+$ indicates the density of the fitted index )\label{fig:seer_links_split}}
	\end{figure}
	\begin{figure}
		\centering
		\begin{subfigure}[b]{0.49\textwidth}
			\centering
			\includegraphics[width=\columnwidth]{./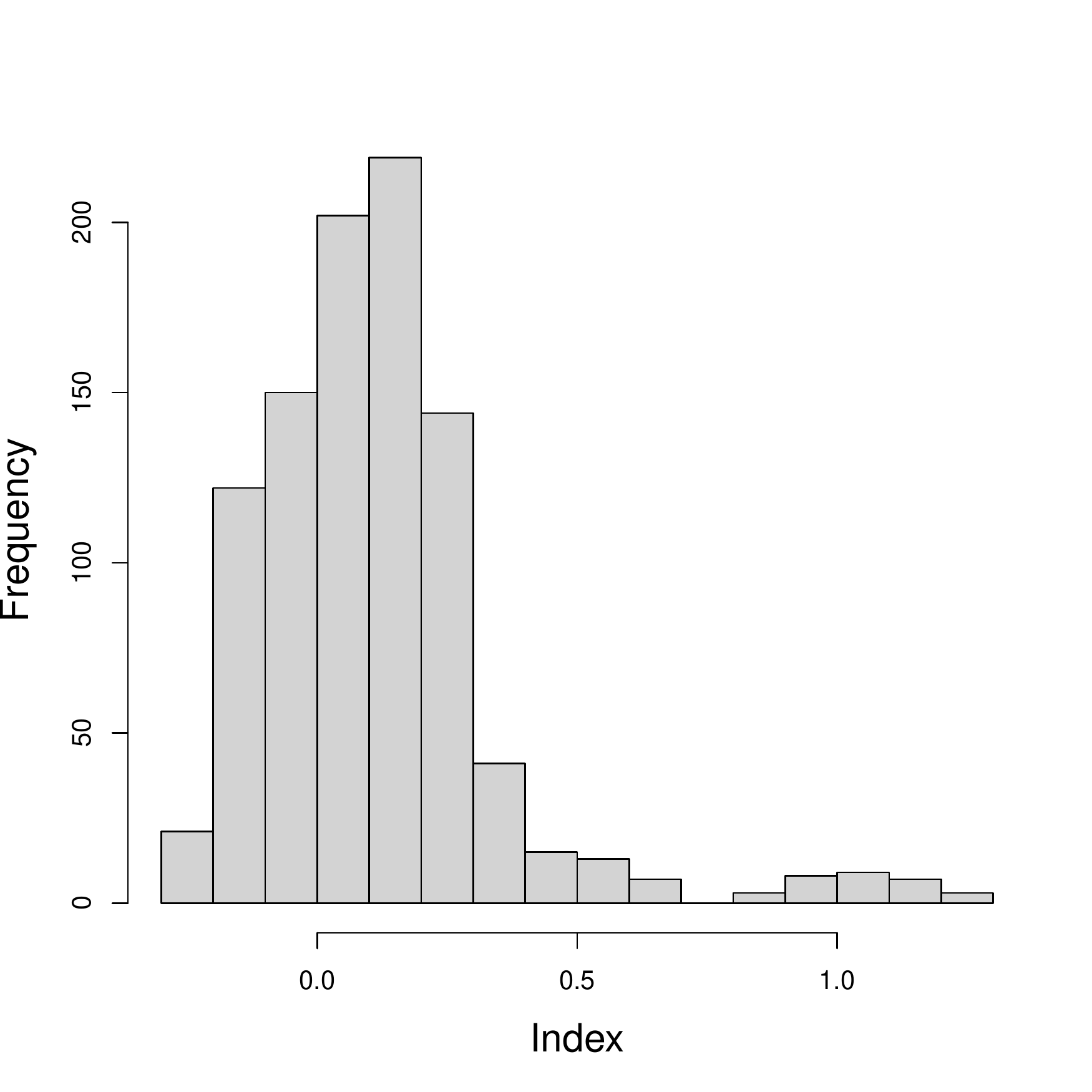}
			\caption{Without bandwidth adjustment\label{fig:seer_sic_index_hist_wo_bw_adj}}
		\end{subfigure}
		\hfill
		\begin{subfigure}[b]{0.49\textwidth}
			\centering
			\includegraphics[width=\columnwidth]{./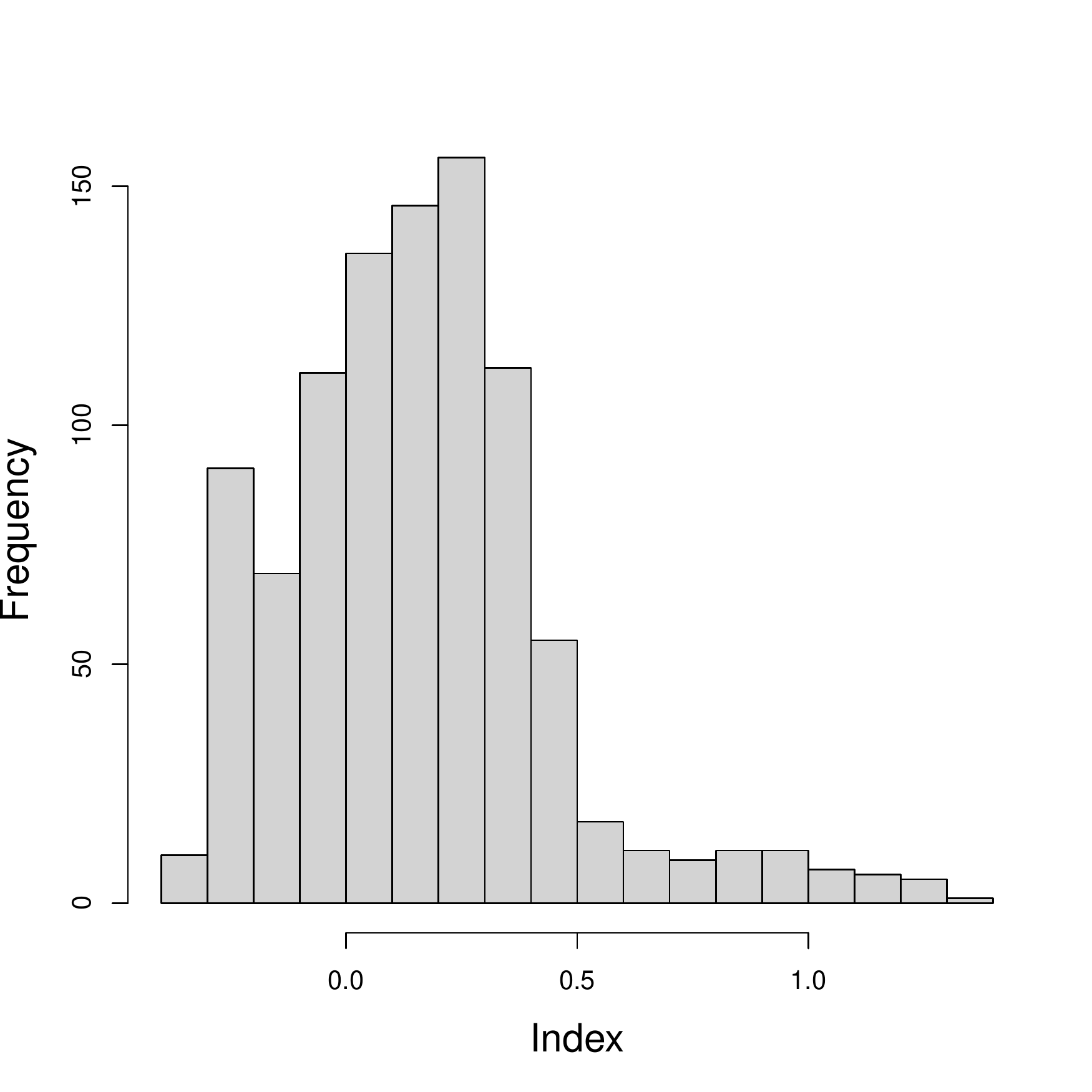}
			\caption{With bandwidth adjustment\label{fig:seer_sic_index_hist_bw_adj}}
		\end{subfigure}
		\caption{Histogram of the fitted index of the selected SIC model\label{fig:seer_sic_index_hist}}
	\end{figure}
	\begin{figure}
		\centering
		\begin{subfigure}[b]{0.49\textwidth}
			\centering
			\includegraphics[width=\columnwidth]{./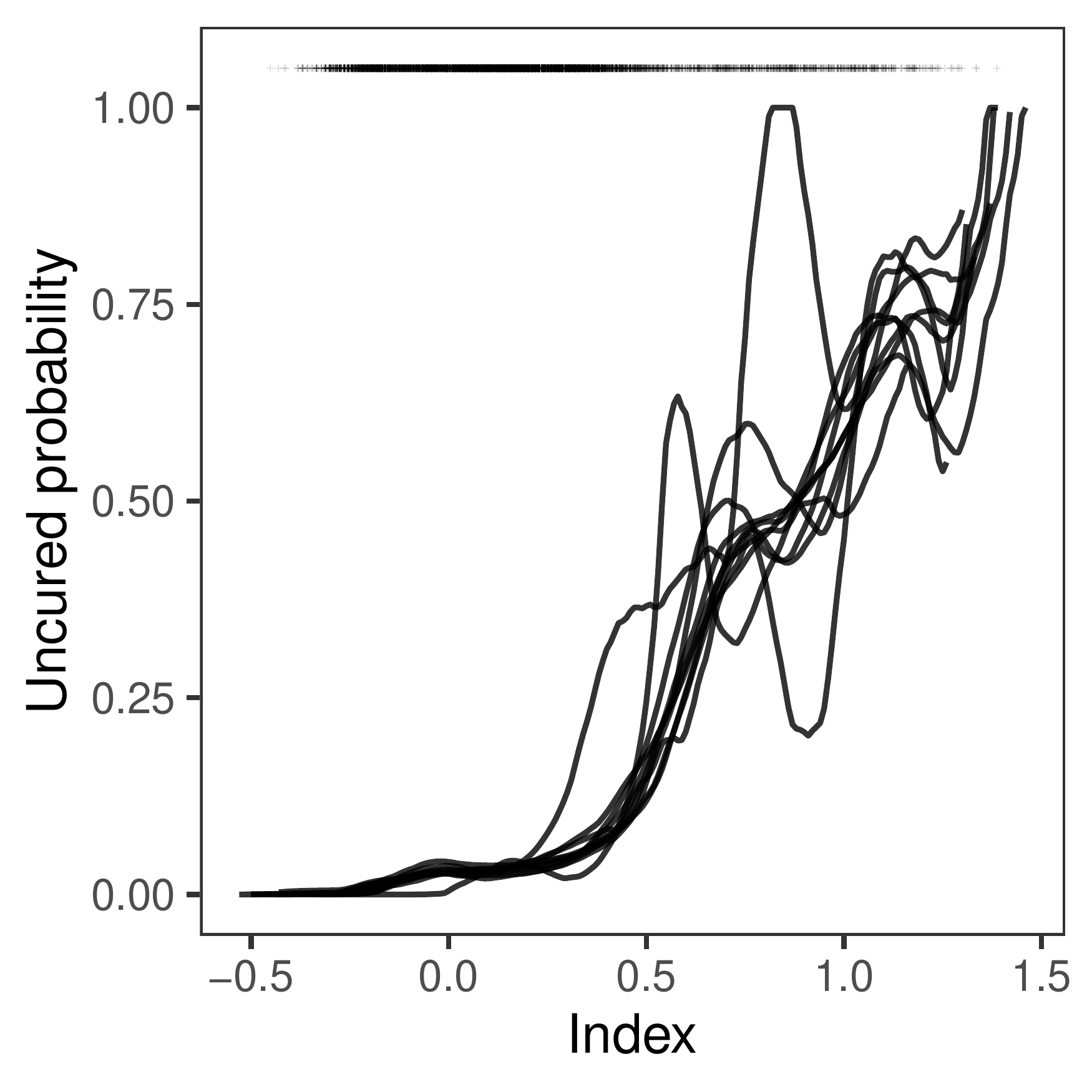}
			\caption{Without bandwidth adjustment\label{fig:seer_sic_link_wo_bw_adj}}
		\end{subfigure}
		\hfill
		\begin{subfigure}[b]{0.49\textwidth}
			\centering
			\includegraphics[width=\columnwidth]{./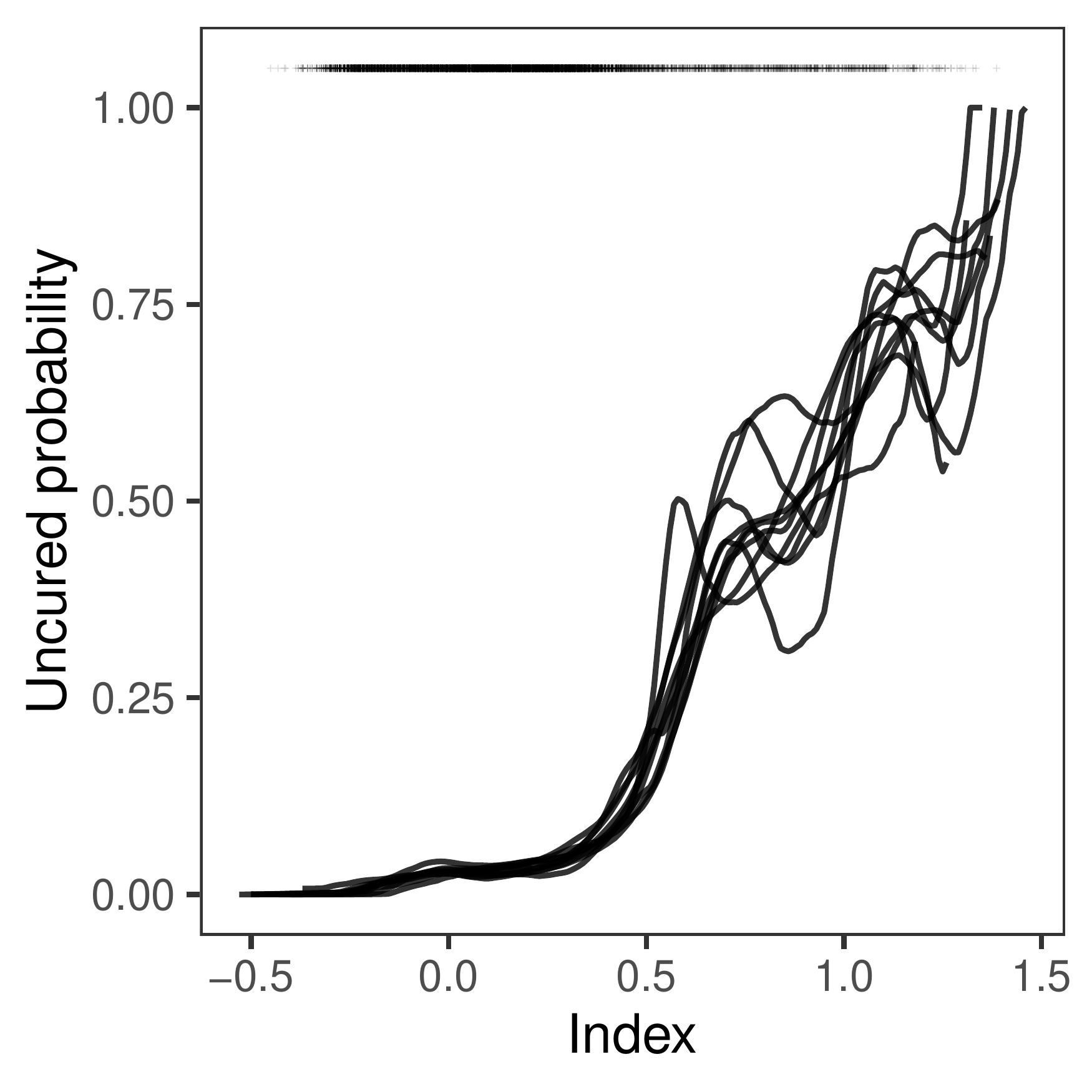}
			\caption{With bandwidth adjustment\label{fig:seer_sic_link_bw_adj}}
		\end{subfigure}
		\caption{SIC link estimates over the 10 random splits ($+$ indicates the density of the fitted index )\label{fig:seer_sic_link}}
	\end{figure}
}

\subsection{Algorithms}
\label{sec:algorithms}
\begin{algorithm}[H]
	\caption{Monotone Link Estimator $\hat{\varphi}_{n,\mat{\theta}}$ for fixed $\mat{\theta} = \rbracket{\mat{\gamma}, \mat{\beta}, \Lambda}$ \label{algo:link_estimator_algo}}
	\begin{algorithmic}[1]
		\Require
		\Statex
		\begin{description}[itemsep=0mm]
			\item[Observed data]
			$\cbracket{\rbracket{y_{i},\delta_{i},\mat{x}_{i},\mat{z}_{i}}, i=1,\cdots,n}$
			\item[Fixed parameter values]
			${\mat{\theta}} = \rbracket{{\mat{\gamma}}, {\mat{\beta}}, {\Lambda}}$
			\item[Initial link] $\hat\varphi^{(0)}$
			\item[Trunction parameter]
			$\epsilon^{\prime} > 0$
		\end{description}
		\State
		Let $\function{{S}}{u}{}{y_{i}|\mat{z}_{i}} = \exp\sbracket{-{\Lambda}^{}\rbracket{y_{i}}\exp\rbracket{\tr{{\mat{\beta}}}\mat{z}_{i}}}$ and ${F}_{u} = 1 - {S}_{u}$
		\State $\hat\varphi^{(0)}_{n,\mat{\theta}}\gets\hat\varphi^{(0)}$ and $k\gets 0$
		\Repeat
		\State $k\gets k+1$
		\State
		$\function{\hat{p}}{}{}{\mat{x}_{i}} \gets
		\function{\hat{\varphi}}{n,\mat{\theta}}{(k-1)}{\mat{\gamma}^{\prime}\mat{x}_{i}}$ for all $i = 1, \cdots, n$
		\Statex \quad
		\textbf{E-step}
		\State
		${w}_{i} \gets \delta_{i} + \rbracket{1 - \delta_{i}}
		\frac{
			\function{{\hat{p}}}{}{}{\mat{x}_{i}}\function{{S}}{u}{}{y_{i}|\mat{z}_{i}}
		}{
			1-\function{{\hat{p}}}{}{}{\mat{x}_{i}}\function{{F}}{u}{}{y_{i}|\mat{z}_{i}}
		}$ for all $i = 1, \cdots, n$
		\Statex \quad
		\textbf{M-step}
		\State\label{eq:link_MLE_Mstep}
		$\hat{\varphi}_{{n,\mat{\theta}}}^{(k)} \gets \argmax_{\varphi \in \family{M}_{\epsilon^{\prime}}} \sum_{i=1}^{n} \sbracket{{w}_{i} \log\link[\tr{{\mat{\gamma}}}\mat{x}_{i}]{}{} + \rbracket{1 - {w}_{i}}\log\cbracket{1 - \link[\tr{{\mat{\gamma}}}\mat{x}_{i}]{}{}}}$
		\Statex \Comment{Using uniform bound restricted MLE}
		\Until{Termination criterion is satisfied }
		\State\Return
		$\hat{\varphi}_{{n,\mat{\theta}}}^{(k)}$
	\end{algorithmic}
\end{algorithm}	
\begin{algorithm}[H]
	\caption{Model Estimation \label{algo:model_estimation_algo}}
	\begin{algorithmic}[1]
		\Require
		\Statex
		\begin{description}[itemsep=0mm]
			\item[Observed data]
			$\cbracket{\rbracket{y_{i},\delta_{i},\mat{x}_{i},\mat{z}_{i}}, i=1,\cdots,n}$
			\item[Trunction parameter]
			$\epsilon^{\prime} > 0$
			\item[Kernel function] $k$
			\item[Bandwidth] $h > 0$
		\end{description}
		\State 
		Initialize $\hat{\mat{\gamma}}^{(0)}$, $\hat{\mat{\beta}}^{(0)}$, 
		$\hat{\Lambda}^{(0)}$ and $\hat{\varphi}^{(0)}$; m $\gets$ 0
		\State
		$\hat{\varphi} \gets \argmax_{\varphi \in \family{M}_{\epsilon^{\prime}}}
		\function{L}{n}{}{\hat{\mat{\gamma}}^{(0)},\hat{\mat{\beta}}^{(0)},\hat{\Lambda}^{(0)},\varphi}$
		\Comment{Using Algorithm \ref{algo:link_estimator_algo} with $\hat{\mat{\theta}}^{(0)}$ and $\hat{\varphi}^{(0)}$}
		\State
		$\function{\hat{\varphi}}{}{(0)}{u} \gets \int\frac{1}{h}\function{k}{}{}{\frac{u - t}{h}}\function{\hat{\varphi}}{}{}{t}dt$ 
		\Comment{Smoothing as in \eqref{eq:smooth_isotonic_estimator}
		}
		\Repeat
		\State
		$m \gets m + 1$
		\Statex \quad	\textbf{E-step}
		\State
		$\function{\hat{p}}{}{}{\mat{x}_{i}}\gets\function{\hat{\varphi}}{}{(m-1)}{\hat{\mat{\gamma}}^{(m-1)^T}\mat{x}_{i}}$ for all $i = 1, \cdots, n$
		
		\State
		$\function{\hat{S}}{u}{}{y_{i}|\mat{z}_{i}}\gets \exp\sbracket{-\hat{\Lambda}^{(m-1)}\rbracket{y_{i}}\exp\rbracket{\hat{\mat{\beta}}^{(m-1)^T}\mat{z}_{i}}}$ and $\hat{F}_{u}^{} = 1 - \hat{S}_{u}^{}$
		\State
		$\hat{w}_{i} \gets \delta_{i} + \rbracket{1 - \delta_{i}}
		\frac{
			\function{\hat{p}}{}{}{\mat{x}_{i}}\function{\hat{S}}{u}{}{y_{i}|\mat{z}_{i}}
		}{
			1-\function{\hat{p}}{}{}{\mat{x}_{i}}\function{\hat{F}}{u}{}{y_{i}|\mat{z}_{i}}
		}$ for all $i = 1, \cdots, n$
		\Statex \quad
		\textbf{M-step}
		\State
		{$\hat{\mat{\gamma}}^{(m)} \gets \argmax_{\mat{\gamma} \in \family{S}_{d-1}}\tilde{L}_{nc,1}\rbracket{\mat{\gamma}, \hat\varphi^s_{n,\mat\gamma,\mat\beta^{(m-1)},\Lambda^{(m-1)}}}$}
		\Statex
		\Comment{Using the augmented Lagrangian method}
		\State
		$\rbracket{\hat{\mat{\beta}}^{(m)}, \hat{\Lambda}^{(m)}} \gets
		\argmax_{\mat{\beta}, \Lambda}	\tilde{L}_{nc,2}\rbracket{\mat{\beta}, \Lambda}$
		\Comment{Using the profile likelihood approach}
		\Statex \quad
		\textbf{Update the link}
		\State
		$\hat{\varphi} \gets \argmax_{\varphi \in \family{M}_{\epsilon^{\prime}}}
		\function{L}{n}{}{\hat{\mat{\gamma}}^{(m)},\hat{\mat{\beta}}^{(m)},\hat{\Lambda}^{(m)},\varphi}$
		\Statex
		\Comment{Using Algorithm \ref{algo:link_estimator_algo} with $\hat{\mat{\theta}}^{(m)}$ and $\hat{\varphi}^{(m-1)}$}
		\State
		$\function{\hat{\varphi}}{}{(m)}{u} \gets \int\frac{1}{h}\function{k}{}{}{\frac{u - t}{h}}\function{\hat{\varphi}}{}{}{t}dt$ 
		\Comment{Smoothing as in \eqref{eq:smooth_isotonic_estimator}
		}
		\Until{Termination criterion is satisfied:}
		\State\Return
		$\hat{\mat{\gamma}}^{(m)}$, $\hat{\mat{\beta}}^{(m)}$, $\hat{\Lambda}^{(m)}$, $\hat{\varphi}^{(m)}$
	\end{algorithmic}
\end{algorithm}
	\bibliographystyle{imsart-number} 
	\bibliography{references}
\end{document}